\def \CB#1#2 {\fill[color=black] (#1,#2) circle(2pt);}
\def \N#1#2#3 {\draw[color=black] (#1,#2) node {$#3$};}
\def \LB#1#2#3#4 {\draw[color=black,line width=0.3mm] (#1,#2) -- (#3,#4);}
\def \LR#1#2#3#4 {\draw[color=red,line width=0.3mm] (#1,#2) -- (#3,#4);}
\def \LRA#1#2#3#4 {\draw[line width=0.3mm,color=red,->] (#1,#2) -- (#3,#4);}
\def \LBA#1#2#3#4 {\draw[line width=0.3mm,color=blue,->] (#1,#2) -- (#3,#4);}
\def \LBLA#1#2#3#4 {\draw[line width=0.3mm,->] (#1,#2) -- (#3,#4);}
\def \NT#1#2#3 {\draw[color=black] (#1,#2) node {#3};}
\def \alert {\textcolor{red} }
\newcolumntype{L}[1]{>{\raggedright\let\newline\\\arraybackslash\hspace{0pt}}m{#1}}
\newcolumntype{C}[1]{>{\centering\let\newline\\\arraybackslash\hspace{0pt}}m{#1}}
\newcolumntype{R}[1]{>{\raggedleft\let\newline\\\arraybackslash\hspace{0pt}}m{#1}}
\def \b {\textcolor{blue} }
\def \u {\textcolor{black} }
\def\NC#1#2{$N^I(#1)\cup\{#1,#2\}$}
\def \K#1#2{$K_{#1,#2}$}
\def \NV#1{$N^I(#1)\cup \{#1\}$}
\def \nin {\notin}
\newtheorem{cl}{Claim}
\date{}
\title{Hamiltonian Path in Split Graphs- a Dichotomy}
\author{P.Renjith, N.Sadagopan} 
\institute{Indian Institute of Information Technology, Design and Manufacturing, Kancheepuram, India. \\
\email{\{coe14d002,sadagopan\}@iiitdm.ac.in}}
\begin{document}
\maketitle
 \begin{abstract}
In this paper we investigate Hamiltonian path problem in the context of split graphs, and produce a dichotomy result on the complexity of the problem.  Our main result is a deep investigation of the structure of $K_{1,4}$-free split graphs in the context of Hamiltonian path problem, and as a consequence, we obtain a polynomial-time algorithm to the Hamiltonian path problem in $K_{1,4}$-free split graphs.  We close this paper with the hardness result: we show that, unless P=NP, Hamiltonian path problem is NP-complete in $K_{1,5}$-free split graphs by reducing from Hamiltonian cycle problem in $K_{1,5}$-free split graphs.  Thus this paper establishes a ``thin complexity line" separating NP-complete instances and polynomial-time solvable instances. 
 \end{abstract}
\section{Introduction}
%
Hamiltonian path problem is a well studied problem of finding a spanning path in a connected graph.  Hamiltonian path problem has substantial scientific attention in modelling various real life problems, and finds applications in physical sciences, operational research \cite{app_hp1,app_hp2}, etc.  This problem has been studied in various perspectives.  In the initial stages of study, researchers explored the problem on structural perspective.  That is, necessary conditions and sufficient conditions for the existence of Hamiltonian paths in connected graphs.  Further, special graphs with bounded graph parameters such as degree, toughness, connectivity, independence number, etc., have been explored for obtaining Hamiltonian paths \cite{s1}.  Another interesting view on the Hamiltonian problems have been obtained on graphs with forbidden sub graph structures.  For example, Hamiltonian paths in claw-free graphs and its sub classes have been explored \cite{s3}.  Variants of Hamiltonian problems such as Hamiltonian path starting from a specific vertex, Hamiltonian path between a fixed pair of vertices, Hamiltonian connectedness, pancyclicity, etc., have also been explored in the literature.  A detailed survey has been compiled by Broersma and Gould \cite{s1,s2,s3}. \\\\
%
On algorithmic perspective, the problem is NP-complete in general graphs, and in particular, special graph classes such as chordal \cite{bertossi}, bipartite, chordal bipartite \cite{muller}, planar \cite{tarjanplanar}, grid graphs \cite{Gordon}, etc.  On the other hand, polynomial-time results for the problem have been obtained for interval \cite{keil,hungint}, circular arc \cite{hungcir,shih}, proper interval \cite{panda,ibarra}, distance hereditary\cite{hungdis}, cocomparability graphs \cite{HPcocomparability}, complete multipartite graphs \cite{gutin}, etc.  It is important to note that although polynomial-time results are known for special graph classes, we still have a ``thick complexity line" separating NP-complete instances and polynomial-time solvable instances.
For instance, Hamiltonian path problem in chordal graphs is NP-complete and a maximal graph class which is a subclass of chordal graph for which a polynomial-time algorithm is known is the class of interval graphs.   However, the complexity line separating chordal graphs (NP-complete instance) and interval graphs (polynomial-time instance) for Hamiltonian path problem is thick.   It is important to highlight that there are infinitely many non-interval chordal graphs, for example, chordal graphs with \emph{asteroidal triple} as a sub graph, on which the complexity of Hamiltonian path is open.  To make this line thin, one must do a micro level analysis of the NP-complete reduction of the Hamiltonian path problem in chordal graphs.  Further, this asks for a deeper study of the structure of chordal graphs.  \\\\
In this paper we revisit the Hamiltonian path problem in chordal graphs and present a tight hardness result.  We attempt a micro level structural study for Hamiltonian path problem in split graphs and establish that Hamiltonian path problem in \K15-free split graph is NP-complete, which is a popular sub class of chordal graphs.  Further, to make the borderline thin between NP-complete instances and polynomial-time instances, we do a deeper investigation of the structure of \K14-free split graphs, which is a major contribution of this paper.  To the best of our knowledge, this line of investigation has not been looked at in the literature.  The only known results in this context are the study of Hamiltonian cycle in \K15-free and \K14-free split graphs \cite{caldam2017}, and the study of Steiner tree in \K15-free and \K14-free split graphs \cite{dicho1}.
%
As a result of our deep structural study, we show that Hamiltonian path problem is polynomial-time solvable in \K14-free split graphs.  This brings an interesting dichotomy for Hamiltonian path problem in split graphs.   \\\\
%
%
The rest of the paper is organized as follows.  We next present the graph preliminaries.  In Section 2 we present the polynomial-time results of the dichotomy.  The hardness result is presented in Section 3.  The concluding remarks and future work are discussed in Section 4. \\\\
We use standard basic graph-theoretic notations.  Further, we follow \cite{west}.  All the graphs we mention are simple, and unweighted.  Graph $G$ has vertex set $V(G)$ and edge set $E(G)$ which we denote using $V,E$, respectively, once the context is unambiguous.  For \emph{independent set}, \emph{maximal clique}, and \emph{maximum clique} we use the standard definitions.  Split graphs are $C_4,C_5,2K_2$-free graphs and the vertex set of a split graph can be partitioned into a clique $K$ and an independent set $I$.  Such a split graph is denoted as $G(K\cup I,E)$.  For a split graph $G(K\cup I,E)$, we assume $K$ to be a maximum clique.  For $S\subset V(G)$, $N(S)=\{u:u\notin S, v\in S, uv\in E(G)\}$.  If $S=\{v\}$, $N(S)$ is also denoted as $N(v)$.  For a split graph $G(K\cup I,E)$ and $S\subset K$ we define $N^I(S)=N(S)\cap I$.  Accordingly, if $S=\{v\}$, $N^I(v)=N^I(S)$.  $d^I(v)=|N^I(v)|$ and $\Delta^I=\max\{d^I(v) : v\in K\}$.  For $S\subset V(G)$, $G-S$ represents the subgraph of $G$ induced on the vertex set $V(G)\setminus S$.  $c(G)$ represents the number of components in graph $G$.
For a cycle or a path $C=(v_1,\ldots,v_n)$, by $\overrightarrow{C}$, we mean the visit of vertices in order $(v_1,\ldots,v_n)$.  Similarly, by $\overleftarrow{C}$, we mean the visit of vertices in order $(v_n,\ldots,v_1)$.  $u\overrightarrow{C}u$ represents the ordered vertices from $u$ to $v$ in $C$.  For a path $P=(v_1,\ldots,v_{n\ge1})$ of length $n$, for simplicity, we use $P$ to denote the underlying set $V(P)$ and $v_1,v_n$ are end vertices of $P$.

\section{Hamiltonian path problem in split graphs : polynomial-time results}
We organize our results on Hamiltonian path as Hamiltonian path in \K13-free split graphs and Hamiltonian path in \K14-free split graphs.  We present our results on \K14-free split graphs in a systematic way.  That is, we shall present Hamiltonian path in \K14-free split graph with $\Delta^I=1$, $\Delta^I=2$ followed by $\Delta^I=3$.  We make use of the following results from the literature to present our results.
\begin{lemma} [\cite{caldam2017}]\label{lemvileq3}
For a \K13-free split graph $G$,  if $\Delta^I=2$, then $|I|\leq 3$. 
\end{lemma}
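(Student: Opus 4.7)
The plan is to argue by contradiction: assume $|I| \geq 4$, fix four distinct $v_1, v_2, v_3, v_4 \in I$, and use the $K_{1,3}$-free hypothesis together with the cap $\Delta^I = 2$ to force every vertex of $K$ to share a common $I$-neighbour, contradicting the maximality of $K$.

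First I would pick a witness $u \in K$ with $|N^I(u)| = 2$, say $N^I(u) = \{v_1, v_2\}$, guaranteed by $\Delta^I = 2$. The main structural observation, which I intend to use repeatedly, is the following \emph{claw-forcing} step: for any $w \in K \setminus \{u\}$, the edge $uw$ lies in $E$ (since $K$ is a clique) and $v_1 v_2 \notin E$ (since $I$ is independent), so if $w$ were adjacent to neither $v_1$ nor $v_2$, then $\{u, w, v_1, v_2\}$ would induce a $K_{1,3}$ centred at $u$. Hence every $w \in K \setminus \{u\}$ lies in $N(v_1) \cup N(v_2)$.

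Next I would apply this to obtain a second witness. Since $G$ is connected (the Hamiltonian-path setting) and $I$ is independent, $v_3$ has a neighbour $w \in K$, necessarily distinct from $u$ because $u \nsim v_3$. Claw-forcing puts $w$ in $N(v_1) \cup N(v_2)$; without loss of generality assume $w \sim v_1$, which together with $\Delta^I = 2$ pins down $N^I(w) = \{v_1, v_3\}$. Replaying the claw-forcing argument centred at $w$, every $x \in K \setminus \{w\}$ lies in $N(v_1) \cup N(v_3)$. A $K$-neighbour $x$ of $v_4$ (distinct from $u$ and $w$) must therefore satisfy both $x \in N(v_1) \cup N(v_2)$ and $x \in N(v_1) \cup N(v_3)$; if $x \nsim v_1$ then $x$ is adjacent to $v_2, v_3, v_4$, violating $\Delta^I = 2$, so $x \sim v_1$ and $N^I(x) = \{v_1, v_4\}$.

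A third round of claw-forcing, now at $x$, shows every $y \in K \setminus \{x\}$ lies in $N(v_1) \cup N(v_4)$. Combining all three constraints, any $y \in K \setminus \{u, w, x\}$ with $y \nsim v_1$ would be adjacent to $v_2$, $v_3$ and $v_4$ simultaneously, again exceeding $\Delta^I$. Hence every vertex of $K$ is adjacent to $v_1$, so $K \cup \{v_1\}$ is a clique strictly larger than $K$, contradicting maximality. The main obstacle, as I see it, is simply keeping the three rounds of claw-forcing well organised; the underlying mechanism in each round is identical, and the cap $\Delta^I = 2$ does the rest of the work.
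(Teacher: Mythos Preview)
The paper does not prove this lemma; it is cited from \cite{caldam2017} without proof, so there is no in-paper argument to compare against.

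Your proof is correct. The three rounds of claw-forcing are cleanly organised, and the final step---that $u$, $w$, $x$ are themselves adjacent to $v_1$ by construction, so \emph{every} vertex of $K$ is adjacent to $v_1$---is implicit but easily verified. One remark: you invoke connectedness to guarantee that $v_3$ and $v_4$ each have a neighbour in $K$. The lemma as literally stated does not assume connectedness, and in fact it fails without some such hypothesis (take $K=\{a,b,c\}$, $I=\{v_1,v_2,v_3,v_4\}$ with edges $av_1,av_2,bv_1,cv_2$ and $v_3,v_4$ isolated; this is $K_{1,3}$-free with $\Delta^I=2$ and $K$ maximum). So your parenthetical ``(the Hamiltonian-path setting)'' is doing real work: the lemma should be read under the paper's standing assumption that $G$ is connected, which is how it is actually used in the proof of Theorem~\ref{k13hamilpath}.
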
 
%
%
\begin{lemma} [\cite{caldam2017}]\label{k13hamil}
Let $G$ be a $K_{1,3}$-free split graph.  $G$ contains a Hamiltonian cycle if and only if $G$ is 2-connected.
\end{lemma}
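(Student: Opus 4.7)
The forward direction is standard: any graph with a Hamiltonian cycle is $2$-connected, since deleting a single vertex leaves a Hamiltonian path, which is connected. For the converse, my plan is to bound $\Delta^I$ using the claw-free hypothesis and then build the Hamiltonian cycle by inserting each $I$-vertex between two of its clique-neighbors. I would first observe that $\Delta^I\leq 2$: if some $u\in K$ had three neighbors in $I$, they would be pairwise non-adjacent (since $I$ is independent), producing a $K_{1,3}$ centered at $u$. I then split on the value of $\Delta^I\in\{0,1,2\}$.

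When $\Delta^I=0$, the graph $G$ equals the clique $K$; $2$-connectivity forces $|K|\geq 3$, and any cyclic ordering of $K$ is a Hamiltonian cycle. When $\Delta^I=1$, each $v\in I$ has at least two neighbors in $K$ (by $2$-connectivity), while each $u\in K$ has at most one neighbor in $I$. I would then, for every $v_i\in I$, pick a pair $(a_i,b_i)$ of distinct neighbors in $K$; these pairs are automatically pairwise disjoint, since no clique-vertex can serve two different $I$-vertices. Because $K$ is a maximum clique, no $v\in I$ satisfies $N(v)=K$, which forces $|K|\geq 2|I|$, so I can splice the pairs into a cycle $a_1 v_1 b_1\,a_2 v_2 b_2\,\cdots\,a_k v_k b_k\,w_1\cdots w_r\,a_1$ with $w_1,\dots,w_r$ enumerating the remaining clique-vertices; every consecutive pair of clique-vertices is adjacent, so this is indeed a Hamiltonian cycle.

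When $\Delta^I=2$, Lemma~\ref{lemvileq3} cuts the problem down to $|I|\leq 3$, leaving only finitely many configurations. The sub-cases $|I|\in\{1,2\}$ still yield to the insertion idea, with a bit more bookkeeping because a clique-vertex may now be shared between two $I$-neighborhoods. The delicate sub-case is $|I|=3$; here I would exploit the following claw-free consequence as the main structural lever: whenever $u\in K$ satisfies $d^I(u)=2$ with $N^I(u)=\{v_i,v_j\}$, every other clique-vertex must be adjacent to $v_i$ or $v_j$, for otherwise some $w\in K\setminus\{u\}$ is non-adjacent to both $v_i$ and $v_j$ and $\{u,w,v_i,v_j\}$ induces a claw. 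Combined with $2$-connectivity and the maximality of $K$, this constrains the $K$-$I$ bipartite pattern to a short list of templates, each of which I would verify by hand to admit a Hamiltonian cycle. The main obstacle will be precisely this last sub-case: the clean disjoint-pair argument used when $\Delta^I=1$ collapses once clique-vertices are shared among two $I$-neighbors, and one must instead argue structurally using claw-freeness and maximality of $K$ to rule out every failure configuration.
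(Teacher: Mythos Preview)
This lemma is quoted from \cite{caldam2017} and the present paper supplies no proof of its own, so there is nothing to compare your attempt against here. Evaluated on its own merits, your plan is correct. The bound $\Delta^I\le 2$ and the constructions for $\Delta^I\in\{0,1\}$ go through exactly as you describe (the disjoint-pairs insertion is clean once you note that the pairs alone already give $|K|\ge 2|I|$, independently of the remark about $N(v)\ne K$).

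The only place that needs more than you sketched is $\Delta^I=2$ with $|I|=3$, and it does close. Writing $K_{ij}=\{w\in K:N^I(w)=\{v_i,v_j\}\}$, start from some $u\in K_{12}$; any neighbour of $v_3$ lies in $K_{13}\cup K_{23}$ by the claw at $u$, say $a\in K_{13}$. The claw constraints at $u$ and $a$ together force every $w\in K$ either to be adjacent to $v_1$ or to lie in $K_{23}$, so maximality of $K$ gives $K_{23}\ne\emptyset$. With representatives in all three $K_{ij}$ the three claw constraints now force every $w\in K$ to have exactly two $I$-neighbours, and
\[
v_1\,[K_{12}]\,v_2\,[K_{23}]\,v_3\,[K_{13}]\,v_1
\]
(any ordering inside each block) is the desired Hamiltonian cycle. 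The sub-cases $|I|\le 2$ are, as you say, easier: for $|I|=2$ one uses maximality of $K$ to pick $w_1\in K$ non-adjacent to $v_1$ and $w_2\in K$ non-adjacent to $v_2$, and the cycle $v_1\,u\,v_2\,w_1\,[K\setminus\{u,w_1,w_2\}]\,w_2\,v_1$ works.
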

\begin{theorem}[\cite{caldam2017}]\label{deltaleq2}
Let $G$ be a $2$-connected, $K_{1,4}$-free split graph with $\Delta^I=2$.  Then $G$ has a Hamiltonian cycle if and only if there are no short cycles in $G$.  
\end{theorem}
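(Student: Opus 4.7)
The plan is to prove both directions by exploiting the tight structural constraints that $\Delta^I = 2$ imposes on $G(K \cup I, E)$; note that under these hypotheses the $K_{1,4}$-freeness is essentially automatic, since the only way to embed $K_{1,4}$ in a split graph with centre in $K$ requires $d^I(v) \ge 4$. For the forward direction I would argue by contrapositive: given a short cycle---presumably a cycle on a proper subset of $V(G)$ whose structure forces too many $I$-vertices to demand the same consecutive pair of $K$-vertices---I would show that any spanning cycle of $G$ would have to reuse one of these scarce slots, which is impossible. Since $\Delta^I = 2$ restricts each $v \in K$ to contribute to at most two $I$-insertions, the counting argument behind the incompatibility should be direct.

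For the backward direction I would first record the global bound $|I| \le |K|$, which follows from $2|I| \le \sum_{v \in K} d^I(v) \le 2|K|$ (using 2-connectedness to get $d(i) \ge 2$ for $i \in I$). The construction would start from a Hamiltonian cycle of $G[K]$ (trivial, since $K$ is a clique), and attempt to insert each $i \in I$ between two of its $K$-neighbours that appear consecutively in the cycle. The feasibility of all insertions is equivalent to finding a cyclic ordering $k_1, \ldots, k_{|K|}$ of $K$ together with an injection $\phi \colon I \to \{1, \ldots, |K|\}$ satisfying $\{k_{\phi(i)}, k_{\phi(i)+1}\} \subseteq N(i)$ for every $i \in I$. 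I would model this as a matching/Eulerian-type problem on an auxiliary multigraph $H$ on vertex set $K$ whose edges are contributed by each $i \in I$ from a chosen pair in $N(i)$, and argue that the absence of short cycles in $G$ is exactly what allows this auxiliary problem to be solved.

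The case analysis would split on the profile of $I$-degrees. The easy case is when every $i \in I$ satisfies $d(i) = 2$, reducing to a forced multigraph-insertion problem where short cycles in $G$ correspond to cut-like structures in $H$. The harder case involves $i \in I$ with $d(i) \ge 3$, where each such vertex offers multiple candidate insertion edges; here I would apply a greedy/exchange argument together with the cyclic symmetry of a clique to repair an almost-feasible assignment. The main obstacle will be certifying completeness of the case analysis, namely showing that any failure of the construction yields a genuine short cycle in $G$ rather than merely a local obstruction. I expect this to require an inductive argument on $|I|$ combined with the characterisation of $K_{1,3}$-free subcases supplied by Lemma \ref{k13hamil}, which can be applied to residual subgraphs after peeling off pairs of twin $I$-vertices.
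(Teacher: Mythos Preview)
This theorem is not proved in the present paper; it is quoted from \cite{caldam2017} as a preliminary result, so there is no in-paper proof to compare your attempt against. I can only assess your sketch on its own terms and point you to the closely related construction the paper \emph{does} carry out, namely the proof of Theorem~\ref{delta2hamilpath} (the Hamiltonian-\emph{path} analogue for $\Delta^I=2$).

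Your forward direction is essentially correct, and the mechanism is cleaner than a counting argument. A short cycle $C$ lives in the restricted bipartite subgraph $H$, so every $I$-vertex on $C$ has degree exactly $2$ in $G$ with both neighbours on $C$; consequently any Hamiltonian cycle of $G$ is forced to contain every edge of $C$, which is impossible once $V(C)\subsetneq V(G)$. (A side remark: your claim that an induced $K_{1,4}$ centred at $v\in K$ forces $d^I(v)\ge 4$ is off by one---three $I$-leaves together with one $K$-leaf non-adjacent to them already suffice---but your conclusion that $\Delta^I=2$ renders the $K_{1,4}$-free hypothesis redundant is correct.)

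Your backward direction has a real gap. The auxiliary-multigraph and matching/Eulerian framing is not unreasonable, but you never cash it in: you do not identify which concrete structural fact about $H$ the no-short-cycle hypothesis buys, and the ``greedy/exchange'' handling of $I$-vertices of degree $\ge 3$, the proposed induction on $|I|$, and the appeal to Lemma~\ref{k13hamil} all remain promissory. The observation you are missing is that $H$ has maximum degree $2$ and is therefore a disjoint union of paths and even cycles; since $G$ is $2$-connected, every $I$-vertex of $H$ has degree exactly $2$ in $H$, so all path endpoints lie in $K$. Forbidding short cycles then means $H$ is either a single spanning cycle (and we are done) or a collection of $K$-$K$ paths. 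From here the construction is exactly the two-stage procedure in the proof of Theorem~\ref{delta2hamilpath}: start from the maximal paths of $H$, add each leftover $K$-vertex as a length-one path, splice each $I$-vertex of degree $\ge 3$ between the end-vertices of two distinct paths it sees, and finally concatenate the resulting $K$-$K$ paths along clique edges into a single cycle. No induction on $|I|$ and no reduction to the $K_{1,3}$-free case are needed.
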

\begin{lemma}[Chvatal\cite{west}]\label{chvatal_path}
Let $G$ be a connected graph.  If $G$ has a Hamiltonian path, then for every $S\subset V(G)$, $c(G-S)\le|S|+1$.
\end{lemma}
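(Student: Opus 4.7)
The plan is to exploit the fact that a Hamiltonian path is itself a spanning connected subgraph of $G$, so deleting $S$ from the path gives a lower bound on the connectedness of $G-S$. More precisely, if $P=v_1v_2\ldots v_n$ is a Hamiltonian path of $G$, then $E(P)\subseteq E(G)$, and therefore for any $S\subset V(G)$, every edge of $P-S$ remains an edge of $G-S$. Consequently, the number of connected components of $G-S$ is bounded above by the number of connected components of $P-S$, and it suffices to prove the claim for the path $P$ in place of $G$.

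Next I would count the components of $P-S$ directly by a positional argument. Let $i_1<i_2<\ldots<i_k$ be the indices along $P$ at which vertices of $S$ appear, where $k=|S\cap V(P)|\le|S|$. These indices partition the remaining index set $\{1,\ldots,n\}\setminus\{i_1,\ldots,i_k\}$ into at most $k+1$ maximal runs of consecutive integers, namely the index intervals lying between (or before/after) the removed positions. Each such run induces a subpath of $P$ that is connected in $P-S$, and these subpaths exhaust $V(P)\setminus S=V(G)\setminus S$. Hence $c(P-S)\le k+1\le|S|+1$.

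Combining the two observations gives $c(G-S)\le c(P-S)\le|S|+1$, which is the desired inequality. The only subtlety worth being careful about is the edge case where some runs are empty (for instance if two removed vertices are consecutive on $P$, or if the first/last vertex of $P$ lies in $S$); these degenerate runs simply contribute no component, so the bound $|S|+1$ still holds with room to spare. Since each step reduces to an elementary counting observation, I do not anticipate a genuine obstacle: the only thing that must be done carefully is to phrase the passage from $c(P-S)$ to $c(G-S)$ so that it is manifest that adding back the edges of $G\setminus P$ can only merge components, never split them.
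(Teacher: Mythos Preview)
Your argument is correct and is essentially the standard proof of this classical necessary condition: pass to the spanning subpath $P$, observe that deleting $|S|$ vertices from a path leaves at most $|S|+1$ subpaths, and note that re-inserting the remaining edges of $G$ can only merge components, never create new ones. The handling of degenerate runs is fine.

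As for comparison with the paper: there is nothing to compare. The paper does not prove this lemma at all; it simply quotes it as a known result attributed to Chv\'atal with a citation to West's textbook, and then invokes it as a black box (for instance in the necessity direction of Theorem~\ref{delta2hamilpath}). So your write-up supplies strictly more than what the paper contains for this statement.
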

\begin{lemma}[\cite{caldam2017}]\label{claimA}
For a connected split graph $G$ with $\Delta^I=3$, let $v\in K, d^I(v)=3$, and $U=N^I(v)$.  If $G$ is $K_{1,4}$-free, then $N(U)=K$.
\end{lemma}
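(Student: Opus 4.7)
\smallskip
\noindent\textbf{Proof plan.} The plan is to argue by contradiction: assume that some vertex of $K$ lies outside $N(U)$ and then exhibit an induced $K_{1,4}$ centred at $v$, which contradicts the hypothesis that $G$ is $K_{1,4}$-free. Concretely, write $U=N^I(v)=\{u_1,u_2,u_3\}$ and suppose, toward a contradiction, that there is a vertex $w\in K\setminus N(U)$, i.e.\ $w$ is adjacent to none of $u_1,u_2,u_3$.

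The key observation is that the four neighbours $u_1,u_2,u_3,w$ of $v$ form an independent set. First, $u_1,u_2,u_3\in I$, so they are pairwise non-adjacent because $I$ is an independent set of the split partition. Second, by the choice of $w$, none of the edges $wu_1,wu_2,wu_3$ is present. Since $w\in K$ and $v\in K$, the edge $vw$ does belong to $G$, so $v$ is indeed adjacent to all four of $u_1,u_2,u_3,w$. Therefore $G[\{v,u_1,u_2,u_3,w\}]$ is an induced $K_{1,4}$ with centre $v$, contradicting the $K_{1,4}$-freeness of $G$. It follows that no such $w$ can exist, and hence every vertex of $K$ has a neighbour in $U$; combined with $U\subseteq I$, this gives $N(U)\supseteq K$, and since $U\subseteq I$ is disjoint from $K$ we conclude $N(U)=K$.

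There is essentially no obstacle here beyond making sure the induced subgraph really is $K_{1,4}$: the only edges one must rule out are those among $\{u_1,u_2,u_3,w\}$, and both the independence of $\{u_1,u_2,u_3\}$ (from the split partition) and the non-adjacencies $wu_i$ (from the assumption $w\notin N(U)$) are immediate. Thus the proof reduces to a one-line forbidden-subgraph argument once the contrapositive is set up correctly.
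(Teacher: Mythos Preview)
Your argument is correct and is exactly the intended one: if some $w\in K$ were non-adjacent to all of $U=\{u_1,u_2,u_3\}$, then $\{v,u_1,u_2,u_3,w\}$ induces a $K_{1,4}$ with centre $v$. The paper does not supply its own proof here (the lemma is quoted from \cite{caldam2017}), but this is precisely the standard forbidden-subgraph argument; the only minor slip is in your last clause, where the containment $N(U)\subseteq K$ follows from $I$ being independent (so no $I$-vertex can neighbour a vertex of $U\subseteq I$) rather than merely from $U\cap K=\emptyset$.
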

\begin{corollary}[of Lemma \ref{claimA}]\label{cor1}
Let $G$ be a connected $K_{1,4}$-free split graph with $v\in K, d^I(v)=3$.  For every vertex $w\in K\setminus \{v\}$, $N^I(v)\cap N^I(w)\ne\emptyset$. 
\end{corollary}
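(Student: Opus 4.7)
The plan is to observe that this corollary is essentially a restatement of Lemma \ref{claimA} in terms of common neighbors. Since $d^I(v)=3$, the set $U=N^I(v)$ has exactly three vertices, all lying in $I$. Applying Lemma \ref{claimA} directly with this $v$ and $U$, we get $N(U)=K$, meaning every vertex of the clique $K$ has at least one neighbor inside $U$.

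Given this, I would fix an arbitrary $w\in K\setminus\{v\}$ and argue as follows: since $w\in K=N(U)$, there is some $u\in U$ adjacent to $w$. Because $U\subseteq I$, this $u$ belongs to $N^I(w)$. But $u\in U=N^I(v)$ as well, so $u\in N^I(v)\cap N^I(w)$, and in particular the intersection is nonempty. The choice of $w$ was arbitrary in $K\setminus\{v\}$, which completes the proof.

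There is really no hard step here; the main thing to verify is that the setup of Lemma \ref{claimA} is satisfied ($G$ is connected, $K_{1,4}$-free, has a vertex $v\in K$ with $d^I(v)=3$, and $U=N^I(v)$), after which the conclusion is a one-line deduction. One minor point worth stating cleanly is that $N(U)\subseteq V(G)\setminus U$ by the definition of neighborhood, but since $U\subseteq I$ and the conclusion concerns vertices of $K$, this convention causes no issue when identifying $N(U)$ with $K$.
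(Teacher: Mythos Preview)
Your proof is correct and matches the paper's implicit approach: the paper states this as an immediate corollary of Lemma~\ref{claimA} without giving a separate proof, and your one-line deduction from $N(U)=K$ is exactly the intended reading. The only small point you glossed over is that Lemma~\ref{claimA} is phrased for graphs with $\Delta^I=3$, whereas the corollary only assumes $d^I(v)=3$; but in a $K_{1,4}$-free split graph one automatically has $\Delta^I\le 3$, so the hypothesis $d^I(v)=3$ forces $\Delta^I=3$ and the lemma applies.
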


\subsection{Results on \K13-free split graphs}
\begin{theorem} \label{k13hamilpath}
Let $G$ be a connected $K_{1,3}$-free split graph.  $G$ contains a Hamiltonian path if and only if $G$ has at most $2$ vertices $u,v\in I$ such that $d(u)=1$, and $d(v)=1$.
\end{theorem}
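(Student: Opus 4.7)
The forward direction is immediate: a Hamiltonian path has exactly two endpoints and every vertex of degree $1$ must be an endpoint, so $G$ has at most two vertices of $I$ of degree $1$. For the backward direction let $p\in\{0,1,2\}$ denote the number of degree-$1$ vertices in $I$; I will use throughout that $K_{1,3}$-freeness forces $\Delta^I\le 2$.

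The cases $p=0$ and $p=1$ reduce to Lemma~\ref{k13hamil}. For $p=0$ I would first show that $G$ is $2$-connected: a cut vertex $v\in K$ would produce a component of $G-v$ consisting of a single $I$-vertex with $v$ as its only neighbor (because $K\setminus\{v\}$ remains a clique and every $I$-vertex has all its neighbors in $K$), making that $I$-vertex a pendant of $G$ and contradicting $p=0$; and no $I$-vertex can separate $G$ since $I$ is independent. Lemma~\ref{k13hamil} then yields a Hamiltonian cycle, which contains a Hamiltonian path. For $p=1$, with pendant $u\in I$ and its neighbor $v\in K$, the graph $G-u$ is connected, $K_{1,3}$-free, split, and has no pendant in $I$; by the $p=0$ case it has a Hamiltonian cycle, which broken at any edge incident to $v$ and prepended with $u$ gives a Hamiltonian path of $G$.

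The main work is $p=2$. Call the pendants $u_1,u_2$ with respective neighbors $v_1,v_2\in K$. If $v_1=v_2=v$ then $d^I(v)=2$ by $\Delta^I\le 2$, and any further neighbor $w$ of $v$ would create a $K_{1,3}$ on $\{v,u_1,u_2,w\}$, so $N(v)=\{u_1,u_2\}$ and $G$ is the path $u_1vu_2$. If $v_1\ne v_2$, the strategy is to build a Hamiltonian path of $G'=G-\{u_1,u_2\}$ from $v_1$ to $v_2$ and then attach $u_1,u_2$ at the two ends. The key structural step exploits $K_{1,3}$-freeness at $v_1$: if $d^I(v_1)=2$ with second $I$-neighbor $u'$, then for every $w\in K\setminus\{v_1\}$ the triple $\{u_1,w,u'\}$ cannot be independent (else a $K_{1,3}$ centered at $v_1$), forcing $u'\sim w$; hence $u'$ is adjacent to every vertex of $K$. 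The symmetric argument at $v_2$ restricts $(d^I(v_1),d^I(v_2))$ to $(1,1)$ or $(2,2)$ (and, in the $(2,2)$ case, forces the same $u'$ for both sides). In the $(2,2)$ subcase Lemma~\ref{lemvileq3} gives $|I|\le 3$, hence $I=\{u_1,u_2,u'\}$, and $u_1,v_1,w_1,\ldots,w_{|K|-2},u',v_2,u_2$ (with $w_1,\ldots,w_{|K|-2}$ enumerating $K\setminus\{v_1,v_2\}$) is the required Hamiltonian path. In the $(1,1)$ subcase, a parallel analysis at each $w\in K\setminus\{v_1,v_2\}$ forces $\Delta^I=1$: if $w$ had two $I$-neighbors $x,y$ then $\{v_1,x,y\}$ would be independent (since $u_1\notin N^I(w)$, so $x,y\ne u_1$, hence $v_1\not\sim x,y$), giving a $K_{1,3}$ at $w$. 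Each remaining $I$-vertex $x$ therefore has a private neighborhood $K_x\subseteq K\setminus\{v_1,v_2\}$ with $|K_x|\ge 2$ and the $K_x$'s pairwise disjoint, so chaining the blocks $[K_x]=a_1^{(x)},x,a_2^{(x)},a_3^{(x)},\ldots$ together with the $I$-free clique vertices $K_0$ between $v_1$ and $v_2$ yields the required Hamiltonian path of $G'$.

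The principal obstacle is precisely the subcase $p=2$, $v_1\ne v_2$: a generic $2$-connected $K_{1,3}$-free split graph is not Hamiltonian-connected between $K$-pairs (for instance, $K=\{v_1,v_2\}$ with $I=\{u_1,u_2\}$ complete bipartite between $K$ and $I$ is $2$-connected and $K_{1,3}$-free yet has no Hamiltonian path from $v_1$ to $v_2$). Thus $2$-connectedness alone does not suffice; the pendants $u_1,u_2$ must be used inside the $K_{1,3}$-freeness constraints at $v_1,v_2$, and then at each other clique vertex, to force the structure tightly enough for the explicit construction above.
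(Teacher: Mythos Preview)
Your proof is correct and essentially the same as the paper's: both split on the number $p$ of pendants in $I$, reduce $p\le 1$ to Lemma~\ref{k13hamil}, and for $p=2$ use $K_{1,3}$-freeness to force $\Delta^I=1$ before writing down the Hamiltonian path explicitly. Your $(2,2)$ subcase is in fact vacuous under the paper's standing convention that $K$ is a maximum clique (an $I$-vertex $u'$ adjacent to all of $K$ would enlarge the clique), which is exactly how the paper bypasses that subcase.
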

\begin{proof}
If there exists at least three vertices $\{u,v,w\}\subseteq I$ such that $d(u)=d(v)=d(w)=1$, then clearly $G$ has no Hamiltonian path.  
For the sufficiency, we see the following cases.  \\
\emph{Case 1:}  For every $u\in I$, if $d(u)\ge2$, then $G$ is 2-connected, and by Lemma \ref{k13hamil}, $G$ has a Hamiltonian cycle.  Thus $G$ has a Hamiltonian path.\\
\emph{Case 2:} If there exists only one vertex $u\in I$ with $d(u)=1$, then observe that $G-u$ is 2-connected.  By Lemma \ref{k13hamil}, there is a Hamiltonian cycle in $G-u$, which can be easily extended to a Hamiltonian path in $G$.  \\
\emph{Case 3:} There exists two vertices $u,v\in I$ with $d(u)=d(v)=1$.  If $I=\{u,v\}$, it is easy to see that there is a  $(u,v)$-Hamiltonian path in $G$.  If $I=\{u,v,w\}$, then we claim that $N(w)\cap N(u)=\emptyset$ and $N(w)\cap N(v)=\emptyset$.  Suppose $N(w)\cap N(u)\neq\emptyset$, then let $N^I(u')=\{u,w\}$, $u'\in K$.  Clearly, all the vertices $x\in K\setminus \{u'\}$ are adjacent to $w$, otherwise $\{u',u,w,x\}$ induces a \K13.  It follows that $K\cup \{w\}$ is a clique of larger size, contradicting the maximality of $K$.  Similar arguments hold with respect to the vertex $v$, and hence $N(w)\cap N(v)=\emptyset$.  Thus we conclude that $\Delta^{I}=1$.  From Lemma \ref{lemvileq3}, if $|I|>3$, since $G$ is connected, $\Delta^{I}=1$.  Now we produce a Hamiltonian path in $G$ with $|I|\ge3$ as follows.  Let $I=\{u,v,w_1,\ldots, w_k\}$, $k\ge1$ such that for all $w_i$, $1\le i\le k$, $d(w_i)>1$.  Let $x_i,y_i, 1\leq i\leq k$ be any two elements in $N(w_i)$.  Since $\Delta^I=1$, note that for all $s,t\in I$, $N(s)\cap N(t)=\emptyset$.  Let $P_i=(x_i,w_i,y_i)$, $1\le i\le k$, $v'=N(v)$, $u'=N(u)$ and $\{z_1,\ldots,z_l\}=K\setminus\{x_1,\ldots,x_k,y_1,\ldots,y_k,u',v'\}$, then $P=(u,u',x_1,w_1,y_1,\ldots,x_i,w_i,y_i,\ldots,x_k,w_k,y_k,z_1, z_2, \ldots, z_l,v',v)$ is a Hamiltonian path in $G$.  $P$ can also be written as $(u,u',\overrightarrow{P_1},\ldots,\overrightarrow{P_k},z_1, z_2, \ldots, z_l,v',v)$. 
This completes a proof of Theorem \ref{k13hamilpath}.  $\hfill \qed$
\end{proof} 
\subsection{Results on \K14-free split graphs}
\begin{theorem} \label{delta1hamilpath}
Let $G(K\cup I,E)$ be a connected $K_{1,4}$-free split graph with $\Delta^I=1$.  $G$ contains a Hamiltonian path if and only if there exists at most $2$ vertices $u,v\in I$ such that $d(u)=1$, and $d(v)=1$.
\end{theorem}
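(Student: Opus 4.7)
The plan is to reduce this theorem directly to Theorem~\ref{k13hamilpath}. The key observation is that the hypothesis $\Delta^I=1$ already forces $G$ to be $K_{1,3}$-free, so the $K_{1,4}$-free assumption adds nothing beyond what is already in force, and the characterization carries over verbatim with the same necessary and sufficient condition.

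First I would establish the following structural claim: every split graph $G(K\cup I,E)$ with $\Delta^I=1$ is $K_{1,3}$-free. Suppose for contradiction that there is an induced claw with center $c$ and independent leaves $\{x,y,z\}$. If $c\in I$, then $\{x,y,z\}\subseteq N(c)\subseteq K$, but $K$ is a clique, so $\{x,y,z\}$ cannot be independent, a contradiction. If $c\in K$, then at most one of $x,y,z$ can lie in $K$ (again because $K$ is a clique), and at most one of $x,y,z$ can lie in $I$ (because $d^I(c)\leq \Delta^I=1$). Hence $|\{x,y,z\}|\leq 2$, once more a contradiction. So the claim holds.

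Having shown that $G$ is $K_{1,3}$-free, I would simply invoke Theorem~\ref{k13hamilpath} on $G$. That theorem states that a connected $K_{1,3}$-free split graph has a Hamiltonian path if and only if at most two vertices $u,v\in I$ satisfy $d(u)=d(v)=1$, which is exactly the statement of Theorem~\ref{delta1hamilpath}.

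There is no real obstacle in this proof, since all the structural work has already been done in the $K_{1,3}$-free setting. The only delicate point is confirming that $\Delta^I=1$ implies $K_{1,3}$-freeness in a split graph, and that reduces to the short case analysis above on the location of the claw center. Once this observation is in hand, the theorem follows immediately from the earlier result, and no additional explicit Hamiltonian-path construction is required.
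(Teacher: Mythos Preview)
Your proof is correct. The observation that $\Delta^I=1$ forces a split graph to be $K_{1,3}$-free is sound (your case analysis on the location of a putative claw center is exactly right), and once that is established, Theorem~\ref{k13hamilpath} applies verbatim.

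The paper takes a slightly different, more hands-on route: rather than first observing $K_{1,3}$-freeness and then citing Theorem~\ref{k13hamilpath} as a black box, it points directly to the explicit Hamiltonian-path construction carried out in Case~3 of that theorem, noting that the same construction goes through here (since that case is precisely where $\Delta^I=1$ is in force and the path $(u,u',\overrightarrow{P_1},\ldots,\overrightarrow{P_k},z_1,\ldots,z_l,v',v)$ is built). Your reduction is cleaner and avoids repeating any construction, at the cost of one extra structural lemma; the paper's approach skips that lemma but implicitly asks the reader to rerun the Case~3 argument. Both land in the same place with essentially the same effort.
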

\begin{proof}
The proof is similar to the proof of Case 3 in Theorem \ref{k13hamilpath}. 
\end{proof}
We shall define some special paths and cycles in a $K_{1,4}$-free split graph $G(K\cup I,E)$.  We define the \emph{restricted bipartite subgraph} $H$ of $G$ as follows.  $V_a=\{u\in I : d(u)\le2\}$, $V_b=N(V_a)$, $V(H)=V_a\cup V_b$ and $E(H)=\{uv : u\in V_a, v\in V_b\}$.  An induced cycle $C$ in $H$ is referred to as \emph{short cycle} in $H$ (as well as $G$) if $V(C)\subset V(G)$.  An $I$-$I$ path is a maximal path in $H$ that starts and ends in $I$.   Similarly $K$-$K$ path and $I$-$K$ path are maximal paths in $H$ with end vertices in $K$ and end vertices in $I$, $K$, respectively.  A maximal $I$-$I$ path $P$ in $H$ is referred to as \emph{Short} $I$-$I$ path if $V(P)\subset V(G)$.  An example is illustrated in Figure \ref{eg:shortIIpath}.
\begin{figure}[h!]
\begin{center}
\includegraphics[scale=1.1]{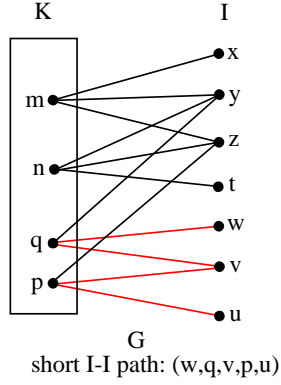}
\caption{Split Graph $G$ having a short $I$-$I$ path}
\label{eg:shortIIpath}
\end{center}
\end{figure} 
%
\begin{theorem} \label{delta2hamilpath}
Let $G(K\cup I,E)$ be a connected $K_{1,4}$-free split graph with $\Delta^I=2$ and $H$ be the restricted bipartite subgraph of $G$.  $G$ contains a Hamiltonian path if and only if the following holds true. \\
1. $H$ has no short $I$-$I$ path.\\
2. The number of $I$-$K$ paths in $H$ is at most 2.
\end{theorem}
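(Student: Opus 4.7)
For necessity I would handle each condition separately. Suppose first that $H$ contains a short $I$-$I$ path $P=(u_1,w_1,u_2,\ldots,u_k)$ with $V(P)\subsetneq V(G)$, and set $S=V(P)\cap K=\{w_1,\ldots,w_{k-1}\}$. I plan to apply Lemma~\ref{chvatal_path} with this $S$. The crux is to show that each of $u_1,\ldots,u_k$ is isolated in $G-S$: for interior $u_i$ this is immediate since $u_i\in V_a$ has $d_G(u_i)\le 2$ and its two neighbors are exactly $w_{i-1},w_i\in S$; for the endpoints, one exploits $u_1\in V_a$ (so $N_G(u_1)=N_H(u_1)\subseteq V_b$) together with maximality of $P$ in $H$ to conclude $N_G(u_1)\subseteq V(P)\cap V_b=S$, since an edge from $u_1$ to a $V_b$-vertex outside $V(P)$ would extend $P$. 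Because $V(P)\subsetneq V(G)$, at least one additional component exists outside $V(P)$, so $c(G-S)\ge k+1>|S|+1$, contradicting Lemma~\ref{chvatal_path}. For Condition~2, note that if $u\in V_a$ is the $I$-endpoint of an $I$-$K$ path, then $d_H(u)=1$ and $N_G(u)=N_H(u)$, so $d_G(u)=1$; three disjoint $I$-$K$ paths thus produce three distinct $I$-vertices of degree one, and these cannot all be endpoints of a Hamiltonian path.

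For sufficiency, the first structural observation is that $H$ has maximum degree at most $2$: vertices of $V_a$ have $d_H\le d_G\le 2$ by definition, and vertices of $V_b$ have at most $\Delta^I=2$ neighbours in $I$, hence at most two in $V_a$. So every component of $H$ is either a simple path or a cycle. If some component is an $I$-$I$ path, Condition~1 forces it to span $V(G)$, and it is already a Hamiltonian path. Otherwise the components of $H$ are at most two $I$-$K$ paths together with some $K$-$K$ paths and some cycles; I plan to build the Hamiltonian path by splicing these components using edges of the clique $K$. Specifically, each $V_a$-endpoint of an $I$-$K$ path (being of $G$-degree $1$) becomes an endpoint of the final path; each $K$-$K$ path is inserted between its two $K$-endpoints via clique edges; each cycle is broken at one edge and threaded in similarly; and finally every vertex of $I\setminus V_a$ (which has $d_G\ge 3$) is inserted between two of its $K$-neighbours, the freedom being supplied by $K_{1,4}$-freeness and Corollary~\ref{cor1}-style arguments.

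The principal obstacle I expect lies in the explicit bookkeeping of this concatenation: verifying that distinct segments never fight over the same pair of $K$-endpoints and that every $I\setminus V_a$-vertex can still find a compatible insertion slot once the concatenation is partially built. This forces a case analysis according to whether the number of $I$-$K$ paths is $0$, $1$, or $2$, and according to the cycle structure of $H$; in each case one must exhibit an explicit ordering of the clique $K$ and of the insertion points, and rule out collisions by invoking $K_{1,4}$-freeness together with $\Delta^I=2$ to bound how many $I$-vertices can compete for any given pair of clique-adjacent slots.
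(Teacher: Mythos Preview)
Your approach is essentially the same as the paper's: Chv\'atal's condition for Condition~1, the degree-$1$ endpoint argument for Condition~2, and then splicing the degree-$\le 2$ components of $H$ through clique edges while inserting the high-degree $I$-vertices. Two small remarks. First, the appeal to ``Corollary~\ref{cor1}-style arguments'' is misplaced, since that corollary requires $d^I(v)=3$, which cannot occur when $\Delta^I=2$; the actual mechanism the paper uses is simpler: because $\Delta^I=2$, every internal $K$-vertex of a path in $\mathbb S$ already has both its $I$-neighbours on that path, so any $v\in I\setminus V_a$ can only see $K$-endpoints, and since $d(v)\ge 3$ while there are at most two $I$-$K$ paths (each contributing one $K$-endpoint), $v$ must be adjacent to the endpoint of some $K$-$K$ path or singleton from $K\setminus V_b$. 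Second, the paper resolves your anticipated bookkeeping obstacle with an explicit two-stage procedure: throw in each vertex of $K\setminus V_b$ as a trivial one-vertex path, then for each $v\in I\setminus V_a$ merge two of its incident paths, always choosing at least one to be $K$-$K$ (or a singleton) so the count of $I$-$K$ paths never grows; the final concatenation then places the (at most two) $I$-$K$ paths at the two ends.
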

\begin{proof}
If there exists a short $I$-$I$ path $P$ in $H$, then note that $c(G-S)>|S|+1$ where $S=P\cap K$, and there is no Hamiltonian path in $G$ as per Lemma \ref{chvatal_path}.  It is easy to see that if the number of $I$-$K$ paths in $H$ is more than 2, then there is no spanning path in $G$ that includes all the vertices in all such $I$-$K$ paths.  For sufficiency, we see the following.  Since $H$ is the restricted bipartite subgraph of $G$, $H$ is a collection of maximal paths and short cycles.  Moreover, any short cycle in $H$ is also a maximal $I$-$K$ path in $H$.  We initialize a set $\mathbb{S}$ with the set of maximal paths in $H$.  It follows that $\mathbb{S}$ has at most two $I$-$K$ paths.  Let $I'=I\setminus\bigcup\limits_{\forall P\in \mathbb{S}}V(P)$ and $K'=K\setminus\bigcup\limits_{\forall P\in \mathbb{S}}V(P)$.  We now outline a procedure to update $\mathbb{S}$ in two stages, using which we construct a Hamiltonian path in $G$.  In the first stage, for every vertex $u\in K'$, which is by definition $P_1$, include $P_1$ in $\mathbb{S}$.  Since $\Delta^I=2$, observe that any vertex $v\in I'$ is not adjacent to any internal vertex of paths in $\mathbb{S}$.  Thus such a vertex $v$ is adjacent to the end vertices of paths in $\mathbb{S}$.  In particular, $v$ may be adjacent to some of the newly added $P_1$ in $\mathbb{S}$ during the first stage.  Further, $d(v)\ge3$ implies that $v$ is adjacent to the end vertices of at least two different paths $Q_i,Q_j\in \mathbb{S}$.  As a part of the second stage, we update $\mathbb{S}$ as follows.  For every vertex $v\in I'$, we find paths $Q_i,Q_j$ such that one of $Q_i,Q_j$ is either a $K$-$K$ path or $P_1$.  The paths $Q_i,Q_j$ are replaced with the path $(\overrightarrow{Q_i},v,\overrightarrow{Q_j})$ in $\mathbb{S}$.  Let $\mathbb{S}_f=\{Q_1,\ldots,Q_k\}$ be the resultant set of paths after completing the second stage.  If there exists two $I$-$K$ paths, then let it be $Q_i,Q_j$, $i<j$ and if there exists only one $I$-$K$ path, then let it be $Q_i$.  Then $(\overrightarrow{Q_i},\overrightarrow{Q_1},\ldots,\overrightarrow{Q}_{i-1},\overrightarrow{Q}_{i+1},\ldots,\overrightarrow{Q}_{j-1},\overrightarrow{Q}_{j+1},\ldots,\overrightarrow{Q_k},\overrightarrow{Q_j})$ is a Hamiltonian path in $G$.    
This completes the sufficiency part and a proof of the theorem.
$\hfill\qed$
\end{proof}
\textbf{Definition:} A connected $K_{1,4}$-free split graph $G$ satisfies Property A if $|K|\ge|I|-1\ge8$, $G$ has no short $I$-$I$ path, and the sum of the number of $I$-$K$ paths and the number of short cycles is at most 2.  In a $K_{1,4}$-free split graph $G$ with $\Delta^I_{G}=3$, we define $V_3=\{v:v\in K,d^I(v)=3\}$. 
\\\\
Consider a $K_{1,4}$-free split graph $G$ with $\Delta^I_{G}=3$.  We shall now show that the number of short cycles in $G$ is at most 1 and the length of short cycle is at most 8.  Subsequently, if $G$ satisfies Property A, then we produce a Hamiltonian path in $G$.  Towards this attempt, we bring in a transformation which will transform an instance of $\Delta^I_G=3$ into $\Delta^I_{G'}=2$ instance $G'$.  Our results are deep and investigates the structure of the restricted bipartite subgraph $H'$ of $G'$ to obtain a Hamiltonian path in $G$. 
%
%
%
\begin{lemma} \label{onecycle}
Let $G$ be a connected $K_{1,4}$-free split graph with $\Delta^I=3$.  Then, the number of short cycles in $G$ is at most one.  Further, if $G$ has a short cycle $C_n$, then $n\le 8$. 
\end{lemma}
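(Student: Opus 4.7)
The plan is to fix a vertex $v\in K$ with $d^I(v)=3$ (guaranteed by $\Delta^I=3$) and write $N^I(v)=\{a,b,c\}$; this triple, together with Corollary~\ref{cor1} and Lemma~\ref{claimA}, will drive every step. Given a short cycle $C=v_1u_1v_2u_2\cdots v_ku_kv_1$ in $H$ with $v_i\in K$ and $u_i\in V_a$ (so $d(u_i)=2$), I split into two cases according to whether $v\in V(C)$, and in each case exploit maximality of $K$ through the identity $K=\{v_{i-1},v_i,v_{i+1}\}\cup N(x)$ produced by Lemma~\ref{claimA} applied to a $v_i$ whose third $I$-neighbor is some $x$.

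If $v\in V(C)$, say $v=v_1$, then the two cycle-neighbors of $v$ in $I$ are elements of $\{a,b,c\}$ of degree $2$, say $u_k=a$ and $u_1=b$. Since $a,b$ have no further clique-neighbors, Corollary~\ref{cor1} forces every $v_i$ with $i\in\{3,\dots,k-1\}$ to be adjacent to $c$, so $N^I(v_i)=\{u_{i-1},u_i,c\}$. Lemma~\ref{claimA} then yields $K\subseteq\{v_{i-1},v_i,v_{i+1}\}\cup N(c)$ for every such $i$, and intersecting the constraints at two consecutive values of $i$ (available as soon as $k\ge 5$) makes every vertex of $K$ a neighbor of $c$, contradicting maximality of $K$. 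If $v\notin V(C)$, the same type of argument first shows $a,b,c\notin V(C)$ (otherwise the degree-$2$ vertex of $V_a$ on the cycle would still be a neighbor of $v$) and that every $v_i$ is adjacent to a unique $x_i\in\{a,b,c\}$; Lemma~\ref{claimA} then gives that the non-neighbors of $x_i$ in $K$ are a subset of $\{v_{i-1},v_{i+1}\}$. Maximality of $K$ forbids three consecutive equal $x_i$'s, and a short cyclic tracking (using Lemma~\ref{claimA} once at $i=1$ and then at $i=3$) shows that only the two neighbors of $v_1$ on the cycle can carry a value different from the majority, which forces $k\le 4$. In both cases the short cycle has length $2k\le 8$.

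For the uniqueness statement, I first argue that two distinct short cycles $C_1,C_2$ must be vertex-disjoint: a shared vertex $u\in V_a$ would exhaust both of $u$'s edges of $H$ in each cycle, and a shared $w\in V_b$ would produce at least four distinct $V_a$-neighbors of $w$ in $H$, violating $d^I(w)\le\Delta^I=3$ (the intermediate possibility of sharing exactly one $V_a$-neighbor of $w$ collapses back to the $V_a$-sharing case). Given vertex-disjointness, $v$ lies on at most one of the cycles, since otherwise it would need four distinct cycle-neighbors inside $N^I(v)$ but $|N^I(v)|=3$. If $v$ lies on $C_1$ and not on $C_2$, the Case~1 analysis puts $a,b$ on $C_1$ with degree $2$, hence every clique vertex $v^2_j$ of $C_2$ must be adjacent to $c$; intersecting the Lemma~\ref{claimA} constraints around $C_2$ then kills all non-neighbors of $c$ in $K$. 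If $v$ lies on neither cycle, Lemma~\ref{claimA} applied to a $v^1_i$ forces every $v^2_j$ to be adjacent to $x^1_i$; since $a,b,c\notin V(C_2)$, this in turn forces $x^1_i=x^2_j$ for all $i,j$, and one more intersection step kills the common value's non-neighbors in $K$. Each subcase violates maximality of $K$, so at most one short cycle exists.

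The main obstacle I anticipate is the bookkeeping of the cyclic sequence $(x_1,\dots,x_k)$ in the $v\notin V(C)$ part of the length argument; once the invariant ``non-neighbors of $x_i$ in $K$ lie in $\{v_{i-1},v_{i+1}\}$'' is extracted from Lemma~\ref{claimA} and combined with maximality of $K$, the rest is a small pigeonhole over three labels. It is worth noting that $K_{1,4}$-freeness enters only through Corollary~\ref{cor1} and Lemma~\ref{claimA}, while maximality of the clique $K$ is used crucially to convert ``$x$ is adjacent to everyone in $K$'' into a contradiction.
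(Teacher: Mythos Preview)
Your length-bound argument is sound in spirit. Case~1 ($v\in V(C)$) is correct. Case~2 ($v\notin V(C)$) has the right invariant, $K\setminus N(x_i)\subseteq\{v_{i-1},v_{i+1}\}$, but the ``short cyclic tracking'' is too compressed to stand as written. What is actually needed is: (i) for $k\ge4$, $x_i\ne x_{i+1}$ (else intersecting the two constraints makes $x_i$ universal in $K$); and (ii) if $x_i\ne x_j$ then $v_i\notin N(x_j)$, whence $i\in\{j-1,j+1\}$. From (ii), any two distinct labels occur only at cyclically adjacent positions, so at most two labels appear (a third would create a triangle in $C_k$); with (i) the sequence alternates and $k$ is even; finally, for $k\ge6$ the position-$1$ and position-$3$ constraints for $a=x_1=x_3$ give $K\setminus N(a)\subseteq\{v_2\}$, yet $v_4\notin N(a)$, a contradiction. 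Your phrase ``only the two neighbours of $v_1$ can carry a value different from the majority'' does not match this picture. The paper's route is more direct: it fixes one $v_1\in N^I(v)$ (with $v_1\notin C$ if $v\in C$) and, for $j\ge5$, exhibits explicit induced $K_{1,4}$'s forcing $v_1w_i\in E(G)$ for every $i$ and then $v_1w\in E(G)$ for every $w\in K\setminus C$, contradicting maximality of $K$ immediately.

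The genuine gap is your vertex-disjointness claim in the uniqueness part. The argument is circular: a shared $V_a$-vertex forces both of its incident $H$-edges to lie in both cycles, and a shared $V_b$-vertex with fewer than four $V_a$-neighbours forces a shared $V_a$-neighbour---but neither step produces a contradiction, they merely propagate the sharing. Since $V_b$-vertices can have degree~$3$ in $H$, the graph $H$ can be a theta-graph, and two distinct induced cycles may share a nontrivial path. For a concrete picture: with $K=\{w_1,w_2,w_3\}$, $I=\{u_1,\dots,u_4\}$, $N(u_1)=\{w_1,w_2\}$, $N(u_2)=N(u_4)=\{w_2,w_3\}$, $N(u_3)=\{w_1,w_3\}$, one checks the split graph is $K_{1,4}$-free with $\Delta^I=3$, yet the $6$-cycle through $u_1,u_2,u_3$ and the $4$-cycle through $u_2,u_4$ are distinct short cycles sharing the path $w_2,u_2,w_3$. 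Since your entire case split (``$v$ on exactly one cycle'' versus ``$v$ on neither'') presupposes disjointness, the uniqueness argument does not go through as written. The paper does not argue disjointness at all; it asserts directly that some $v_1\in N^I(v)$ is adjacent to every vertex of $(C\cup D)\cap K$ and then to all of $K$.
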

\begin{proof}
For a contradiction assume that there are at least two short cycles in $G$.  Let $C,D$ be any two short cycles in $G$.  Since $\Delta^I_{G}=3$, there exists $v\in V_3$.  Clearly, there exists a vertex $v_1\in N^I(v)$ such that $v_1$ is adjacent to all the vertices in $C\cap K$ and $D\cap K$.  It follows that all the vertices in $K\setminus (C\cup D)$ are adjacent to $v_1$.  Note that $K\cup \{v_1\}$ is a clique of larger size, which contradicts the maximality of clique $K$.  For the second part, assume for a contradiction that there exists a short cycle $C_{n\ge10}$.  Consider the cycle $C$ such that $V(C)=\{w_1,\ldots,w_j,x_1,\ldots,x_j\}$, $j\ge5$, $\{w_1,\ldots,w_j\}\subset K$, $\{x_1,\ldots,x_j\}\subset I$, $E(C)=\bigcup\limits_{i=1}^j\{w_ix_i,x_iw_{(i+1)mod~j}\}$.  Since $\Delta^I(G)=3$, there exists $v\in V_3$.  To complete our proof, we identify a vertex $v_1\in N^I(v)$ as follows.  If $v\nin C$, then from Corollary \ref{cor1}, there exists a vertex $v_1\in N^I(v)$ such that $v_1w_1\in E(G)$.  If $v\in C$, then without loss of generality, we assume $w_1=v$.  There exists $v_1\in N^I(v)$ such that $v_1\nin C$.  We claim that the vertices $\{w_3,\ldots,w_{j-1}\}$ are adjacent to $v_1$, otherwise $N^I(w_1)\cup \{w_1,w_i\}$, $3\le i\le j-1$ induces a $K_{1,4}$.  Further $w_2v_1\in E(G)$, otherwise $N^I(w_4)\cup\{w_4,w_2\}$, induces a $K_{1,4}$.  Also $w_jv_1\in E(G)$, otherwise $N^I(w_3)\cup\{w_3,w_j\}$, induces a $K_{1,4}$.   From Corollary \ref{cor1}, it follows that all the vertices in $K\setminus C$ are adjacent to $v_1$.  Suppose there exists $w\in K\setminus C$ such that $wv_1\notin E(G)$, then for any $u\in C\cap K$, $N^I(u)\cup\{u,w\}$ induces a $K_{1,4}$, a contradiction.  Finally, $K\cup \{v_1\}$ is a larger clique, contradicting to the maximality of $K$.  Therefore, no such $C_{n\ge10}$ exists.  This completes a proof of the lemma.  $\hfill\qed$
\end{proof}
%
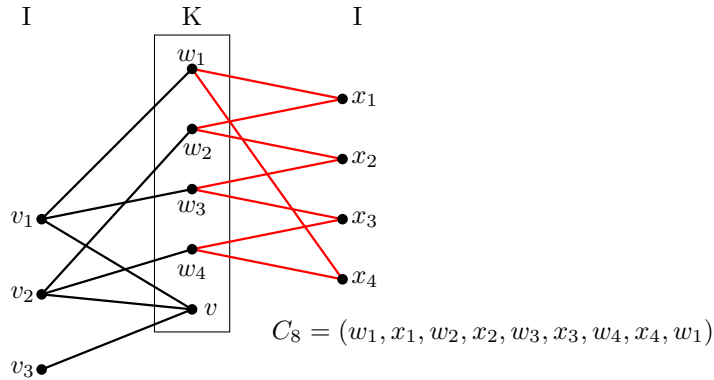
\begin{figure}[h!]
\begin{center}
\begin{tikzpicture}[scale=1] \label{fig:claim_onecycle}
\draw (-0.5,-3.5) rectangle (0.5,0.45); 

\N{0}{0.7}{$K$}  \N{2.2}{0.7}{$I$}  \N{-2.2}{0.7}{$I$}  
\N{4}{-3.5}{C_8=(w_1,x_1,w_2,x_2,w_3,x_3,w_4,x_4,w_1)}
\LR002-0.4  
\LR{0}{-0.8}{2}{-0.4}  
\LR{0}{-0.8}{2}{-1.2}  
\LR{0}{-1.6}{2}{-1.2}  
\LR{0}{-1.6}{2}{-2.0}  
\LR{0}{-2.4}{2}{-2.0}  
\LR{0}{-2.4}{2}{-2.8}  
\LR{0}{0}{2}{-2.8}  
\CB00     \N{-0}{0.15}{w_1}
\CB0-0.8  \N{0.08}{-01.1}{w_2}
\CB0-1.6  \N{-0}{-1.85}{w_3}
\CB0-2.4  \N{-0}{-2.7}{w_4}
\CB0-3.2  \N{0.25}{-3.2}{v}
\CB2-0.4  \N{2.3}{-0.4}{x_1}
\CB2-1.2  \N{2.3}{-1.2}{x_2}
\CB2-2.0  \N{2.3}{-2.0}{x_3}
\CB2-2.8  \N{2.3}{-2.8}{x_4}
\LB{-2}{-2}{0}{0}  
\LB{-2}{-2}{0}{-1.6}  
\LB{-2}{-3}{0}{-0.8}  
\LB{-2}{-3}{0}{-2.4}  
	\fill[color=black] (-2,-2.0) circle(2pt);
	\draw[color=black] (-2.25,-2.0) node {$v_1$};
	\fill[color=black] (-2,-3) circle(2pt);		
	\draw[color=black] (-2.25,-3) node {$v_2$};	
	\fill[color=black] (-2,-4) circle(2pt);		
	\draw[color=black] (-2.25,-4) node {$v_3$};
\LB{-2}{-2}{0}{-3.2}  
\LB{-2}{-3}{0}{-3.2}  
\LB{-2}{-4}{0}{-3.2}  
	\end{tikzpicture}
\end{center}  \vspace{-10pt}
\caption{An example of a \K14-free split graph with $\Delta^I=3$ and a short cycle $C_{8}$ }
\end{figure}
\textbf{Definition:} Let $G$ be a connected $K_{1,4}$-free graph with $\Delta^I_G=2$ satisfying property A and $H$ be the restricted bipartite subgraph of $G$.  By the constructive proof of Theorem \ref{delta2hamilpath}, there exists a collection $\mathbb{S}_f$ of vertex disjoint paths containing all the vertices of $G$.  
%
Such a collection is termed as a \emph{path collection} of $H$.\\\\
Let $G$ be a $K_{1,4}$-free split graph with $\Delta^I=3$, satisfying Property A.  For a vertex $v\in V_3$ let $G'=G-N^I(v)$.  Let $H$, $H'$ be the restricted bipartite subgraphs of $G$, $G'$, respectively and $\mathbb{S}_f$ be a path collection of $H'$.   Clearly, $H'$ is a subgraph of $H$.  Let $\mathbb{P}_k$, $k\ge1$ be the set containing all the maximal paths of length $k$ in $\mathbb{S}_f$.  Thus, $\mathbb{S}_f=\mathbb{P}_1\cup \mathbb{P}_2,\ldots,\cup\mathbb{P}_{k}$, where $\mathbb{P}_j$ is the set of maximal paths of size $j$ where for every $Q\in\mathbb{P}_j$, there does not exist $Q'\in\mathbb{S}_f$ such that $E(Q)\subset E(Q')$.  
Note that $\mathbb{S}_f$ has $I$-$K$ paths (even length paths) and $K$-$K$ paths (odd length paths).  A $K$-$K$ path $P_a\in\mathbb{S}_f$ is defined on the vertex set $V(P_a)=\{w_1,\ldots,w_j,x_1,\ldots,x_{j-1}\}$,  $E(P_a)=\{w_ix_i : 1\leq i\leq j-1\}\cup \{x_{k-1}w_k : 2\leq k\leq j\}$ such that $\{w_1,\ldots,w_j\}\subseteq K$, $\{x_1,\ldots,x_{j-1}\}\subseteq I$.   We denote such a path as $P_a=P(w_1,\ldots,w_j;x_1,\ldots,x_{j-1})$.  Similarly, $P_b=P(w_1,\ldots,w_j;x_1,\ldots,x_{j})$ represents an $I$-$K$ path with $V(P_b)=\{w_1,\ldots,w_j,x_1,\ldots,x_{j}\}$, $\{w_1,\ldots,w_j\}\subseteq K$, $\{x_1,\ldots,x_{j}\}\subseteq I$ and $E(P_b)=\{w_ix_i : 1\leq i\leq j\}\cup \{x_{k-1}w_k : 2\leq k\leq j\}$.

%
\begin{lemma} \label{HPoneSC}
Let $G$ be a connected $K_{1,4}$-free split graph with $\Delta^I=3$, satisfying Property A.  If $G$ has a short cycle $C$, then there exists a Hamiltonian path in $G$.
\end{lemma}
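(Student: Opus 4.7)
The plan is to invoke the $\Delta^I_G = 3 \to \Delta^I_{G'} = 2$ reduction previewed just before the lemma, obtain a path collection of the reduced graph $G'$ via the constructive proof of Theorem \ref{delta2hamilpath}, and splice the removed $I$-vertices back in so as to absorb the short cycle $C$ into a single Hamiltonian path.

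Since $\Delta^I_G = 3$, I pick $v \in V_3$; write $N^I(v) = \{v_1,v_2,v_3\}$ and set $G' := G - N^I(v)$. Corollary \ref{cor1} says every $w \in K \setminus \{v\}$ shares an $I$-neighbour with $v$, so deleting $\{v_1,v_2,v_3\}$ drops $d^I(w)$ by at least one and gives $\Delta^I_{G'} \le 2$; also $G'$ remains a connected $K_{1,4}$-free split graph in which $K$ is still a maximum clique. Next I would argue from Property A of $G$, the inequality $|K| \ge |I|-1 \ge 8$, and Lemma \ref{onecycle} that the restricted bipartite subgraph $H'$ of $G'$ satisfies the hypotheses of Theorem \ref{delta2hamilpath}: no short $I$-$I$ path, and at most two $I$-$K$ paths. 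The constructive half of that theorem then produces a path collection $\mathbb{S}_f = \{Q_1,\dots,Q_k\}$ of vertex-disjoint paths whose union is $V(G')$.

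To lift $\mathbb{S}_f$ to a Hamiltonian path of $G$, I splice $v_1,v_2,v_3$ back in and absorb $C$. By Lemma \ref{claimA} we have $N(N^I(v)) = K$, so $v_1,v_2,v_3$ jointly dominate $K$; combined with $K_{1,4}$-freeness this lets each $v_i$ serve as a bridge vertex between the $K$-endpoints of two distinct paths in $\mathbb{S}_f$, exactly as in the second stage of the proof of Theorem \ref{delta2hamilpath}. If $C$ persists as a short cycle in $H'$, I first break it at an $I$-vertex of $C$, producing an extra $K$-$K$ path in $\mathbb{S}_f$; if instead $C$ has been sliced open by the removal of some $v_i \in V(C)$, the resulting fragment is already a $K$-$K$ path in $\mathbb{S}_f$. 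Property A caps the sum of short cycles and $I$-$K$ paths in $G$ by two, so $G$ contains at most one $I$-$K$ path besides $C$; together with the three bridge vertices $v_1,v_2,v_3$ this leaves enough flexibility to place the (at most two) $I$-$K$ paths of $\mathbb{S}_f$ at the two ends of the assembled Hamiltonian path and to splice the remaining $K$-$K$ paths in between.

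The main obstacle I anticipate is verifying that $\mathbb{S}_f$ really contains at most two $I$-$K$ paths: removing $N^I(v)$ can demote a vertex $u \in I$ with $d_G(u) = 3$ to degree two, thereby creating new maximal paths in $H'$ that did not exist in $H$. Bounding how many new $I$-$K$ paths this can produce will require a careful case analysis using $K_{1,4}$-freeness and the rigidity imposed on the neighbourhoods of $v_1,v_2,v_3$ by $v \in V_3$, in the spirit of the argument used in Lemma \ref{onecycle}. Once this bookkeeping is done, the final stitching is essentially the two-stage procedure already written out in the proof of Theorem \ref{delta2hamilpath}.
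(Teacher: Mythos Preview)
Your framework matches the paper's---reduce to $G'=G-N^I(v)$, take a path collection $\mathbb{S}_f$ of $H'$, then splice $v_1,v_2,v_3$ back in---but the part you flag as ``the main obstacle'' is exactly where your proposal stops being a proof, and the paper's resolution is \emph{not} the bookkeeping you anticipate.

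You plan to verify that $H'$ meets the hypotheses of Theorem~\ref{delta2hamilpath} (no short $I$-$I$ path, at most two $I$-$K$ paths) and then invoke that theorem wholesale. The paper does not do this, and in fact it is not clear the hypotheses hold for $G'$: as you note, deleting $N^I(v)$ can drag many $I$-vertices down to degree $\le 2$ and create new maximal $H'$-paths, and Property~A of $G$ does not obviously transfer to $G'$. Instead the paper proves a much sharper structural fact directly from $K_{1,4}$-freeness and the presence of the short cycle: apart from the single path $P$ with $V(P)=V(C)$, the collection $\mathbb{S}_f$ contains \emph{no path of length $\ge 4$} (i.e.\ $\mathbb{P}_n=\emptyset$ for all $n\ge 4$, $n\ne |C|$). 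The argument splits on whether $v\in C$. If $v\notin C$, Corollary~\ref{cor1} gives $v_1\in N^I(v)$ adjacent to some $w\in C\cap K$, and maximality of $K$ forces some $x\in C\cap K$ with $v_1x\notin E(G)$; now any internal clique vertex $u$ of a hypothetical long path $P_a\ne P$ would have $d^I(u)\ge 2$ in $P_a$ plus an edge to some $v_i$, and $\{w,u\}$ or $\{x,u\}$ together with $N^I$ induces a $K_{1,4}$. The case $v\in C$ is similar. Once every other path has size at most $3$, the explicit Hamiltonian path is assembled by hand (with a further case split on $d(v_3)$ and on $\mathbb{P}_2$), not by quoting Theorem~\ref{delta2hamilpath}.

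So the missing idea is this ``all other paths are tiny'' lemma. Your generic splicing plan (``three bridge vertices, two $I$-$K$ paths at the ends'') is not enough without it, because you have not controlled how many or how long the $K$-$K$ and $I$-$K$ pieces of $\mathbb{S}_f$ can be.
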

\begin{proof} Let $v\in V_3$, $N^I(v)=\{v_1,v_2,v_3\}$.  Recall that $H'$ is the restricted bipartite subgraph of $G-N^I(v)$ and $\mathbb{S}_f$ is a path collection of $H'$.  From Lemma \ref{onecycle}, there exists exactly one short cycle $C$ in $G$.  Let length of $C$ be $k$ and $P\in \mathbb{P}_k$ is such that $V(P)=V(C)$.   We see the following cases depending on the presence of $v$ in $C$. \\
\emph{Case 1:} $v\nin C$. 
Consider a vertex $w\in C\cap K$.  From Corollary \ref{cor1}, $N^I(v)\cap N^I(w)\ne\emptyset$.  Thus there exists $v_1\in N^I(v)$ such that $v_1w\in E(G)$.  Now we claim that there exists $x\in C\cap K$ such that $v_1x\nin E(G)$.  Suppose for a contradiction assume for every $x\in C\cap K$, $v_1x\in E(G)$.  It follows from Corollary \ref{cor1} that all the vertices in $K\setminus C$ are adjacent to $v_1$ and $K\cup \{v_1\}$ is a larger clique, contradicting the maximality of $K$.  Thus $v_1x\nin E(G)$.  Further, from Corollary \ref{cor1}, either $v_2x\in E(G)$ or $v_3x\in E(G)$.  Without loss of generality, let $v_2x\in E(G)$.
%
%
Using the above vertices $w,x$, we claim that in the collection $\mathbb{S}_f$ of $H'$, for $n\ne k$, $\mathbb{P}_{n\ge 4}=\emptyset$.  Note that $|\mathbb{P}_k|=1$.  Suppose there exists a path $P_a\ne P$, $P_a\in \mathbb{P}_{n\ge 4}$, then there exists a vertex $u\in P_a\cap K$ such that in $P_a$, $d^I(u)=2$.  From Corollary \ref{cor1}, $N^I(u)\cap N^I(v)\ne \emptyset$.  If $uv_1\in E(G)$, then $N^I(x)\cup \{x,u\}$ induces a $K_{1,4}$, otherwise $N^I(w)\{w,u\}$ induces a $K_{1,4}$.  If $d(v_3)=1$, then since $G$ satisfies Property A, there does not exist $z\in I\setminus (C\cup N^I(v))$ such that $d_G(z)=1$.  It follows that $\mathbb{P}_2=\emptyset$.  This is true because $G$ has already one short cycle and apart from that it can have at most one $I$-$K$ path as per Property A.  Since $d(v_3)=1$, no such $z$ exists.  Let $\mathbb{P}_1=\{\{v\},\{w_1\},\ldots,\{w_k\}\}$ and $\mathbb{P}_3=\{(w_{k+1},x_1,w_{k+2}),(w_{k+3},x_2,w_{k+4}),\ldots\}$.  We construct a path $Q=(w_1,w_2,\ldots,w_k,w_{k+1},x_1,w_{k+2},w_{k+3},x_2,w_{k+4},\ldots)$.  As per the premise, $|K|\ge|I|-1\ge8$.  Thus $|I|\ge9$, and $\mathbb{P}_3\ne\emptyset$.  It follows that $Q$ is non-empty, further $|Q|\ge5$. From Corollary \ref{cor1}, all the vertices in $Q\cap K$ are adjacent to both $v_1$ and $v_2$.  Suppose there exists $s\in Q$ such that $v_1s\nin E(G)$ or $v_2s\nin E(G)$, then either \NC{w}{s} or \NC{x}{s} induces a \K14.  Thus for every $s\in Q\cap K$, $v_1s\in E(G)$ and $v_2s\in E(G)$.  Observe that $(v_3,v,v_2,\overrightarrow{Q},v_1,w\overrightarrow{C})$ is a Hamiltonian path in $G$.  If $d(v_3)>1$, then we see the following.  If $\mathbb{P}_2\ne\emptyset$, since $G$ satisfies Property A, $|\mathbb{P}_2|=1$ i.e., $\mathbb{P}_2=\{P_b\}$, $P_b=(y,z)$, $y\in K$.  Further, there exists $w'\in K$ such that $v_3w'\in E(G)$.  Now we claim that $w'\nin C$.  Suppose not, then there exists $w''\in C\cap K$ such that \NC{w''}{w'} induces a \K14.  Similar to the argument with respect to the vertex $s$, for the vertex $w'$, we argue that $v_1w',v_2w'\in E(G)$, and $\{w'\}\in \mathbb{P}_1$.  Let $w'=w_i,i\le k$.  Then we construct a path $Q'=(w_1,w_2,\ldots,w_{i-1},w_{i+1},\ldots,w_k,w_{k+1},x_1,w_{k+2},w_{k+3},x_2,w_{k+4},\ldots)$.   Now we obtain $P_c=(z,y,w',v_3,v,v_2,\overrightarrow{Q'},v_1,w\overrightarrow{C})$ as a Hamiltonian path in $G$.  If $\mathbb{P}_2=\emptyset$, then a $(w'\overrightarrow{P_c})$ is a Hamiltonian path in $G$.\\
\emph{Case 2:} $v\in C$.
Let $v_2,v_3\in N^I(v)\cap C$.  Then note that there exists $v_1\in N^I(v)$ such that $v_1\nin C$.  Clearly, for all $w\in K\setminus C$, $wv_1\in E(G)$.  Since $K$ is a maximal clique, it follows that there exists $x\in C\cap K$ such that $v_1x\nin E(G)$.  We now claim that $\mathbb{P}_{n\ge 4}=\emptyset$, $n\ne k$.  Suppose there exists a path $P_a\ne P$, $P_a\in \mathbb{P}_{n\ge 4}$, then there exists a vertex $u\in P_a\cap K$ such that in $P_a$, $d^I(u)=2$.  We already observed that $v_1u\in E(G)$.  Further, \NC{u}{x} induces a \K14, a contradiction.  Thus such a path $P_a$ does not exist.
%
If $\mathbb{P}_2\ne\emptyset$,  then let $(y,z)\in\mathbb{P}_2$, $y\in K$.  Then $(\overrightarrow{C}v,v_1,\overrightarrow{Q},y,z)$ is a Hamiltonian path in $G$.  If  $\mathbb{P}_2=\emptyset$, then $(\overrightarrow{C}v,v_1,\overrightarrow{Q})$ is a Hamiltonian path in $G$.  This completes the case analysis and a proof of the lemma. $\hfill\qed$
\end{proof} 
%
%
We work on a \K14-free split graph $G$ with $\Delta^I=3$, satisfying Property A, with $G'$, $H'$ and $\mathbb{S}_f$ as defined previously.  If $G$ has a short cycle, then by Lemma \ref{HPoneSC}, $G$ has a Hamiltonian path.  If $G$ has no short cycles, then note that there exists at most 2 $I$-$K$ paths in $\mathbb{S}_f$.  For the following claims, we shall consider such a $G$ with no short cycle.  The structural study of paths in $\mathbb{S}_f$ is deep, which is the highlight of this chapter.  Now we shall present some claims to show the structural observations of paths in $\mathbb{S}_f$.  
\begin{cl}\label{xunivge10}
 If there exists a path $P_a\in \mathbb{P}_k,k\ge10$ such that $P_a=P(w_1,\ldots,w_{j'};$ $x_1,\ldots,x_{j}),$ $j+1\ge j'\ge j\ge5$, then there exists $v_1\in N^I(v)$ such that $v_1w_i\in E(G), {2\le i\le j}$.   
 \end{cl}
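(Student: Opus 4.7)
The plan is to pin down the exact $I$-neighbourhood of every $w_i$ with $2\le i\le j$ and then use $K_{1,4}$-freeness to force all these neighbourhoods to contain a common element of $N^I(v)$.

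First, I would establish the local structure. Because $P_a$ lies in $H'$, each $x_l$ satisfies $d_{G'}(x_l)\le 2$; since $x_l\in I$ and $\{v_1,v_2,v_3\}\subseteq I$, deleting $N^I(v)$ cannot have removed any neighbour of $x_l$, so $d_G(x_l)=2$ and $N_G(x_l)=\{w_l,w_{l+1}\}$. For any index $i$ with $2\le i\le j$, whether $P_a$ is of type $K$-$K$ or $I$-$K$, the path itself supplies $x_{i-1},x_i$ as $I$-neighbours of $w_i$; Corollary~\ref{cor1} adds at least one neighbour from $N^I(v)$; and $\Delta^I=3$ forbids a fourth. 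Hence $N^I(w_i)=\{x_{i-1},x_i,v_{a_i}\}$ for a uniquely determined $v_{a_i}\in\{v_1,v_2,v_3\}$.

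Next, I would propagate via forbidden $K_{1,4}$'s to obtain $v_{a_i}=v_{a_{i+2}}$ for $2\le i\le j-2$ and $v_{a_i}=v_{a_{i+3}}$ for $2\le i\le j-3$. For the $+3$ identity, suppose $v_{a_i}\ne v_{a_{i+3}}$ and look at the set $\{w_i,x_{i-1},x_i,v_{a_i},w_{i+3}\}$. The vertex $w_i$ is joined to the first three by the previous step and to $w_{i+3}$ through the clique $K$. Among the four leaves, $x_{i-1},x_i,v_{a_i}$ are pairwise non-adjacent since they all lie in $I$; $w_{i+3}$ is non-adjacent to $x_{i-1}$ and to $x_i$ by the previous step; and $w_{i+3}\not\sim v_{a_i}$, because the unique $N^I(v)$-neighbour of $w_{i+3}$ is $v_{a_{i+3}}\ne v_{a_i}$. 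Thus the set induces a $K_{1,4}$, contradicting the hypothesis. Replacing $w_{i+3}$ by $w_{i+2}$ gives the $+2$ identity verbatim.

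Finally, I would merge the two chains using $j\ge 5$. The $+2$ identity partitions $\{v_{a_i}:2\le i\le j\}$ into at most two classes according to the parity of $i$, while the $+3$ identity applied at $i=2$ gives $v_{a_2}=v_{a_5}$, collapsing the two classes into one. Relabelling $N^I(v)$ so that the common value is $v_1$, we conclude $v_1w_i\in E(G)$ for every $2\le i\le j$. The main obstacle is the first step: one has to verify that the $I$-neighbourhood of $w_i$ contains no stray vertex outside $V_a$ and no second vertex of $N^I(v)$, but $\Delta^I=3$ combined with Corollary~\ref{cor1} pins this down cleanly, after which the remaining two steps are routine induced-subgraph checks and a small combinatorial merging argument.
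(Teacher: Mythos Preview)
Your argument is correct and matches the paper's: both pin down $N^I(w_i)=\{x_{i-1},x_i,v_{a_i}\}$ via $\Delta^I=3$ together with Corollary~\ref{cor1}, and then use the induced $K_{1,4}$ on $N^I(w_i)\cup\{w_i,w_l\}$ whenever $|i-l|\ge 2$ to force a common element of $N^I(v)$. The paper states this for arbitrary non-consecutive pairs $w_i,w_l$ and concludes from $|\{w_2,\dots,w_j\}|\ge 4$; you phrase the same propagation as separate $+2$ and $+3$ chains and then merge them at $i=2,5$.

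One small caution: the assertion $d_G(x_l)=2$ need not hold for every $x_l$, since paths in $\mathbb{S}_f$ may incorporate $I$-vertices of degree $\ge 3$ during the second stage of the construction in Theorem~\ref{delta2hamilpath} (so $P_a$ need not lie entirely in $H'$). This is harmless for your proof, because you never actually use it: the non-adjacency $w_{i+3}\not\sim x_{i-1},x_i$ already follows from your determination $N^I(w_{i+3})=\{x_{i+2},x_{i+3},v_{a_{i+3}}\}$, which depends only on $\Delta^I=3$ and Corollary~\ref{cor1}.
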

\begin{proof}
First we show that for any two vertices $w_i,w_l\in K$, $2\le i,l\le j$, $|i-l|>1$; $v_1w_i, v_1w_l\in E(G)$.  By Corollary \ref{cor1}, clearly there exists $v_1\in N(v)$ such that $v_1w_i\in E(G)$.  If $v_1w_l\nin E(G)$, then by Corollary \ref{cor1}, $v_2w_l\in E(G)$ or $v_3w_l\in E(G)$ is true.  It follows that \NC{w_i}{w_l} induces a \K14.  Since $|\{w_2,\ldots,w_j\}|\ge4$, for every $2\le i\le j$, $v_1w_i\in E(G)$.   $\hfill\qed$ 
\end{proof} 
\begin{cl} \label{noP12}
$\mathbb{P}_{i\ge12}=\emptyset$.
\end{cl}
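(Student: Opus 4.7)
The plan is to suppose for contradiction that some $P_a = P(w_1,\dots,w_{j'};x_1,\dots,x_j) \in \mathbb{P}_k$ with $k \geq 12$ exists, and to contradict the maximality of the clique $K$. Since $k = j + j'$ with $j' \in \{j, j+1\}$, the bound $k \geq 12$ forces $j \geq 6$, which is above the threshold of Claim~\ref{xunivge10}. That claim supplies $v_1 \in N^I(v)$ with $v_1 w_i \in E(G)$ for every $2 \leq i \leq j$. Combined with $\Delta^I = 3$ and the fact that $x_{i-1}, x_i, v_1$ are three distinct $I$-neighbors of $w_i$, we conclude $N^I(w_i) = \{x_{i-1}, x_i, v_1\}$ for each such $i$.

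The key step is to extend the adjacency to $v_1 w \in E(G)$ for \emph{every} $w \in K$. The $K$-vertices not already covered are $w_1$, $w_{j'}$ in the $K$-$K$ case, and all $w \in K \setminus V(P_a)$. For any such witness $w$, I would assume $v_1 w \notin E(G)$ and pick a middle index $i$ (for example $i = 4$, which is available thanks to $j \geq 6$) such that $x_{i-1}$ and $x_i$ are $V_a^{G'}$-vertices with $G$-neighborhoods $\{w_{i-1}, w_i\}$ and $\{w_i, w_{i+1}\}$ respectively. Since $w \notin \{w_{i-1}, w_i, w_{i+1}\}$, the vertex $w$ is adjacent to neither $x_{i-1}$ nor $x_i$; together with $v_1 w \notin E(G)$ and the fact that $w_i$ is adjacent to all four of $x_{i-1}, x_i, v_1, w$, the set $\{w_i, x_{i-1}, x_i, v_1, w\}$ induces a $K_{1,4}$, contradicting the $K_{1,4}$-freeness of $G$.

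Consequently $v_1$ is adjacent to every vertex of $K$, so $K \cup \{v_1\}$ is a clique of size $|K|+1$, contradicting the maximality of $K$. This rules out any $P_a \in \mathbb{P}_k$ with $k \geq 12$ in $\mathbb{S}_f$, establishing the claim.

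The main obstacle is to ensure that the chosen $x_{i-1}, x_i$ are genuinely original $V_a^{G'}$-vertices and not $I'^{G'}$-vertices inserted during the second stage of the path-collection construction from the proof of Theorem~\ref{delta2hamilpath}. An inserted $I'^{G'}$-vertex can carry additional $K$-neighbors -- possibly including the witness $w$ -- which would break the induced $K_{1,4}$. Navigating this carefully, using the fact that merging-stage insertions occur only at $K$-endpoints of component paths and that $j \geq 6$ leaves several middle indices from which to pick, is where the proof's care lies.
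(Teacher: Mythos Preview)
Your overall strategy --- extend $v_1$'s adjacency to every vertex of $K$ and then contradict the maximality of the clique --- is the same as the paper's. The gap lies in how you certify that the witness $w$ is non-adjacent to both $x_{i-1}$ and $x_i$. You try to achieve this by choosing $i$ so that $x_{i-1}$ and $x_i$ are original $V_a^{G'}$-vertices, whence $N_G(x_{i-1}) = \{w_{i-1}, w_i\}$ and $N_G(x_i) = \{w_i, w_{i+1}\}$. But such an index need not exist: the second-stage merges in the construction of $\mathbb{S}_f$ (Theorem~\ref{delta2hamilpath}) can manufacture a long path whose $I$-vertices are \emph{all} of degree at least $3$ in $G$. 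For instance, if a block of $K'$-vertices $c_1, \ldots, c_m$ (each a singleton after the first stage) is successively glued together by $I'$-vertices $y_1, \ldots, y_{m-1}$, the resulting path $(c_1, y_1, c_2, y_2, \ldots)$ has every $I$-vertex high-degree. Your ``several middle indices from which to pick'' rescue therefore does not work in general.

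The fix --- and this is what the paper does --- is to argue from the side of $w$ rather than from the side of the $x_i$'s. Since $v_1 w \notin E(G)$, Corollary~\ref{cor1} forces $w$ adjacent to $v_2$ or $v_3$; combined with $\Delta^I_G = 3$, the vertex $w$ has at most two neighbours among $x_1, \ldots, x_j$. With $j \geq 6$ this guarantees two consecutive indices $r-1, r$ with $w x_{r-1}, w x_r \notin E(G)$, and then $N^I(w_r) \cup \{w_r, w\} = \{w_r, x_{r-1}, x_r, v_1, w\}$ is the required induced $K_{1,4}$. The paper phrases this contrapositively (if no such $K_{1,4}$ arises then $w$ is adjacent to at least three of the $x_i$, so $d^I(w) \geq 4$, contradicting $\Delta^I = 3$) and treats $w_1$, $w_{j'}$ by the same counting idea before turning to $s \in K \setminus V(P_a)$, but the mechanism is identical. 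Once you replace your low-degree manoeuvre by this degree bound on $w$, your argument goes through.
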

\begin{proof} 
Assume for a contradiction there exists a path $P_{a}\in \mathbb{P}_i,{i\ge12}$.  Let $P_a=P(w_1,\ldots,w_{j'};x_1,\ldots,x_{j})$, ${j+1\ge j'\ge j\ge6}$.  From Claim \ref{xunivge10} there exists $v_1\in N^I_G(v)$ such that $v_1w_k\in E(G),{2\le k\le j}$.  We now claim that $v_1w_1\in E(G)$.  Suppose not, then from Corollary \ref{cor1}, $v_2w_1\in E(G)$ or $v_3w_1\in E(G)$.  Further, $w_1x_3\in E(G)$, otherwise \NC{w_3}{w_1} or \NC{w_4}{w_1} has an induced \K14.  Similarly, $w_1x_5\in E(G)$.  It follows that \NV{w_1} has an induced \K14, a contradiction.  Thus $v_1w_1\in E(G)$.  If $P_a$ is an odd path, then similar arguments with respect to $w_1$ holds good for the vertex $w_{j'}$, and hence $v_1w_{j'}\in E(G)$.
Since the clique is maximum in $G$, there exists $s\in K$ such that $v_1s\notin E(G)$.  Further, there exists at least three vertices in $x_1,\ldots,x_{j}$ adjacent to $s$, otherwise, for some $2\le r\le j$, $N^I_G(w_r)\cup \{w_r,s\}$ induces a $K_{1,4}$.  Finally, from Corollary \ref{cor1}, either $v_2s\in E(G)$ or $v_3s\in E(G)$.  It follows that $N^I_G(s)\cup \{s\}$ induces a $K_{1,4}$, a contradiction.  Thus such a path $P_a$ does not exist.  This completes a proof of the claim. $\hfill \qed$ 
\end{proof}
%
\begin{cl}\label{xunivtwopathseven}
Let $P_a=P(w_1,\ldots,w_{i'};x_1,\ldots,x_{i}), {i+1\ge i'\ge i\ge2}$, and $P_b=P(s_1,$ $\ldots,s_{j'};t_1,\ldots,t_{j})$, $j+1\ge j'\ge j \ge2$ be arbitrary paths in $\mathbb{S}_f$.  Then there exists $v_1\in N^I(v)$ such that $\forall~{2\le l\le i}$, $v_1w_l\in E(G)$, and $\forall~{2\le m\le j}$,  $v_1s_m\in E(G)$. 
\end{cl}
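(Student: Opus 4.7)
The plan is to find a single vertex $v_1\in N^I(v)=\{v_1,v_2,v_3\}$ that is adjacent to every interior clique vertex of both $P_a$ and $P_b$, proceeding in two stages: first fixing a unique ``partner'' in $N^I(v)$ for each such interior vertex, then forcing all partners to coincide via a forbidden-$K_{1,4}$ argument.

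For the partner stage, I first observe that $v$ is a singleton $\{v\}$ in $\mathbb{S}_f$: since $d^I_{G'}(v)=0$, the vertex $v$ lies outside $V(H')$, hence outside every nontrivial path of $H'$, and stage~2 of the path-collection construction of Theorem~\ref{delta2hamilpath} cannot absorb $v$ because no vertex of $I'$ is adjacent to $v$ in $G'$. Consequently, for every interior clique vertex $w_l$ with $2\le l\le i$ we have $w_l\ne v$, so Corollary~\ref{cor1} yields some $v_p\in N^I(v)$ with $v_pw_l\in E(G)$. Because $\Delta^I_G=3$ and $\{x_{l-1},x_l\}\subseteq N^I_G(w_l)$ already fills two of the three $I$-neighbour slots of $w_l$, this $v_p$ is unique; denote it by $v_{p_l}$. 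The identical argument produces a unique partner $v_{q_m}$ for each $s_m$ with $2\le m\le j$.

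For the coincidence stage, fix any pair $l,m$ and consider $S=\{w_l,v_{p_l},x_{l-1},x_l,s_m\}$. The vertex $w_l$ is adjacent to all four others: to $v_{p_l}$ by construction, to $x_{l-1}$ and $x_l$ as path-neighbours in $P_a$, and to $s_m$ because both lie in the clique $K$. The three leaves $v_{p_l},x_{l-1},x_l$ lie in the independent set $I$ and are therefore pairwise non-adjacent. For the fourth leaf $s_m$, we invoke the uniqueness just established: $N^I_G(s_m)=\{t_{m-1},t_m,v_{q_m}\}$. Since $\mathbb{S}_f$ consists of vertex-disjoint paths, when $P_a\ne P_b$ the sets $\{x_{l-1},x_l\}$ and $\{t_{m-1},t_m\}$ are disjoint; and since $v_{p_l}\in N^I(v)$ was removed in forming $G'$ whereas $t_{m-1},t_m\in I_{G'}$, we also have $v_{p_l}\notin\{t_{m-1},t_m\}$. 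Hence $s_m$ is non-adjacent to each of $v_{p_l},x_{l-1},x_l$ precisely when $q_m\ne p_l$, in which case $S$ induces a $K_{1,4}$ centred at $w_l$, a contradiction. Thus $p_l=q_m$ for all valid $l,m$, and this common index delivers the required $v_1$.

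The main obstacle is the case $P_a=P_b$, where vertex-disjointness fails and two interior vertices of the \emph{same} path might try to share neighbours $x_{l-1},x_l$ with $\{x_{m-1},x_m\}$. Here the same $K_{1,4}$ move still succeeds for index pairs with $|l-m|\ge 2$ (essentially the mechanism of Claim~\ref{xunivge10}), which settles $i\ge 4$. For the residual short-path situations $i\in\{2,3\}$ the conclusion either is trivial (only one interior clique vertex) or requires bootstrapping through a third member of $\mathbb{S}_f$ -- such a member exists by virtue of the size bound $|K|\ge |I|-1\ge 8$ in Property~A together with $\Delta^I_{G'}=2$, which forces $\mathbb{S}_f$ to contain additional paths or singletons -- so that two applications of the two-path argument propagate partner-equality transitively and pin down all $p_l$'s to one common index.
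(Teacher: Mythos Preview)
Your first two paragraphs are correct and amount to the paper's own argument, just reframed: the paper picks a $v_1\in N^I(v)$ adjacent to $w_2$, uses the set $N^I(w_2)\cup\{w_2,s_m\}$ to force $v_1s_m\in E(G)$ for all $2\le m\le j$, and then turns around and uses $N^I(s_2)\cup\{s_2,w_l\}$ to force $v_1w_l\in E(G)$ for $3\le l\le i$. Your ``unique partner'' language packages the same $K_{1,4}$ contradiction, and your explicit check that $v$ is a singleton in $\mathbb{S}_f$ (so that $w_l,s_m\ne v$ and Corollary~\ref{cor1} applies) is a detail the paper leaves implicit.

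Your final paragraph, however, is chasing a non-issue. The claim is stated for two paths $P_a$ and $P_b$ in $\mathbb{S}_f$ with disjoint vertex labels, and every application of the claim in the paper (Claims~\ref{P8_no_P4}, \ref{2P6_no_P4}, \ref{HP_2P7_2P6_P7P6}, etc.) takes $P_a\ne P_b$. The single-path statement is handled separately as Claim~\ref{xunivge10}. So there is no ``main obstacle'' at $P_a=P_b$, and your sketch of bootstrapping through a third member of $\mathbb{S}_f$ --- which as written is vague about exactly which auxiliary path you use and why it must exist with the right size --- is not needed. Drop that paragraph and your proof is complete.
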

\begin{proof}
From Corollary \ref{cor1}, there exists $v_1\in N^I(v)$ such that $v_1w_2\in E(G)$.  If $v_1s_m\nin E(G)$, $2\le m\le j$ then by Corollary \ref{cor1}, $v_2s_m\in E(G)$ or $v_3s_m\in E(G)$. It follows that \NC{w_2}{s_m} induces a \K14.  Thus $v_1s_m\in E(G)$.  If path $P_a$ has  size more than 5, then for every $3\le l\le i$, $v_1w_l\in E(G)$.  Suppose not, then by Corollary \ref{cor1}, $v_2w_l\in E(G)$ or $v_3w_l\in E(G)$.  It follows that \NC{s_2}{w_l} induces a \K14.  Therefore, we conclude that for all possible $l,m$; $v_1w_l,v_1s_m\in E(G)$,  and the claim follows. $\hfill\qed$
\end{proof}
\begin{corollary} \label{xunivthreepathseven}
(of Claim \ref{xunivtwopathseven}) If $P_a=P(w_1,\ldots,w_{l'};x_1,\ldots,x_{l}), {l+1\ge l'\ge l\ge2}$, $P_b=P(s_1,\ldots,s_{m'};t_1,\ldots,$ $t_{m})$, $m+1\ge m'\ge m \ge2$ and $P_c=P(q_1,\ldots,q_{n'};$ $r_1,\ldots,r_{n})$, $n+1\ge n'\ge n \ge2$ are arbitrary paths in $\mathbb{S}_f$, then there exists $v_1\in N^I(v)$ such that $\forall~{2\le i\le l}$, $v_1w_i\in E(G)$, $\forall~{2\le j\le m}$,  $v_1s_j\in E(G)$ and $\forall~{2\le k\le n}$,  $v_1q_k\in E(G)$.
\end{corollary}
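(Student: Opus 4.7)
The plan is to reduce the three-path statement to a single application of Claim \ref{xunivtwopathseven}, followed by one additional round of the same contradiction argument used inside that claim. First, I would invoke Claim \ref{xunivtwopathseven} on the pair $(P_a, P_b)$ to produce a vertex $v_1 \in N^I(v)$ such that $v_1 w_i \in E(G)$ for every $2 \le i \le l$ and $v_1 s_j \in E(G)$ for every $2 \le j \le m$. In particular, $v_1 w_2 \in E(G)$, so $v_1 \in N^I(w_2)$; this pins down the candidate vertex for the remainder of the argument.

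Next, I would show that this same $v_1$ is adjacent to every $q_k$ with $2 \le k \le n$. Suppose for contradiction that $v_1 q_k \notin E(G)$ for some such $k$. By Corollary \ref{cor1} applied to $v$ and $q_k$, at least one of $v_2 q_k$, $v_3 q_k$ lies in $E(G)$, so $q_k$ has a neighbour in $N^I(v) \setminus \{v_1\}$. The configuration consisting of $w_2$ and $q_k$ is now identical to the configuration of $w_2$ and $s_m$ exploited in the proof of Claim \ref{xunivtwopathseven}: the set $N^I(w_2) \cup \{w_2, q_k\}$ induces a $K_{1,4}$ centred at $w_2$, since $w_2, q_k \in K$ are adjacent, $w_2$ is adjacent to all three vertices of $N^I(w_2) \subseteq I$, and $q_k$ is non-adjacent to $v_1 \in N^I(w_2)$. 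This contradicts the $K_{1,4}$-freeness of $G$, forcing $v_1 q_k \in E(G)$ for every $k$ in the stated range.

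I do not anticipate any genuine obstacle; the proof is a direct re-run of the argument of Claim \ref{xunivtwopathseven} with $P_c$ playing the role of the second path. The only subtlety worth explicitly noting is the compatibility of the choice of $v_1$: since Claim \ref{xunivtwopathseven} always anchors $v_1$ via adjacency to $w_2$, and $w_2$ (together with the hypothesis $v \in V_3$) is common to all three pairwise invocations one could imagine, the single $v_1$ obtained from $(P_a, P_b)$ automatically serves for $P_c$ as well, and no separate selection is required.
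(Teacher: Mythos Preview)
Your proposal is correct and matches the paper's intended argument; the paper states the result as an immediate corollary of Claim \ref{xunivtwopathseven} with no separate proof, and your approach---fix $v_1$ via $v_1w_2\in E(G)$ from the two-path claim, then rerun the same $N^I(w_2)\cup\{w_2,q_k\}$ contradiction for the third path---is exactly how that corollary unfolds. One tiny point worth making explicit (the paper's proof of Claim \ref{xunivtwopathseven} glosses over it too): the induced $K_{1,4}$ requires $q_k$ non-adjacent to all of $N^I(w_2)=\{x_1,x_2,v_1\}$, not just $v_1$; this holds because under the contradiction hypothesis $q_k$ already has $I$-neighbours $r_{k-1},r_k$ and one of $v_2,v_3$, so $\Delta^I=3$ forces $q_kx_1,q_kx_2\notin E(G)$.
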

\begin{cl}\label{P8_no_P4} 
 If there exists $P_a\in \mathbb{P}_{k\ge8}$, then there does not exist $P_b$ such that $|P_b|\ge4$.
\end{cl}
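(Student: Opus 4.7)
Assume for contradiction that $P_a\in\mathbb{P}_{k\ge 8}$ and $P_b$ with $|P_b|\ge 4$ both lie in $\mathbb{S}_f$; the plan is to derive a $K_{1,4}$ in $G$ by building a configuration around the private $I$-neighbor $v_1$ of $v$ supplied by Claim \ref{xunivtwopathseven}. Using the notation of that claim, write $P_a=P(w_1,\ldots,w_{i'};x_1,\ldots,x_i)$ with $i\ge 4$ and $P_b=P(s_1,\ldots,s_{j'};t_1,\ldots,t_j)$ with $j\ge 2$. First I would invoke Claim \ref{xunivtwopathseven} to obtain $v_1\in N^I(v)$ with $v_1w_\ell\in E(G)$ for every $2\le\ell\le i$ and $v_1s_m\in E(G)$ for every $2\le m\le j$. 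Since $\Delta^I=3$, the three known $I$-neighbors of each internal clique-vertex of $P_a$ (resp.\ $P_b$) exhaust its $I$-neighborhood: $N^I(w_\ell)=\{x_{\ell-1},x_\ell,v_1\}$ and $N^I(s_m)=\{t_{m-1},t_m,v_1\}$. Because each $x_r,t_r$ lies in $V_a$ and already spends its two neighbors inside its own path, we also have $N_G(x_r)\subseteq\{w_r,w_{r+1}\}$ and $N_G(t_r)\subseteq\{s_r,s_{r+1}\}$.

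The key intermediate step is to push the adjacency of $v_1$ across the clique endpoints $w_1,s_1,w_{i'},s_{j'}$ of the two paths. If, say, $v_1w_1\notin E(G)$, then since $w_1\notin P_b$ and $N_G(t_1)\cup N_G(t_2)\subseteq P_b$ we have $w_1t_1,w_1t_2\notin E(G)$; together with $v_1t_1,v_1t_2\notin E(G)$ (all three lying in $I$), the set $\{s_2,t_1,t_2,v_1,w_1\}$ induces a $K_{1,4}$ centered at $s_2$, a contradiction. The symmetric argument with $w_2\in P_a$ in the role of $s_2$ yields $v_1s_1\in E(G)$; the other endpoints $w_{i'}$ (when $P_a$ is a $K$-$K$ path) and $s_{j'}$ (when $P_b$ is a $K$-$K$ path) are handled identically.

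Having secured $v_1$ adjacent to every vertex of $(P_a\cup P_b)\cap K$, I would pick $z\in K$ with $v_1z\notin E(G)$ (which exists since $K$ is maximum and $v_1\notin K$); by the previous step $z\notin P_a\cup P_b$. Now the five-vertex set $\{w_2,x_1,x_2,v_1,z\}$ has $w_2$ adjacent to each of the other four; the $I$-vertices $x_1,x_2,v_1$ are pairwise non-adjacent; $v_1z\notin E(G)$ by the choice of $z$; and $zx_1,zx_2\notin E(G)$ because $N_G(x_r)\subseteq\{w_r,w_{r+1}\}\subset P_a$ while $z\notin P_a$. Hence $\{w_2,x_1,x_2,v_1,z\}$ induces a $K_{1,4}$, the desired contradiction.

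The main obstacle is the intermediate step: inside a single long $P_a$ (as in Claim \ref{noP12}) one needs $x_3$ and $x_5$ to force $v_1w_1\in E(G)$, but here $P_a$ can have only $i=4$, so that route is unavailable. The co-existence of $P_b$ supplies the external $K_{1,4}$-witness $\{s_2,t_1,t_2,v_1,w_1\}$ required when $P_a$ is short, and it is precisely this interaction between the two paths that delivers the claim.
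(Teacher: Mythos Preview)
Your argument rests on the claim that ``each $x_r,t_r$ lies in $V_a$ and already spends its two neighbors inside its own path, so $N_G(x_r)\subseteq\{w_r,w_{r+1}\}$.'' This is not justified. The collection $\mathbb{S}_f$ is the output of the two-stage procedure of Theorem~\ref{delta2hamilpath} applied to $G'$: stage~1 starts from maximal paths of $H'$ (whose $I$-vertices indeed lie in $V_a'$), but stage~2 then splices in every vertex of $I'=\{u\in I(G'):d_{G'}(u)\ge 3\}$ between two $K$-endpoints. Thus an internal $I$-vertex $x_r$ of a path in $\mathbb{S}_f$ may perfectly well have $d_{G'}(x_r)\ge 3$, hence $d_G(x_r)\ge 3$, and can be adjacent to clique vertices outside $P_a$. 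The paper itself exploits exactly this: in its proof it \emph{forces} adjacencies such as $w_1x_3\in E(G)$, which would be impossible under your degree-two assumption.

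This single false premise breaks both of your steps. In the intermediate step, the set $\{s_2,t_1,t_2,v_1,w_1\}$ need not induce a $K_{1,4}$, because $w_1t_1$ or $w_1t_2$ may be edges of $G$. In the final step, $\{w_2,x_1,x_2,v_1,z\}$ need not induce a $K_{1,4}$, because $zx_1$ or $zx_2$ may be edges. The paper's proof avoids this pitfall by working the other way round: rather than bounding the degree of the $I$-vertices, it uses $N^I(w_\ell)=\{x_{\ell-1},x_\ell,v_1\}$ (which you correctly note) to force $w_1$, $s_1$, and then the non-neighbour $w'$ of $v_1$, to pick up one $I$-neighbour from each of several disjoint triples, until $d^I$ of that vertex exceeds~$3$. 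Your ``dual'' shortcut does not go through; you need to follow the accumulation-of-$I$-neighbours route.
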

\begin{proof}
Assume for a contradiction that there exists such a path $P_b\in \mathbb{P}_j, j\ge4$.  Let $P_a=(w_1,\ldots,w_{l'};x_1,\ldots,x_l)$, $l+1\ge l'\ge l\ge4$
and $P_b=(s_1,\ldots,s_{r'};t_1,\ldots,t_{r})$,  $r+1\ge r'\ge r\ge2$. 
From Claim \ref{xunivtwopathseven}, $v_1w_i\in E(G),2\le i\le{l}$ and $v_1s_i\in E(G),2\le i\le{r}$. 
Now we claim $v_1w_1\in E(G)$.  Otherwise, by Corollary \ref{cor1}, $v_2w_1$ or $v_3w_1$ is in $E(G)$.  Observe that $w_1x_3\in E(G)$, otherwise $N^I(w_3)\cup \{w_3,w_1\}$ or $N^I(w_4)\cup \{w_4,w_1\}$ induces a $K_{1,4}$.  Further, either $w_1t_1\in E(G)$ or $w_1t_2\in E(G)$, otherwise $N^I(s_2)\cup \{s_2,w_1\}$ induces a $K_{1,4}$.  Now $N^I(w_1)\cup \{w_1\}$ induces a $K_{1,4}$, a contradiction and thus $v_1w_1\in E(G)$.  If $P_a$ is an odd path, then using similar argument, we establish $v_1w_{l'}\in E(G)$.  
Now we claim that $v_1s_1\in E(G)$.  Otherwise, by Corollary \ref{cor1}, $s_1v_2\in E(G)$ or $s_1v_3\in E(G)$.  Further, $s_1x_1\in E(G)$ or $s_1x_2\in E(G)$, otherwise $N^I(w_2)\cup \{w_2,s_1\}$ induces a $K_{1,4}$.  Similarly, $s_1x_3\in E(G)$ or $s_1x_4\in E(G)$.  Now \NV{s_1} induces a $K_{1,4}$, a contradiction.  Therefore, $v_1s_1\in E(G)$.  If $P_b$ is an odd path, then using similar argument, we establish $v_1s_{r'}\in E(G)$.  Since the clique is maximal, there exists a vertex $w'\in K$ such that $v_1w'\notin E(G)$.  By Corollary 1, $w'v_2\in E(G)$ or $w'v_3\in E(G)$; without loss of generality, let $w'v_2\in E(G)$.  Also due to the similar reasoning for $s_1$, $w'$ is adjacent to one among $x_1,x_2$, and $w'$ is adjacent to one among $x_3,x_4$.  Further, either $t_1w'\in E(G)$ or $t_2w'\in E(G)$, otherwise $N^I(s_2)\cup \{s_2,w'\}$ induces a $K_{1,4}$.  Finally, $N^I(w')\cup \{w'\}$ induces a $K_{1,4}$, a contradiction.  Therefore, $P_b$ does not exist.  This completes a proof of Claim \ref{P8_no_P4}.  $\hfill\qed$
\end{proof}
%
\begin{cl}\label{HP_P11} 
 If there exists $P_a\in\mathbb{P}_{11}$, then $G$ has a Hamiltonian path.
\end{cl}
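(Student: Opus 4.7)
The plan is to exploit the structure forced by $P_a \in \mathbb{P}_{11}$ together with Claim \ref{P8_no_P4}, which already bounds every other path in $\mathbb{S}_f$ to size at most 3. Since Property A forbids short $I$-$I$ paths, $P_a$ must be a $K$-$K$ path $P(w_1,\ldots,w_6;x_1,\ldots,x_5)$, so $\mathbb{S}_f = \{P_a\} \cup \mathbb{P}_3 \cup \mathbb{P}_2 \cup \mathbb{P}_1$. The target is to show that the fixed $v \in V_3$ has an $I$-neighbor $v_1$ adjacent in $G$ to every $K$-vertex of every path in $\mathbb{S}_f$; once this ``universal hub'' is in hand, $v_1$ lets us splice all paths into a Hamiltonian path.

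My first step would be to invoke Claim \ref{xunivge10} to produce $v_1 \in N^I(v)$ with $v_1 w_i \in E(G)$ for $2 \le i \le 5$, and then upgrade this to $v_1 w_1, v_1 w_6 \in E(G)$ by replaying the argument used inside the proof of Claim \ref{noP12}: assuming $v_1 w_1 \notin E(G)$, Corollary \ref{cor1} forces $v_2 w_1 \in E(G)$ or $v_3 w_1 \in E(G)$; the equalities $N^I(w_3) = \{x_2,x_3,v_1\}$ and $N^I(w_5) = \{x_4,x_5,v_1\}$ (which hold because $\Delta^I = 3$) then force $w_1 x_3, w_1 x_5 \in E(G)$, so $w_1$ would have at least four $I$-neighbors ($x_1, x_3, x_5$ together with one of $v_2,v_3$), contradicting $\Delta^I = 3$. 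The case $v_1 w_6$ is symmetric.

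Next I would extend adjacency of $v_1$ to every $K$-vertex $s$ of every other path in $\mathbb{S}_f$. The vertices $x_1, x_2 \in V(P_a) \cap I$ satisfy $d_G(x_1) = d_G(x_2) = 2$ with neighborhoods $\{w_1,w_2\}$ and $\{w_2,w_3\}$ (they lie on a path in $H'$ and are in $V_a$, and their $I$-degree is unchanged in passing from $G'$ to $G$), so neither is adjacent to $s$; consequently $N^I(w_2) = \{x_1,x_2,v_1\}$, and if $v_1 s \notin E(G)$ then $N^I(w_2) \cup \{w_2, s\}$ induces $K_{1,4}$, a contradiction.

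Using this hub, I would construct the Hamiltonian path by reading $\overrightarrow{P_a}$ from $w_1$ to $w_6$, jumping via the edge $w_6 v_1$ to $v_1$, and then walking through a concatenation $Q$ of the $\mathbb{P}_3$-paths and $\mathbb{P}_1$-singletons (including $v$ itself, which is necessarily a singleton in $\mathbb{P}_1$ because $d^I_{G'}(v) = 0$) via clique edges among the $K$-vertices, entering $Q$ through a $v_1$-neighbour in $K$; the vertices $v_2, v_3$ are attached through their edges to $v$, and any $I$-$K$ paths of $\mathbb{P}_2$ are positioned at the endpoints of the Hamiltonian path. I expect the main obstacle to be the final case analysis that simultaneously accommodates the endpoint constraints of the at most two $\mathbb{P}_2$-paths and any $v_i \in \{v_2, v_3\}$ with $d_G(v_i)=1$; the resolution should mirror the case split in Lemma \ref{HPoneSC}, using Property A's bound on $|\mathbb{P}_2|$ together with Lemma \ref{chvatal_path} applied at $S=\{v\}$ to exclude the pathological configuration where both $v_2$ and $v_3$ have degree 1.
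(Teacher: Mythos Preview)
Your ``universal hub'' step fails, and the failure is structural rather than a detail.  You argue that $v_1$ is adjacent to every $K$-vertex $s$ lying on any path of $\mathbb{S}_f$ other than $P_a$, and you have already shown $v_1w_1,\ldots,v_1w_6\in E(G)$.  But $\mathbb{S}_f$ is a spanning path collection of $G'=G-N^I(v)$, so every vertex of $K$ lies on some path of $\mathbb{S}_f$.  Your claim would therefore make $v_1$ adjacent to all of $K$, i.e.\ $K\cup\{v_1\}$ would be a strictly larger clique, contradicting the standing assumption that $K$ is maximum.  Hence there \emph{must} exist $w'\in K$ with $v_1w'\notin E(G)$.

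The supporting assertion ``$d_G(x_1)=d_G(x_2)=2$ because $x_1,x_2\in V_a$'' is also unjustified.  The collection $\mathbb{S}_f$ is built in two stages (see the proof of Theorem~\ref{delta2hamilpath}): the second stage inserts high-degree $I$-vertices of $G'$ to merge paths, and any internal $x_i$ of $P_a$ can be one of these.  In fact, once you accept the existence of $w'$ with $v_1w'\notin E(G)$, the usual $K_{1,4}$-avoidance at $w_2,w_3,w_4,w_5$ forces $w'x_2,w'x_4\in E(G)$, so $d_G(x_2),d_G(x_4)\ge 3$ and your degree-$2$ hypothesis is provably false.

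The paper's proof exploits exactly this vertex $w'$.  From $w'v_2\in E(G)$ (Corollary~\ref{cor1}) and $w'x_2,w'x_4\in E(G)$ one can enter $P_a$ at $x_2$ via $w'$, traverse $x_2\overrightarrow{P_a}w_6$, jump to $v_1$ (since $v_1w_6\in E(G)$), return to $w_1\overrightarrow{P_a}w_2$, and then use clique edges to sweep up $\mathbb{P}_3\cup\mathbb{P}_1$; the $\mathbb{P}_2$-paths and $v_3$ are handled by a short case split on $d(v_3)$ and $|\mathbb{P}_2|$.  The moral is that the non-neighbour $w'$ is not an obstacle but the pivot that makes the splicing work.
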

\begin{proof}
Let $P_a=(w_1,\ldots,w_6;x_1,\ldots,x_5)$.  From Claim \ref{xunivge10}, there exists a vertex say $v_1\in N^I_G(v)$, such that $v_1w_i\in E(G)$, $2\le i\le 5$.  From the proof of the previous claim, $v_1w_1,v_1w_6\in E(G)$.  Since the clique is maximal, there exists $w'\in K$, such that $w'v_1\notin E(G)$.  By Corollary \ref{cor1}, $w'v_2\in E(G)$ or $w'v_3\in E(G)$.  Without loss of generality, let $w'v_2\in E(G)$.   We claim $w'x_2\in E(G)$ and $w'x_4\in E(G)$, otherwise for some $2\le i\le 5$, $N^I(w_i)\cup \{w_i,w'\}$ induces a $K_{1,4}$.  One among $v_2,x_2,x_4$ is adjacent to $w_1$, otherwise $N^I(w')\cup \{w_1,w'\}$ induces a $K_{1,4}$.  Similar argument holds good with respect to the vertex $w_6$.  From Claim \ref{P8_no_P4}, $\mathbb{P}_j=\emptyset$, $j\ge4$.  If $\mathbb{P}_2\ne\emptyset$, since $G$ satisfies Property A, note that at most two vertices of $I$ have degree 1.  If $d(v_3)=1$, then $|\mathbb{P}_2|\le1$.  Let $P_b\in\mathbb{P}_2$.  Let $\overrightarrow{Q}$ represents an ordering of paths in $\mathbb{P}_1\cup\mathbb{P}_3$, excluding the paths $\{v\}$, $\{w'\}$.  $P=(v_3,v,v_2,w',x_2\overrightarrow{P_a}w_6,v_1, w_1\overrightarrow{P_a}w_2,\overrightarrow{Q},\overrightarrow{P_b})$ or $(\overrightarrow{P}w_2,\overrightarrow{Q})$ is a Hamiltonian path in $G$.  If $d(v_3)>1$, then $|\mathbb{P}_2|\le2$ and let $P_c\in \mathbb{P}_2$, $P_c\neq P_b$.  Note that there exists $w''\in K$ such that $v_3w''\in E(G)$ and $w''$ is adjacent to at least two vertices in $\{x_1,\ldots,x_5\}$, otherwise for some $2\le i\le 5$, \NC{w_i}{w''} induces a \K14. Thus $\{w''\}\in \mathbb{P}_1$.  Let $\overrightarrow{Q'}$ represents an ordering of paths in $\mathbb{P}_1\cup\mathbb{P}_3$, excluding the paths $\{v\}$, $\{w'\}$, $\{w''\}$.  $P'=(\overleftarrow{P_b},w'',v_3,v,v_2,w',x_2\overrightarrow{P_a}w_6,v_1, w_1\overrightarrow{P_a}w_2,\overrightarrow{Q'},\overrightarrow{P_c})$ is a Hamiltonian path in $G$.  If $|\mathbb{P}_2|<2$, then a observe that $(\overrightarrow{P'}w_2,\overrightarrow{Q'})$ or $(w''\overrightarrow{P'})$ or $(w''\overrightarrow{P'}w_2,\overrightarrow{Q'})$ is a spanning path of $G$. This completes a proof of the claim. $\hfill\qed$
\end{proof}
\begin{cl}\label{HC_P10} 
 If there exists $P_a\in\mathbb{P}_{10}$, then $G$ has a Hamiltonian path.
\end{cl}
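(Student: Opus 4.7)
The plan is to adapt the strategy of Claim \ref{HP_P11} to the even ($I$-$K$) path case. Let $P_a \in \mathbb{P}_{10}$; since $|P_a| = 10$ is even, $P_a$ is an $I$-$K$ path and may be written as $P_a = P(w_1,\ldots,w_5;x_1,\ldots,x_5)$, traversing $w_1,x_1,w_2,x_2,w_3,x_3,w_4,x_4,w_5,x_5$, with endpoints $w_1 \in K$ and $x_5 \in I$. First, Claim \ref{xunivge10} (with $j = j' = 5$) supplies $v_1 \in N^I(v)$ such that $v_1 w_i \in E(G)$ for $2 \le i \le 5$. Unlike the odd case of Claim \ref{HP_P11}, only the single $K$-endpoint $w_1$ needs extra attention: I would establish $v_1 w_1 \in E(G)$ by the $K_{1,4}$-avoidance argument of Claim \ref{noP12}, namely, if $v_1 w_1 \notin E$ then Corollary \ref{cor1} forces $v_2 w_1$ or $v_3 w_1 \in E(G)$, and forbidden-$K_{1,4}$ reasoning at $w_3$ (or $w_4$) and at $w_5$ forces $w_1 x_3, w_1 x_5 \in E(G)$, so that $N^I(w_1)\cup\{w_1\}$ induces a $K_{1,4}$, a contradiction.

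Next, by the maximality of $K$, pick $w' \in K$ with $v_1 w' \notin E(G)$; WLOG let $v_2 w' \in E(G)$ by Corollary \ref{cor1}. The same $K_{1,4}$-avoidance argument used in Claim \ref{HP_P11} yields $w' x_2, w' x_4 \in E(G)$. By Claim \ref{P8_no_P4}, no other path in $\mathbb{S}_f$ has size $\ge 4$, so every path in $\mathbb{S}_f \setminus \{P_a\}$ lies in $\mathbb{P}_1 \cup \mathbb{P}_2 \cup \mathbb{P}_3$. A key structural observation is that $x_5$, being the $I$-endpoint of a maximal path in $H'$ whose neighbors all lie in $K$, satisfies $d_G(x_5) = 1$; by Property A the total number of $I$-leaves is at most $2$, so $x_5$ plus at most one of $\{v_3\}$ and the $I$-leaf of a potential $\mathbb{P}_2$-path accounts for them all.

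The Hamiltonian path is then assembled by cases on $d(v_3)$. In Case I ($d(v_3)=1$), Property A forces $\mathbb{P}_2 = \emptyset$, and the spanning path runs from $v_3$ to $x_5$; a candidate is
\[
(v_3,\, v,\, v_2,\, w',\, x_4,\, w_4,\, x_3,\, w_3,\, x_2,\, w_2,\, x_1,\, w_1,\, v_1,\, \overrightarrow{Q},\, w_5,\, x_5),
\]
where $\overrightarrow{Q}$ is any ordered concatenation of the paths in $(\mathbb{P}_1\cup\mathbb{P}_3)\setminus\{\{v\},\{w'\}\}$. Every piece plugs in between $v_1$ and $w_5$ since $v_1$ is adjacent to every vertex of $K\setminus\{w'\}$ (otherwise $K\cup\{v_1\}$ would be a larger clique) and every vertex of $K$ is adjacent to $w_5$. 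In Case II ($d(v_3)>1$), pick $w'' \in K$ with $v_3 w'' \in E(G)$, and argue as in Claim \ref{HP_P11} that $\{w''\} \in \mathbb{P}_1$. If $\mathbb{P}_2 = \{P_b\}$ with $P_b=(y,z)$ and $z \in I$ a leaf, the spanning path is $(z,y,w'',v_3,v,v_2,w',x_4,\ldots,w_5,x_5)$ with $\{w''\}$ removed from $\overrightarrow{Q}$; otherwise it begins at $w''$.

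The main obstacle is the weaving step between $v_1$ and $w_5$ in the construction: one must verify that each singleton of $\mathbb{P}_1$ and each $\mathbb{P}_3$-path $(k_1,i_1,k_2)$ can be linked in series without breaking the adjacency chain, while avoiding the forbidden edge $v_1 w'$ and without double-using $w''$ in Case II. This reduces to checking that $v_1$ is adjacent to every $K$-vertex except $w'$ (a consequence of the maximality of $K$ combined with the adjacency inventory built up from Claim \ref{xunivge10} and Corollary \ref{cor1}); with this in place, together with the absence of paths of size $\ge 4$ in $\mathbb{S}_f\setminus\{P_a\}$ from Claim \ref{P8_no_P4}, the constructions above are valid Hamiltonian paths of $G$.
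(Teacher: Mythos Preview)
Your overall strategy mirrors the paper's: establish $v_1 w_i \in E(G)$ for $1\le i\le 5$, locate $w'\in K$ with $v_1 w'\notin E(G)$ and $w'v_2,w'x_2,w'x_4\in E(G)$, invoke Claim~\ref{P8_no_P4} to eliminate longer paths, use Property~A together with $d(x_5)=1$ to bound $|\mathbb{P}_2|$, and split on $d(v_3)$. The gap is in your weaving step. The assertion that ``$v_1$ is adjacent to every vertex of $K\setminus\{w'\}$ (otherwise $K\cup\{v_1\}$ would be a larger clique)'' inverts the logic of maximality: maximality of $K$ guarantees that \emph{at least one} vertex of $K$ is non-adjacent to $v_1$, not that $w'$ is the \emph{only} such vertex. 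In fact any $u\in K\setminus V(P_a)$ with $v_1u\notin E(G)$ would, by the same $K_{1,4}$-avoidance at $w_2,\dots,w_5$, satisfy $ux_2,ux_4\in E(G)$ and be adjacent to $v_2$ or $v_3$; there is no obstruction to several such vertices existing.

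Your proposed Hamiltonian path uses the edge from $v_1$ to the first $K$-vertex of $\overrightarrow{Q}$ (respectively $\overrightarrow{Q'}$), so the construction fails whenever that vertex happens to be adjacent only to $v_2$. The paper repairs exactly this point: in the case $d(v_3)=1$, it observes via Corollary~\ref{cor1} that the first vertex $u$ of $\overrightarrow{Q}$ is adjacent to $v_1$ \emph{or} $v_2$, and gives two alternative Hamiltonian paths, inserting $\overrightarrow{Q}$ after $v_1$ in the first case and after $v_2$ (between $w'$ and $v$) in the second. In the case $d(v_3)>1$, the paper instead routes $\overrightarrow{Q'}$ after $w''$, so that both ends of $\overrightarrow{Q'}$ meet clique vertices and no $I$--$K$ edge is needed there. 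Your argument can be fixed by adopting the same placement of $\overrightarrow{Q}$ and $\overrightarrow{Q'}$.
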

\begin{proof}
Let $P_a=(w_1,\ldots,w_5;x_1,\ldots,x_5)$.  Similar to the previous proof, there exists a vertex $v_1\in N^I_G(v)$, such that $v_1w_i\in E(G)$, $1\le i\le 5$.  Similar to Claim \ref{HP_P11} we see that there exists $w'\in K$, such that $w'v_1\notin E(G)$ and $w'v_2,w'x_2,w'x_4\in E(G)$.  One among $v_2,x_2,x_4$ is adjacent to $w_1$, otherwise $N^I(w')\cup \{w_1,w'\}$ induces a $K_{1,4}$.  Recall from Claim \ref{P8_no_P4}, $\mathbb{P}_{j\ge4}=\emptyset$.  Since $G$ satisfies Property A, note that at most two vertices of $I$ have degree 1, and observe that $d(x_5)=1$.  Therefore, if $d(v_3)=1$, then $\mathbb{P}_2=\emptyset$.  
Let $\overrightarrow{Q}$ represents an ordering of paths in $\mathbb{P}_1\cup\mathbb{P}_3$, excluding $\{v\}$, $\{w'\}$.  Note that the first vertex $u$ in $\overrightarrow{Q}$ is adjacent to $v_1$ or $v_2$.   Then $(x_5,w_5,v_1,u\overrightarrow{Q},w_1\overrightarrow{P_a}x_4,w',v_2,v,v_3)$ or $(x_5,w_5,v_1, w_1\overrightarrow{P_a}x_4,w',v_2,u\overrightarrow{Q},v,v_3)$ is a Hamiltonian path in $G$.  If $d(v_3)>1$, then $|\mathbb{P}_2|\le1$ and let $P_b\in \mathbb{P}_2$.  Note that there exists $w''\in K$ such that $v_3w''\in E(G)$ and $w''$ is adjacent to at least two vertices in $\{x_1,\ldots,x_5\}$, otherwise for some $2\le i\le 5$, \NC{w_i}{w''} induces a \K14. Thus $\{w''\}\in \mathbb{P}_1$.  Let $\overrightarrow{Q'}$ represents an ordering of paths in $\mathbb{P}_1\cup\mathbb{P}_3$, excluding the paths $\{v\}$, $\{w'\}$, $\{w''\}$. 
Now depending on the presence of $P_b$, $P'=(x_5,w_5,v_1, w_1\overrightarrow{P_a}x_4,w',v_2,v,v_3,w'',\overrightarrow{Q'},\overrightarrow{P_b})$ or $(x_5\overrightarrow{P'}w'',\overrightarrow{Q'})$ is a Hamiltonian path in $G$.  This completes a proof of the claim. $\hfill\qed$
\end{proof}
From here onwards, for producing Hamiltonian paths in the proof of claims, we obtain a special path, termed as \emph{desired path}, which is a path containing the vertices $\{v_1,v_2,v_3\}$ and all (at most two) the $I$-$K$ paths in $\mathbb{S}_f$ along with some $K$-$K$ paths.  Let $\overrightarrow{Q}$ be an ordering of the paths in $\mathbb{S}_f$ which are not included in the desired path.  Depending on the adjacency of the first vertex of $\overrightarrow{Q}$ to $N^I(v)$, $(\overrightarrow{P_a},v_i,\overrightarrow{Q},\overrightarrow{P_b})$ is a Hamiltonian path in $G$, where $(\overrightarrow{P_a},v_i,\overrightarrow{P_b})$, $i\in \{1,2,3\}$ is the desired path, with $|\overrightarrow{P_a}|\ge0$, $|\overrightarrow{P_b}|\ge0$.
\begin{cl}\label{HP_P9}
If there exists $P_a\in\mathbb{P}_9$, then $G$ has a Hamiltonian path.
\end{cl}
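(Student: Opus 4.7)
The plan is to mirror the templates of Claims \ref{HP_P11} and \ref{HC_P10}. Since $|P_a|=9$ is odd and $\mathbb{S}_f$ has no short $I$-$I$ path (ruled out by Property A), $P_a$ must be a $K$-$K$ path $P(w_1,w_2,w_3,w_4,w_5;x_1,x_2,x_3,x_4)$. I would first establish that some $v_1\in N^I(v)$ is adjacent to every $w_i$, then locate $w'\in K$ non-adjacent to $v_1$, identify the two $x_j$-neighbours of $w'$ that determine the split point in $P_a$, assemble a desired path, and finally append the remaining paths of $\mathbb{S}_f$ (which by Claim \ref{P8_no_P4} lie in $\mathbb{P}_1\cup\mathbb{P}_2\cup\mathbb{P}_3$).

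For the first step, Corollary \ref{cor1} applied to $w_3$ gives $v_1\in N^I(v)$ with $v_1w_3\in E(G)$, whence $N^I(w_3)=\{v_1,x_2,x_3\}$ by $\Delta^I\le 3$. Since $x_2$ and $x_3$ each have their $K$-neighbourhood confined to the two adjacent path-vertices, for $w_i\in\{w_1,w_5\}$, if $v_1w_i\notin E(G)$ then $N^I(w_3)\cup\{w_3,w_i\}$ induces a $K_{1,4}$, contradicting $K_{1,4}$-freeness; hence $v_1w_1,v_1w_5\in E(G)$. Repeating the argument with $w_1$ (respectively $w_5$) as the base vertex, whose $N^I$ now contains $v_1$ together with $x_1$ (respectively $x_4$), yields $v_1w_2,v_1w_4\in E(G)$.

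Next, by maximality of $K$ there exists $w'\in K$ with $v_1w'\notin E(G)$, and Corollary \ref{cor1} allows us to assume $v_2w'\in E(G)$. For each $2\le i\le 4$ the $K_{1,4}$-free condition on $N^I(w_i)\cup\{w_i,w'\}$ forces $w'$ adjacent to $x_{i-1}$ or $x_i$, and since $d^I(w')\le 3$ with $v_2\in N^I(w')$, only two $x_j$'s can neighbour $w'$; the admissible pairs are $\{x_2,x_3\}$, $\{x_2,x_4\}$, or $\{x_1,x_3\}$, each of which identifies a split index $k$ for which the desired path $(v_3,v,v_2,w',x_k\overrightarrow{P_a}w_5,v_1,w_1\overrightarrow{P_a}w_k)$ (or its natural variant reaching $w_5$ via $v_1$ after a backward traversal) covers $\{v,v_1,v_2,v_3,w'\}\cup V(P_a)$. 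The remaining paths are ordered into $\overrightarrow{Q}$ and joined to the desired path by clique edges of $K$, with the at-most-one $I$-$K$ path $P_b\in\mathbb{P}_2$ placed so that its $I$-end is the terminal vertex of the Hamiltonian path.

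The main obstacle I foresee is the standard sub-case split on $d(v_3)$, exactly as in Claims \ref{HP_P11} and \ref{HC_P10}. If $d(v_3)=1$ then Property A together with any pendant $x_i$ restricts $|\mathbb{P}_2|$ and constrains where $P_b$ may be attached; if $d(v_3)>1$ then a vertex $w''\in K$ with $v_3w''\in E(G)$ exists, and a further $K_{1,4}$-free argument of the same flavour as for $w'$ shows $w''$ is adjacent to at least two of the $x_j$'s and hence $\{w''\}\in\mathbb{P}_1$, so $w''$ can be pulled out of $\overrightarrow{Q}$ and re-inserted adjacent to $v_3$. Once these placements are verified, every edge used in the constructed Hamiltonian path is a path-edge of $P_a$, a clique edge of $K$, or one of the adjacencies $v_1w_i$, $w'v_2$, $w'x_j$, $v_3w''$ established in the earlier steps.
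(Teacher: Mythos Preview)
Your first step—showing that some $v_1 \in N^I(v)$ is adjacent to all five $w_i$—does not go through, and this is precisely why the paper's proof of this claim is far longer than those of Claims~\ref{HP_P11} and~\ref{HC_P10}.

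The specific error is the assertion that ``$x_2$ and $x_3$ each have their $K$-neighbourhood confined to the two adjacent path-vertices.'' This holds when $x_2,x_3$ are degree-$\le 2$ vertices of $H'$, but the path collection $\mathbb{S}_f$ is built in two stages, and in stage~2 vertices of $I'$ (those with $d_{G'}\ge 3$) are inserted as merge points. So an internal $I$-vertex $x_i$ of $P_a$ may have degree $\ge 3$ in $G'$ and be adjacent to $K$-vertices off the path—in particular to the endpoint $w_1$ (which, being an endpoint, has room for one extra $I$-neighbour in $G'$ besides $x_1$). When $w_1x_2\in E(G)$, the set $N^I(w_3)\cup\{w_3,w_1\}$ no longer induces a $K_{1,4}$, and your deduction of $v_1w_1\in E(G)$ collapses. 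Your follow-up step (using $w_1$ as base to force $v_1w_2$) is independently broken: $w_2$ is always adjacent to $x_1\in N^I(w_1)$, so $N^I(w_1)\cup\{w_1,w_2\}$ can never induce a $K_{1,4}$, and moreover $|N^I(w_1)|$ need not equal~3.

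The paper sidesteps this by only securing $v_1w_2,v_1w_4\in E(G)$: here both base and target are internal, so $N^I_{G'}(w_4)=\{x_3,x_4\}$ is provably disjoint from $N^I_{G'}(w_2)=\{x_1,x_2\}$ and the $K_{1,4}$ argument is sound. It then case-splits on where the non-adjacency of $v_1$ lies—$w_1$ or $w_5$ (Case~1), $w_3$ (Case~2), or some $t\notin P_a$ (Case~3)—and each case spawns further sub-cases on $d(v_3)$ and on the adjacencies of the endpoints of paths in $\mathbb{P}_2\cup\mathbb{P}_3$; the full analysis fills three tables. The shortcut you tried to borrow from Claims~\ref{HP_P11} and~\ref{HC_P10} rests on Claim~\ref{xunivge10}, which needs $j\ge 5$; for $P_a\in\mathbb{P}_9$ one has $j=4$, so that route is simply unavailable.
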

\begin{proof}
Let $P_a=(w_1,\ldots,w_5;x_1,\ldots,x_4)$.  There exists a vertex in $v_1\in N^I(v)$ such that $v_1w_2\in E(G)$.  Note that $v_1w_4\in E(G)$, otherwise \NC{w_2}{w_4} induces a $K_{1,4}$.  Recall from Claim \ref{P8_no_P4}, $\mathbb{P}_{j\ge4}=\emptyset$.  Now we claim that for an arbitrary path $P\in\mathbb{P}_2\cup \mathbb{P}_3$ and $s\in K$ be an end vertex of $P$, then $v_1s\in E(G)$.  Suppose not, then $v_2s\in E(G)$ or $v_3s\in E(G)$.  Further, $s$ is adjacent to either $x_1$ or $x_2$, otherwise \NC{w_2}{s} induces a \K14.  Similarly, $s$ is adjacent to one of $x_3,x_4$, otherwise \NC{w_4}{s} induces a \K14.  Therefore, \NV{s} induces a \K14, a contradiction.  Thus $v_1s\in E(G)$. Note that the above argument is true for any end vertex $s\in K$ of every such paths in $\mathbb{P}_2\cup\mathbb{P}_3$.  Since the clique is maximal, there exists a non-adjacency for $v_1$ in $K$, and based on the non-adjacency, we see the following cases as shown in Tables \ref{tab:HP_P9_1}, \ref{tab:HP_P9_2} and \ref{tab:HP_P9_3}.  This completes a proof of Claim \ref{HP_P9}. $\hfill\qed$
\end{proof}
%
\begin{table}[h!]
\begin{tabular}{ | L{0.18\textwidth} | L{0.8\textwidth} |}
\hline
Case & Arguments\\
\hline
Case 1: $v_1w_1\nin E(G)$ or $v_1w_5\nin E(G)$ &  Without loss of generality, we shall assume $v_1w_1\notin E(G)$.  Note that one of $v_2,v_3$ is adjacent to $w_1$, without loss of generality, let $v_2w_1\in E(G)$.  Further, note that $w_1$ is adjacent to one of $x_3,x_4$, otherwise, \NC{w_4}{w_1} induces a \K14.  \\
\hline
Case 1.1: $d(v_3)=1$ & From Property A, at most 2 vertices of $I$ have degree 1.  Thus, $|\mathbb{P}_2|\le1$.   Let $P_b\in\mathbb{P}_2$.  Further, from Property A, $|I|\ge9$.  Therefore, there exists $P_c\in\mathbb{P}_3$.  Let $\overrightarrow{Q}$ be an ordering of the paths in $\mathbb{P}_1\cup \mathbb{P}_3$ excluding paths $\{v\}$ and $P_c$.  Then $(v_3,v,v_2,w_1\overrightarrow{P_a}w_5,\overrightarrow{Q},\overrightarrow{P_c},v_1,\overrightarrow{P_b})$ is a Hamiltonian path in $G$.  \\
\hline
 Case 1.2: $d(v_3)>1$ & Recall $d^I(w_1)=3$. We claim $v_3$ is not adjacent to $s\in K$ such that $s$ is an end vertex of any path in $\mathbb{P}_2\cup\mathbb{P}_3$.  Recall that $v_1s\in E(G)$.  Suppose $v_3s\in E(G)$, then \NC{w_1}{s} induces a \K14.  Thus $v_3s\nin E(G)$.  It follows that $v_3$ is adjacent to $w_3$ or $w_5$ or a vertex in $\mathbb{P}_1$.  Observe that $|\mathbb{P}_2|\le2$.   To complete the proof, we shall assume that $|\mathbb{P}_2|=2$.   Proof of other two cases are similar.  Let $P_b,P_d\in\mathbb{P}_2$, $y\in P_b\cap K$.    \\
\hline
Case 1.2.1: $v_3w_5\in E(G)$ &   Note $N^I(w_1)\cap N^I(y)\ne\emptyset$.  Thus $y$ is adjacent to one among $\{v_2,x_1,x_3,x_4\}$.  \newline 
If $v_2y\in E(G)$, then $(\overrightarrow{P_b}y,v_2,w_1\overrightarrow{P_a}w_5,v_3,v,v_1,\overrightarrow{P_d})$ is a desired path in $G$.  Let $\overrightarrow{Q}$ be an ordering of the paths in $\mathbb{P}_1\cup \mathbb{P}_3$ excluding $\{v\}$.  Note that $\overrightarrow{Q}$ need to be inserted at appropriate place to get a Hamiltonian path in $G$.  For instance, if the end vertex of $Q$ is adjacent to $v_2\in N^I(v)$, then $(\overrightarrow{P_b}y,v_2,\overrightarrow{Q},w_1\overrightarrow{P_a}w_5,v_3,v,v_1,\overrightarrow{P_d})$ is a Hamiltonian path in $G$.  On a similar way, we obtain Hamiltonian path from all such desired paths. \newline
If $x_1y\in E(G)$, then $(\overrightarrow{P_b}y,x_1,w_1,v_2,v,v_3,w_5\overleftarrow{P_a}w_2,v_1,\overrightarrow{P_d})$ is a desired path in $G$.  \newline
If $x_3y\in E(G)$, then $(\overrightarrow{P_b}y,x_3\overleftarrow{P_a}w_1,v_2,v,v_3,w_5\overleftarrow{P_a}w_4,v_1,\overrightarrow{P_d})$ is a desired path in $G$.  \newline
If $x_4y\in E(G)$, then $(\overrightarrow{P_b}y,x_4,w_5,v_3,v,v_2,w_1\overrightarrow{P_a}w_4,v_1,\overrightarrow{P_d})$ is a desired path in $G$.  
 \\
\hline
Case 1.2.2: $v_3w_3\in E(G)$ & Since $N^I(w_1)\cap N^I(w_4)\neq\emptyset$, $w_1$ is adjacent to either $x_3$ or $x_4$.  If $w_1x_4\in E(G)$, then \NC{w_1}{w_3} induces a \K14.   Thus, $w_1x_3\in E(G)$.  Note $N^I(w_1)\cap N^I(y)\ne\emptyset$ and $N^I(w_3)\cap N^I(y)\ne\emptyset$.  Therefore, $yx_3\in E(G)$.  Further, $w_5$ is adjacent to both $x_3$ and $v_1$, otherwise for some $s\in\{w_1,\ldots,w_4,v,y\}$, \NC{s}{w_5} induces a \K14.  Now $d(x_1)=d(x_2)=d(v_2)=d(v_3)=2$.  \\
\hline
\end{tabular}  \caption{Case analysis for the proof of Claim \ref{HP_P9} }
\label{tab:HP_P9_1}
\end{table}
\begin{table}[h!]
\begin{tabular}{ | L{0.18\textwidth} | L{0.795\textwidth} |}
\hline
Case 1.2.2: $v_3w_3\in E(G)$ & Since $G$ has no short cycles, there exists $w'\in K$ such that $w'$ is an end vertex of a path $P_e\in \mathbb{P}_1\cup\mathbb{P}_3$, and $w'$ is adjacent to at least one vertex in $\{x_1,x_2,v_2,v_3\}$.  Depending on the adjacency of $w'$, we obtain the following desired paths.  
$(\overrightarrow{P_b},\overrightarrow{P_e}w',x_1,w_2,x_2,w_3,v_3,v,v_2,w_1,x_3\overrightarrow{P_a}w_5,v_1,\overrightarrow{P_d})$ 
$(\overrightarrow{P_b},\overrightarrow{P_e}w',x_2,w_2,x_1,w_1,v_2,v,v_3,w_3,x_3\overrightarrow{P_a}w_5,v_1,\overrightarrow{P_d})$  \newline 
$(\overrightarrow{P_b},\overrightarrow{P_e}w',v_2,v,v_3,w_3,x_2,w_2,x_1,w_1,x_3\overrightarrow{P_a}w_5,v_1,\overrightarrow{P_d})$ \newline 
$(\overrightarrow{P_b},\overrightarrow{P_e}w',v_3,v,v_2,w_1\overrightarrow{P_a}w_5,v_1,\overrightarrow{P_d})$ 
 \\
\hline
Case 1.2.3: $v_3z\in E(G)$, $z\nin P_a$ &   
Note that $N^I(w_5)\cap N^I(v)\ne\emptyset$.  We see the following cases.  If $v_1w_5\in E(G)$, then $(\overrightarrow{P_b},z,v_3,v,v_2,w_1\overrightarrow{P_a}w_5,v_1,\overrightarrow{P_d})$ is a desired path in $G$.
If $v_3w_5\in E(G)$, then $(\overrightarrow{P_b},z,v_3,w_5\overleftarrow{P_a}w_1,v_2,v,v_1,\overrightarrow{P_d})$ is a desired path in $G$.  If $v_2w_5\in E(G)$, then we see the following.  Note that $w_5$ is adjacent to one of $N^I(w_2)=\{v_1,x_1,x_2\}$.  If $w_5v_1\in E(G)$, then we obtained a desired path in $G$.  If $w_5x_1\in E(G)$, then $(\overrightarrow{P_b},z,v_3,v,v_2,w_1,x_1,w_5\overleftarrow{P_a}w_2,v_1,\overrightarrow{P_d})$ is a desired path in $G$.  Consider $w_5x_2\in E(G)$.  Recall that $w_1$ is adjacent to one of $x_3,x_4$.   If $w_1x_4\in E(G)$, then $(\overrightarrow{P_b},z,v_3,v,v_2,w_5,x_4,w_1\overrightarrow{P_a}w_4,v_1,\overrightarrow{P_d})$ is a desired path in $G$.  If $w_1x_3\in E(G)$, then we claim that $v_2y\in E(G)$.  Suppose not, then \NC{w_1}{y} or \NC{w_5}{y} induces a \K14.  Now we obtain $(\overrightarrow{P_d},v_1,w_2,x_1,w_1,x_3,w_3,x_2,w_5,x_4,w_4,z,v_3,v,v_2,$ $\overrightarrow{P_b})$ as a desired path in $G$.
  \\ 
\hline
Case 2: $v_1w_3\nin E(G)$ &   In this case we shall assume $v_1w_1,v_1w_5\in E(G)$.   Note that $w_3$ is adjacent to $v_2$ or $v_3$.  Without loss of generality, we shall assume $w_3v_2\in E(G)$.  Recall that all the end vertices of paths in $\mathbb{P}_2\cup \mathbb{P}_3$, in $K$ are adjacent to $v_1$.  Similar to the previous case, if $d(v_3)=1$, then $|\mathbb{P}_2|\le1$, let $P_b\in\mathbb{P}_2$.  Further, there exists $P_c\in\mathbb{P}_3$, let $y\in P_c\cap K$.  Note that $N^I(y)\cap N^I(w_3)\neq\emptyset$.  Depending on the adjacency of $y$ with $\{x_3,x_2,v_2\}$, we obtain desired paths $P_i,P_j,P_k$ as follows. \newline
If $yx_3\in E(G)$, then \newline $P_i=(v_3,v,v_2,w_3,x_2\overleftarrow{P_a}w_1,v_1,w_5\overleftarrow{P_a}x_3,y\overrightarrow{P_c},\overrightarrow{P_b})$.  \newline
If $yx_2\in E(G)$, then \newline $P_j=(v_3,v,v_2,w_3,x_3\overrightarrow{P_a}w_5,v_1,w_1\overrightarrow{P_a}x_2,y\overrightarrow{P_c},\overrightarrow{P_b})$.  \newline
If $yv_2\in E(G)$, then $P_k=(v_3,v,v_2,y\overrightarrow{P_c},w_1\overrightarrow{P_a}w_5,v_1,\overrightarrow{P_b})$.  On the other hand if $d(v_3)>1$, then similar to the previous case, $|\mathbb{P}_2|\le2$.  Let $P_d\in \mathbb{P}_2$, $s\in P_b\cap K$, $s'\in P_d\cap K$.  
 \\ \hline
\end{tabular}  \caption{Case analysis for the proof of Claim \ref{HP_P9} }
\label{tab:HP_P9_2}
\end{table}

\begin{table}[h!]
\begin{tabular}{ | L{0.18\textwidth} | L{0.795\textwidth} |}
\hline
Case 2: $v_1w_3\nin E(G)$ &
Observe that all the vertices in $\{w_1,w_5,s,s'\}$ are adjacent to a vertex in $N^I(w_3)$.    It follows that for every vertex $y\in P_a\cup P_b \cup P_d$ such that $y\in K$, $d^I(y)=3$.  Since $d(v_3)>1$, there exists $z\in K$ such that $z\nin P_a\cup P_b\cup P_d$ and $zv_3\in E(G)$.  From the previous argument $z\in\mathbb{P}_1$.  Depending on the adjacency of $s$ with $\{x_3,x_2,v_2\}$, we obtain a desired path $P$ as follows.   
If $sx_3\in E(G)$, then $P=(\overrightarrow{P_d},z,v_3\overrightarrow{P_i}x_3,s\overrightarrow{P_b})$.
If $sx_2\in E(G)$, then $P=(\overrightarrow{P_d},z,v_3\overrightarrow{P_j}x_2,s\overrightarrow{P_b})$.
If $sv_2\in E(G)$, then $P=(\overrightarrow{P_d},v_1,w_1\overrightarrow{P_a}w_5,z,v_3,v,v_2,s\overrightarrow{P_b})$.
\\
\hline
Case 3: $v_1t\nin E(G)$, $t\nin P_a$ &   In this case we shall assume $v_1w_i\in E(G)$, $1\le i\le5$.   Note that $t$ is adjacent to $v_2$ or $v_3$.  Without loss of generality, we shall assume $tv_2\in E(G)$.  Similar to the previous case, if $d(v_3)=1$, then $|\mathbb{P}_2|\le1$, let $P_b\in\mathbb{P}_2$.  Recall that all the end vertices of paths in $\mathbb{P}_2\cup \mathbb{P}_3$ are adjacent to $v_1$.  We obtain $(v_3,v,v_2,t,\overrightarrow{P_a},v_1,\overrightarrow{P_b})$ as a desired path in $G$.  If $d(v_3)>1$, then we observed that $|\mathbb{P}_2|\le2$, let $P_d\in \mathbb{P}_2$. Note that $t$ is adjacent to at least two vertices in $\{x_1,\ldots,x_4\}$, otherwise for some $w_i$, $2\le i\le 4$, \NC{w_i}{t} induces a \K14.  Since $d(v_3)>1$, there exists $z\in K$ such that $zv_3\in E(G)$.   Let $s\in P_b\cap K$, we observe that $z\ne s$.  Suppose not, then \NC{t}{s} induces a \K14. Similarly, for $s'\in P_d\cap K$, $z\ne s'$.  If $z=w_1$, then $(\overrightarrow{P_b},v_1,w_5\overleftarrow{P_a}w_1,v_3,v,v_2,t,\overrightarrow{P_d})$ is a desired path.  $z=w_5$ is a symmetric case.  If $z\notin P_a$, then from previous arguments, $z\in\mathbb{P}_1$.  We obtain $(\overrightarrow{P_b},z,v_3,v,v_2,t,\overrightarrow{P_a},v_1,\overrightarrow{P_d})$ as a desired path in $G$. \\
\hline
\end{tabular} \caption{Case analysis for the proof of Claim \ref{HP_P9} }
\label{tab:HP_P9_3}
\end{table} 
\begin{cl}\label{HP_P8}
If there exists $P_a\in\mathbb{P}_8$, then $G$ has a Hamiltonian path.
\end{cl}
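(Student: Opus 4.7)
The plan is to mirror the structure of Claim \ref{HP_P9}. Let $P_a \in \mathbb{P}_8$; since a $K$-$K$ path has odd size, $P_a$ must be an $I$-$K$ path, so write $P_a = P(w_1,\ldots,w_4; x_1,\ldots,x_4)$ with end vertices $w_1 \in K$ and $x_4 \in I$. By Corollary \ref{cor1} there exists $v_1 \in N^I(v)$ with $v_1w_2 \in E(G)$; since $w_4$ is at distance two from $w_2$ inside $P_a$, \K14-freeness applied to $N^I(w_2) \cup \{w_2, w_4\}$ forces $v_1w_4 \in E(G)$. The adjacencies of $v_1$ to $w_1$ and $w_3$ are not immediate, so the analysis will split according to which of these fails, in direct analogy with Cases 1--3 of Claim \ref{HP_P9}.

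Following the pattern of Claim \ref{HP_P9}, I would next show that every end vertex $s \in K$ of any path in $\mathbb{P}_2 \cup \mathbb{P}_3$ satisfies $v_1s \in E(G)$: otherwise $v_2s \in E(G)$ or $v_3s \in E(G)$ by Corollary \ref{cor1}, and \K14-avoidance at $w_2$ and $w_4$ forces $s$ to be adjacent to at least one of $\{x_1, x_2\}$ and at least one of $\{x_3, x_4\}$, producing an induced \K14 rooted at $s$. By Claim \ref{P8_no_P4} we also have $\mathbb{P}_{j \ge 4} = \emptyset$, so $\mathbb{S}_f \setminus \{P_a\}$ consists entirely of singletons, edges, and paths on three vertices.

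By maximality of $K$ there exists $w' \in K$ with $v_1w' \notin E(G)$, and by Corollary \ref{cor1} we may assume $w'v_2 \in E(G)$. The case analysis splits on whether $w' \in P_a$ or $w' \notin P_a$, and on whether $d(v_3) = 1$ or $d(v_3) > 1$ (the latter bounding $|\mathbb{P}_2|$ via Property A). In each case I would exhibit a \emph{desired path} containing $\{v_1, v_2, v_3\}$, all of $P_a$, and both $I$-$K$ paths of $\mathbb{S}_f$ (if present), and then insert an ordering $\overrightarrow{Q}$ of the remaining paths in $\mathbb{P}_1 \cup \mathbb{P}_3$ at a pivot whose first vertex has a neighbour in $N^I(v)$, exactly in the style of the conclusions of Claims \ref{HC_P10} and \ref{HP_P9}.

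The main obstacle I expect is the sub-case in which $w'$ coincides with an interior $w_i$ of $P_a$ and $d(v_3) > 1$: with only four $w$-vertices in $P_a$ there are fewer pivots available to reroute through $v_2$ and $v_3$ than in the $\mathbb{P}_9$ case, so one must exploit finer adjacencies among $\{x_1, \ldots, x_4, v_2, v_3\}$ and use the absence of short cycles to locate a neighbour $z$ of $v_3$ outside $P_a$. I anticipate the argument to be best presented as a table of desired paths indexed by the adjacency pattern of $w'$ to $\{x_1, x_2, x_3, x_4\}$ and of the secondary vertex covering $v_3$'s further neighbour, analogous to Tables \ref{tab:HP_P9_1}--\ref{tab:HP_P9_3}.
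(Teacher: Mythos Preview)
Your plan is correct and matches the paper's proof almost exactly: the paper also fixes $v_1\in N^I(v)$ with $v_1w_2,v_1w_4\in E(G)$, shows every clique end vertex of a path in $\mathbb{P}_2\cup\mathbb{P}_3$ is adjacent to $v_1$, and then splits into three cases according to whether the non-neighbour of $v_1$ is $w_1$, $w_3$, or some $t\in\mathbb{P}_1$, each further split on $d(v_3)=1$ versus $d(v_3)>1$, with the desired paths recorded in tables analogous to those for Claim~\ref{HP_P9}. Your anticipated hard sub-case (interior $w_3$ with $d(v_3)>1$) is exactly the paper's Case~2 with $d(v_3)>1$, resolved by locating $z\in\mathbb{P}_1$ with $zv_3\in E(G)$ and branching on the adjacency of an end vertex of some $P_b\in\mathbb{P}_3$ to $\{v_2,x_2,x_3\}$.
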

\begin{proof}
Let $P_a=(w_1,\ldots,w_4;x_1,\ldots,x_4)$.  There exists a vertex in $v_1\in N^I(v)$ such that $v_1w_2\in E(G)$.  Note that $v_1w_4\in E(G)$, otherwise \NC{w_2}{w_4} induces a $K_{1,4}$.  Recall from Claim \ref{P8_no_P4}, $\mathbb{P}_{j\ge4}=\emptyset$.  Now we claim that for an arbitrary path $P\in\mathbb{P}_2\cup \mathbb{P}_3$ and $s\in K$ be an end vertex of $P$, then $v_1s\in E(G)$.  Suppose not, then $v_2s\in E(G)$ or $v_3s\in E(G)$.  Further, $s$ is adjacent to either $x_1$ or $x_2$, otherwise \NC{w_2}{s} induces a \K14.  Similarly, $s$ is adjacent to one of $x_3,x_4$, otherwise \NC{w_4}{s} induces a \K14.  From the above arguments it follow that \NV{s} induces a \K14, a contradiction.  Therefore, $v_1s\in E(G)$. Note that the above argument is true for any end vertex $s\in K$ of every such paths in $\mathbb{P}_2\cup\mathbb{P}_3$.  Since the clique is maximal, there exists a non-adjacency for $v_1$ in $K$, and based on the non-adjacency ($w_1$ or $w_3$ or some element $t\in \mathbb{P}_1$), we see the following cases as shown in Tables \ref{tab:HP_P8_1}, \ref{tab:HP_P8_2} and \ref{tab:HP_P8_3}.    
Let $Q$ be an ordering of paths in $\mathbb{P}_1\cup \mathbb{P}_3$ which are not in a desired path.  Hamiltonian path in $G$ could be obtained from the desired path by augmenting $Q$ appropriately. This completes a proof of Claim \ref{HP_P8}. $\hfill\qed$
\end{proof} 
%
\begin{table}[h!]
\begin{tabular}{ | L{0.18\textwidth} | L{0.795\textwidth} |}
\hline
Case & Arguments\\
\hline
Case 1: $v_1w_1\nin E(G)$  & Note that one of $v_2,v_3$ is adjacent to $w_1$, without loss of generality, let $v_2w_1\in E(G)$.  Further, note that $w_1x_3\in E(G)$, otherwise, \NC{w_4}{w_1} induces a \K14.  \\
\hline
Case 1.1: $d(v_3)=1$ & From Property A, at most 2 vertices of $I$ have degree 1.  Note $d(v_3)=d(x_4)=1$.  It follows that, $\mathbb{P}_2=\emptyset$.  Further, from Property A, $|I|\ge9$.  Therefore, there exists $P_b,P_c\in\mathbb{P}_3$, let $y\in P_b\cap K$.  Let $\overrightarrow{Q}$ be an ordering of the paths in $\mathbb{P}_1\cup \mathbb{P}_3$ excluding paths $P_b,P_c$ and $\{v\}$.  Note that $N^I(w_1)\cap N^I(y)\ne\emptyset$.  Depending on the adjacency of $y$ with $\{v_2,x_1,x_3\}$, we obtain Hamiltonian path $P$ as follows.  
If $yv_2\in E(G)$, then $P=(v_3,v,v_2,y\overrightarrow{P_b},v_1,\overrightarrow{P_c},\overrightarrow{Q},w_1\overrightarrow{P_a}x_4)$.   
\\ \hline
\end{tabular}
\caption{Case analysis for the proof of Claim \ref{HP_P8} } 
\label{tab:HP_P8_1} 
\end{table}
\begin{table}[h!]
\begin{tabular}{ | L{0.18\textwidth} | L{0.795\textwidth} |} 
\hline
Case 1.1: $d(v_3)=1$ &
If $yx_1\in E(G)$, then $P=(v_3,v,v_2,w_1,x_1,y\overrightarrow{P_b},\overrightarrow{Q},\overrightarrow{P_c},v_1,w_2\overrightarrow{P_a}x_4)$.  \newline
If $yx_3\in E(G)$, then $P=(v_3,v,v_2,w_1\overrightarrow{P_a}x_3,y\overrightarrow{P_b},\overrightarrow{Q},\overrightarrow{P_c},v_1,w_4,x_4)$. \\
\hline
 Case 1.2: $d(v_3)>1$ &  Let $U=(\mathbb{P}_2\cup\mathbb{P}_3)\cap K$.  Note that every $s\in U$ is adjacent to a vertex in $N^I(w_1)$.  Further, $d(w_1)=d(w_2)=d(w_4)=3$.  It follows that $v_3$ is adjacent to either $w_3$ or an element $z$ in $\mathbb{P}_1$.  Observe that $|\mathbb{P}_2|\le1$.   To complete the proof, we shall assume that $|\mathbb{P}_2|=1$.   Proof of the other case is similar.  Since $|I|\ge9$, there exists $P_c\in\mathbb{P}_3$.  Let $P_d\in\mathbb{P}_2$.  \\
\hline
Case 1.2.1: $v_3w_3\in E(G)$ &   We claim that for all $s\in U$, $sx_3\in E(G)$.  Suppose not, then \NC{w_1}{s} or \NC{w_3}{s} induces a \K14.  Now $d(x_1)=d(x_2)=d(v_2)=d(v_3)=2$.  Since $G$ has no short cycles, there exists $w'\in \mathbb{P}_1$ such that $w'$ is adjacent to at least one vertex in $\{x_1,x_2,v_2,v_3\}$.  We obtain a desired path $P$ as follows. \newline
If $w'x_1\in E(G)$, then \newline $P=(\overrightarrow{P_d},v_1,\overrightarrow{P_c}, w',x_1,w_2,x_2,w_3,v_3,v,v_2,w_1,x_3,w_4,x_4)$. \newline
If $w'x_2\in E(G)$, then \newline $P=(\overrightarrow{P_d},v_1,\overrightarrow{P_c}, w',x_2,w_2,x_1,w_1,v_2,v,v_3,w_3,x_3,w_4,x_4)$. \newline 
If $w'v_2\in E(G)$, then \newline $P=(\overrightarrow{P_d},v_1,\overrightarrow{P_c}, w',v_2,v,v_3,w_3,x_2,w_2,x_1,w_1,x_3,w_4,x_4)$. \newline
If $w'v_3\in E(G)$, then $P=(\overrightarrow{P_d},v_1,\overrightarrow{P_c},w',v_3,v,v_2,w_1\overrightarrow{P_a}x_4)$. 
 \\
\hline
Case 2: $v_1w_3\nin E(G)$ &   In this case we shall assume $v_1w_1\in E(G)$.   Note that $w_3$ is adjacent to $v_2$ or $v_3$.  Without loss of generality, we shall assume $w_3v_2\in E(G)$.  Recall that all the end vertices of paths in $\mathbb{P}_2\cup \mathbb{P}_3$, in $K$ are adjacent to $v_1$.  We first consider the case when $d(v_3)=1$.  Note that $d(x_4)=1$.  It follows that $\mathbb{P}_2=\emptyset$.  Further, since $|I|\ge9$, there exists $P_b,P_c\in\mathbb{P}_3$, let $y\in P_b\cap K$.  Note that $N^I(y)\cap N^I(w_3)\neq\emptyset$. Depending on the adjacency of $y$ with $\{v_2,x_3,x_2\}$, we obtain desired path $P$ as follows. \newline
If $yv_2\in E(G)$, then $P=(v_3,v,v_2,y\overrightarrow{P_b},\overrightarrow{P_c},v_1,w_1\overrightarrow{P_a}x_4)$. \newline  
If $yx_3\in E(G)$, then \newline $P=(v_3,v,v_2,w_3,x_2\overleftarrow{P_a}w_1,v_1,\overrightarrow{P_c},\overleftarrow{P_b}y,x_3,w_4,x_4)$.  \newline
\\ \hline
\end{tabular}
\caption{Case analysis for the proof of Claim \ref{HP_P8} } 
\label{tab:HP_P8_2} 
\end{table}
\begin{table}[h!]
\begin{tabular}{ | L{0.18\textwidth} | L{0.795\textwidth} |}
\hline & \\
Case 2: $v_1w_3\nin E(G)$ &   
If $yx_2\in E(G)$, then we see the adjacency of $w_1$ in $\{v_2,x_3,x_2\}$. 
If $w_1v_2\in E(G)$, then $(v_3,v,v_2,w_1,x_1,w_2,v_1,\overrightarrow{P_c},\overleftarrow{P_b}y,x_2\overrightarrow{P_a}x_4)$ is a desired path. 
If $w_1x_3\in E(G)$, then \newline $(v_3,v,v_2,w_3,x_2,y\overrightarrow{P_b},\overrightarrow{P_c},v_1,w_2,x_1,w_1,x_3,w_4,x_4)$ is a desired path.  
If $w_1x_2\in E(G)$, then we see the following.  Note that $d(v_2)=d(x_3)=2$.  Since $G$ has no short $I$-$I$ path, there exists a vertex $w'\in K$ such that $w'$ is an end vertex of a path $P_e\in \mathbb{P}_1\cup \mathbb{P}_3$, and $w'$ is adjacent to one of $v_2,x_3$. \newline
If $w'v_2\in E(G)$, then  $(v_3,v,v_2,w'\overrightarrow{P_e},\overrightarrow{P_b},v_1,w_1\overrightarrow{P_a}x_4)$ is a desired path. \newline 
If $w'x_3\in E(G)$, then $(v_3,v,v_2,w_3,x_2\overleftarrow{P_a}w_1,v_1,\overrightarrow{P_b},\overleftarrow{P_e}w',x_3,w_4,x_4)$ is a desired path.  
%
\newline 
Now we consider $d(v_3)>1$.  Observe that $d(x_4)=1$.  It follows that $|\mathbb{P}_2|\le1$.  Let $P_d\in \mathbb{P}_2$, $P_b\in \mathbb{P}_3$, $y\in P_b\cap K$.  Similar to the proof of Case 2 of Claim \ref{HP_P9}, we observe the following.  Since $d(v_3)>1$, there exists $z\in \mathbb{P}_1$ such that $zv_3\in E(G)$.  Note that $y$ is adjacent to a vertex in $\{v_2,x_2,x_3\}$.  Depending on the adjacency of $y$, we obtain a desired path $P$ as follows.   \newline
If $yv_2\in E(G)$, then $P=(\overrightarrow{P_d},z,v_3,v,v_2,y\overrightarrow{P_b},v_1,w_1\overrightarrow{P_a}x_4)$.
\newline
If $yx_2\in E(G)$, then $P=(\overrightarrow{P_d},v_1,w_1\overrightarrow{P_a}w_3,v_2,v,v_3,z,\overleftarrow{P_b}y,x_2\overrightarrow{P_a}x_4)$.
\newline
If $yx_3\in E(G)$, then \newline $P=(\overrightarrow{P_d},z,v_3,v,v_2,w_3\overleftarrow{P_a}w_1,v_1,\overleftarrow{P_b}y,x_3,w_4,x_4)$.
\\
\hline
Case 3: $v_1t\nin E(G)$, $t\in \mathbb{P}_1$ &   In this case we shall assume $v_1w_i\in E(G)$, $1\le i\le4$.   Note that $t$ is adjacent to $v_2$ or $v_3$.  Without loss of generality, we shall assume $tv_2\in E(G)$.  We first consider the case when $d(v_3)=1$.  Note that $d(x_4)=1$.  It follows that $\mathbb{P}_2=\emptyset$.  Further, since $|I|\ge 9$, there exists $P_b,P_c\in \mathbb{P}_3$.  Recall that all the end vertices of paths in $\mathbb{P}_2\cup \mathbb{P}_3$, in $K$ are adjacent to $v_1$.  We obtain $(v_3,v,v_2,t,\overrightarrow{P_b},v_1,w_1\overrightarrow{P_a}x_4)$ as a desired path in $G$.   Now we consider the case when $d(v_3)>1$.  Note that $d(x_4)=1$.  It follows that $|\mathbb{P}_2|\le1$, and let $P_d\in \mathbb{P}_2$.  Note that $t$ is adjacent to at least two vertices in $\{x_1,\ldots,x_4\}$, otherwise for some $w_i$, $2\le i\le 4$, \NC{w_i}{t} induces a \K14.  Observe that for all $s\in (\mathbb{P}_2\cup\mathbb{P}_3)\cap K$, $s$ is adjacent to a vertex in $N^I(t)$.  Since $d(v_3)>1$, there exists $z\in K$ such that $zv_3\in E(G)$.  From the previous arguments, $z\in\mathbb{P}_1$ and $z\ne t$.  We obtain $(\overrightarrow{P_d},z,v_3,v,v_2,t,\overrightarrow{P_b},v_1,w_1\overrightarrow{P_a}x_4)$ as a desired path in $G$. \\
\hline
\end{tabular}  \caption{Case analysis for the proof of Claim \ref{HP_P8} }
\label{tab:HP_P8_3}
\end{table}
\begin{cl}\label{2P6_no_P4} %
If there exists $P_a,P_b\in \mathbb{P}_7\cup\mathbb{P}_6$, then there does not exist $P_c\in\mathbb{S}_f$ such that $P_c\neq P_a$, $P_c\neq P_b$ and $|P_c|\ge 4$.  Further, ${|\mathbb{P}_7|+|\mathbb{P}_6|\le2}$.
\end{cl}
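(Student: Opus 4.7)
The plan is to mimic the template of Claim \ref{P8_no_P4}: assume for contradiction that a third long path $P_c$ coexists with $P_a,P_b\in\mathbb{P}_7\cup\mathbb{P}_6$, isolate a vertex $v_1\in N^I(v)$ that is adjacent to almost all $K$-vertices of the three paths, and then derive a $K_{1,4}$ at some clique vertex that fails to be adjacent to $v_1$. Concretely, write $P_a=P(w_1,\ldots,w_{l'};x_1,\ldots,x_l)$ with $l\ge3$, $P_b=P(s_1,\ldots,s_{r'};t_1,\ldots,t_r)$ with $r\ge3$, and $P_c=P(q_1,\ldots,q_{n'};r_1,\ldots,r_n)$ with $n\ge2$. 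Corollary \ref{xunivthreepathseven} then supplies the common neighbour $v_1\in N^I(v)$ with $v_1w_i,v_1s_j,v_1q_k\in E(G)$ for every internal $K$-vertex of the three paths (i.e., $2\le i\le l$, $2\le j\le r$, $2\le k\le n$).

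Next I would upgrade this to cover the end $K$-vertices of $P_a,P_b,P_c$. For $w_1$, if $v_1w_1\notin E(G)$, Corollary \ref{cor1} forces $v_2w_1\in E(G)$ or $v_3w_1\in E(G)$; then the usual $K_{1,4}$-free arguments applied to $w_3$ and $w_4$ (available since $l\ge3$ inside $P_a$, and in the $l=3$ case one uses the $K$-endpoints of $P_b$ instead) force $w_1x_3\in E(G)$ and $w_1t_1$ or $w_1t_2\in E(G)$, producing a $K_{1,4}$ at $w_1$ itself, a contradiction. Exactly the same reasoning (symmetrical in $P_b$, and the shorter version in $P_c$) yields $v_1w_{l'},v_1s_1,v_1s_{r'},v_1q_1,v_1q_{n'}\in E(G)$. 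Thus $v_1$ is adjacent to every $K$-vertex of $P_a\cup P_b\cup P_c$.

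By the maximality of $K$, there exists $w'\in K$ with $v_1w'\notin E(G)$; by Corollary \ref{cor1}, without loss of generality $v_2w'\in E(G)$. For each of the three paths, the standard ``end-of-path'' argument that already appeared in Claim \ref{P8_no_P4} shows that $w'$ must be adjacent to at least one of $\{x_1,x_2\}$, at least one of $\{x_3,x_4\}$ (through consecutive $w_i$'s of $P_a$), and similarly at least one of $\{t_1,t_2\}$ (through $P_b$)---otherwise one of $N^I(w_r)\cup\{w_r,w'\}$ or $N^I(s_2)\cup\{s_2,w'\}$ induces a $K_{1,4}$. Combining these with $v_2w'\in E(G)$ gives at least four distinct neighbours of $w'$ in $I$, so $N^I(w')\cup\{w'\}$ induces a $K_{1,4}$, contradicting the hypothesis. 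Hence no such $P_c$ exists, proving the first part. The main obstacle is book-keeping the small-$l$ or small-$r$ boundary cases ($l=3$ or $r=3$) in the endpoint-adjacency step, since then $P_a$ alone does not supply enough consecutive clique vertices; in those cases one must swap the roles of $P_a$ and $P_b$ when invoking the $K_{1,4}$ arguments.

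For the second assertion, suppose $|\mathbb{P}_7|+|\mathbb{P}_6|\ge3$. Pick any three paths $P_a,P_b,P_c$ from $\mathbb{P}_7\cup\mathbb{P}_6$. Every path in $\mathbb{P}_7\cup\mathbb{P}_6$ has size at least $6\ge4$, so treating $P_a,P_b$ as the two long paths and $P_c$ as the third path with $|P_c|\ge4$ directly violates the first part of the claim. Therefore $|\mathbb{P}_7|+|\mathbb{P}_6|\le2$, completing the proof. \qed
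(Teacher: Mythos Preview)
Your overall strategy matches the paper's exactly (common $v_1$ via Corollary~\ref{xunivthreepathseven}, extend $v_1$ to the end $K$-vertices of all three paths, then derive a $K_{1,4}$ at some $w'\in K$ with $v_1w'\notin E(G)$), and the second assertion is deduced the same way. The gap is in how you force the extra $I$-neighbours of $w_1$ and of $w'$: you lift the argument from Claim~\ref{P8_no_P4} verbatim and assert that $w'$ is adjacent to one of $\{x_1,x_2\}$ \emph{and} one of $\{x_3,x_4\}$. But here $P_a\in\mathbb{P}_6\cup\mathbb{P}_7$ has exactly three $I$-vertices $x_1,x_2,x_3$; there is no $x_4$, and when $P_a\in\mathbb{P}_6$ there is no $w_4$ either, so the ``two disjoint chunks of $P_a$'' mechanism from Claim~\ref{P8_no_P4} is simply unavailable. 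You half-notice this in the endpoint step (``in the $l=3$ case one uses the $K$-endpoints of $P_b$ instead'') but leave the workaround vague, and in the $w'$ step you do not address it at all.

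The repair is precisely what the presence of a \emph{third} path buys, and it is what the paper does: instead of extracting two forced adjacencies from $P_a$, extract one from each of $P_a,P_b,P_c$. For $w'$ (and analogously for $w_1$): since $v_1w'\notin E(G)$, the set $N^I(w_2)\cup\{w_2,w'\}$ forces $w'$ adjacent to some $x_k$, $N^I(s_2)\cup\{s_2,w'\}$ forces $w'$ adjacent to some $t_i$, and $N^I(q_2)\cup\{q_2,w'\}$ forces $w'$ adjacent to some $r_j$; together with $v_2w'$ or $v_3w'$ from Corollary~\ref{cor1} this already gives four independent neighbours of $w'$, the required $K_{1,4}$. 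The same pattern handles $w_1$: use $P_b$ and $P_c$ (not a second piece of $P_a$) to supply the two missing $I$-neighbours beyond $x_1$ and $v_2/v_3$. With this adjustment your argument goes through and coincides with the paper's.
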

\begin{proof}
Let $P_a=P(w_1,\ldots,w_l;$ $x_1,\ldots,x_3)$ and $P_b=P(s_1,\ldots,s_m;t_1,\ldots,t_{3})$.
%
%
For a contradiction, assume that there exists $P_c\in \mathbb{P}_j$, $j\ge4$ such that $P_c=P(y_1,\ldots,y_{n'};$ $z_1,\ldots,z_n)$, $n+1\ge n'\ge n \ge2$. 
From Corollary \ref{xunivthreepathseven}, there exists $v_1\in N^I(v)$ such that $v_1w_i,v_1s_i,v_1y_j\in E(G)$, $i\in \{2,3\}$, $2\le j\le n$.  Now we claim that $v_1w_1,v_1s_1,v_1y_1\in E(G)$.  Suppose $v_1w_1\notin E(G)$, then by Corollary \ref{cor1}, either $v_2w_1\in E(G)$ or $v_3w_1\in E(G)$.  Note that $w_1t_2\in E(G)$, otherwise, \NC{s_2}{w_1} or \NC{s_3}{w_1} induces $K_{1,4}$.  Similarly, $w_1$ is adjacent to a vertex in $P_d\cap I$.  It follows that \NV{w_1} induces a $K_{1,4}$, a contradiction.  Thus $v_1w_1\in E(G)$.  
Similar arguments hold good for the other edges and $v_1s_1,v_1y_1\in E(G)$.  If the paths $P_a,P_b,P_d$ are odd, then using similar arguments, the other end vertices of those paths are also adjacent to $v_1$.  
Since $K$ is maximal, there exists $w'\in K$ such that $v_1w'\nin E(G)$.  Note that $w'\notin P_a\cup P_b\cup P_d$.  By Corollary \ref{cor1}, $v_2w'\in E(G)$ or $v_3w'\in E(G)$.  Further, similar to the previous arguments, $w'$ is adjacent to a vertex, each in $P_a\cap I$, $P_b\cap I$ and $P_d\cap I$.  Now, \NV{w'} induces a $K_{1,4}$, a final contradiction to the existence of $P_c$.  Observe that $|\mathbb{P}_7|+|\mathbb{P}_6|\le2$ is a direct consequence of the first part of this proof.  This completes a proof of the claim.   $\hfill\qed$
\end{proof}

\begin{cl}\label{HP_2P7_2P6_P7P6}
If there exists $P_a,P_b\in\mathbb{P}_7\cup \mathbb{P}_6$, then $G$ has a Hamiltonian path.
\end{cl}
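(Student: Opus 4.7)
The plan is to follow the template developed in Claims \ref{HP_P9} and \ref{HP_P8}: isolate a vertex $v_1 \in N^I(v)$ that is \emph{almost universal} to $K$, identify the few possible non-adjacencies of $v_1$, and then splice $P_a$, $P_b$, the $I$-$K$ paths, and the vertices $v,v_1,v_2,v_3$ together into a single desired path, afterwards augmenting by an ordering of the leftover paths in $\mathbb{P}_1 \cup \mathbb{P}_3$.

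Write $P_a = P(w_1,\ldots,w_{l'};x_1,\ldots,x_l)$ with $l\in\{3\}$ and $l'\in\{3,4\}$, and similarly $P_b = P(s_1,\ldots,s_{m'};t_1,\ldots,t_m)$. By Claim \ref{2P6_no_P4} no other path in $\mathbb{S}_f$ has size $\ge 4$, so $\mathbb{S}_f \setminus\{P_a,P_b\} \subseteq \mathbb{P}_1 \cup \mathbb{P}_2 \cup \mathbb{P}_3$, and Property A forces $|\mathbb{P}_2| \le 2$. From Corollary \ref{xunivthreepathseven} (applied to $P_a$, $P_b$, and any third long path) together with the end-vertex argument already carried out inside the proof of Claim \ref{2P6_no_P4}, I obtain a single $v_1 \in N^I(v)$ such that $v_1 w_i \in E(G)$ for all $i$ and $v_1 s_j \in E(G)$ for all $j$. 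Exactly as in the opening paragraphs of Claims \ref{HP_P9} and \ref{HP_P8}, the same $v_1$ is forced to be adjacent to every $K$-endpoint of every path in $\mathbb{P}_2 \cup \mathbb{P}_3$, for otherwise a \K14 appears centred at some $w_i$ (using $x_1,x_2$ and $x_3,x_4$, or in the size-6 case using $x_1,x_2$ and $x_3$) together with the offending endpoint.

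Since $K$ is a maximum clique, some vertex of $K$ is not adjacent to $v_1$. By the previous paragraph that vertex is either an interior $w_1$ or $s_1$ type endpoint of $P_a$ or $P_b$, or a vertex $t \in \mathbb{P}_1$. This gives three top-level cases, each refined by $d(v_3)=1$ versus $d(v_3)>1$, exactly as in Claim \ref{HP_P8}. In every subcase the usual \K14-avoidance forces the non-adjacent vertex to connect into $P_a$ through $x_3$ (or $x_2$) and into $P_b$ through $t_3$ (or $t_2$), and likewise forces the $v_3$-neighbour to lie in a controlled location (an endpoint $w_3$, $s_3$, or a $\mathbb{P}_1$-vertex). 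Using these edges I can write down the desired path: it starts at $v_3$, visits $v$, $v_2$, then threads through $P_a$ and $P_b$ using the forced edges to or from the non-adjacent vertex, picks up the $I$-$K$ paths at the ends (there are at most two), and closes through $v_1$. Finally I insert an ordering $\overrightarrow{Q}$ of $\mathbb{P}_1\cup\mathbb{P}_3$ at the point along the desired path where the first vertex of $\overrightarrow{Q}$ is adjacent to whichever of $v_1,v_2$ sits at that junction; such an insertion is always possible by Corollary \ref{cor1} applied to the endpoints of $Q$.

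The main obstacle is bookkeeping rather than conceptual: one must enumerate the parity combinations of $P_a$ and $P_b$ (both $K$-$K$, both $I$-$K$, or mixed) together with the location of the $v_1$ non-adjacency and the value of $d(v_3)$, and check that each combination yields a valid desired path whose leftover complement still admits a valid $\overrightarrow{Q}$ insertion. I expect this to mirror the tables used in Claims \ref{HP_P9} and \ref{HP_P8}, and no genuinely new structural argument beyond Claim \ref{2P6_no_P4} and the end-vertex universality of $v_1$ is needed.
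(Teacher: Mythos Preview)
Your approach is essentially the paper's: obtain a $v_1\in N^I(v)$ adjacent to every clique vertex of $P_a$ and $P_b$, locate a $w'\in K$ with $v_1w'\notin E(G)$, force $w'x_2,w't_2\in E(G)$ via the usual $K_{1,4}$-avoidance, then case-split on $d(v_3)$ and on $|\mathbb{P}_2|$ to splice everything into a desired path. That plan is sound and matches the paper.

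Two points of sloppiness to fix. First, you invoke Corollary~\ref{xunivthreepathseven} ``applied to $P_a$, $P_b$, and any third long path'', but Claim~\ref{2P6_no_P4} has already ruled out a third long path; you want Claim~\ref{xunivtwopathseven} here (and then the separate end-vertex argument, exactly as in the proof of Claim~\ref{2P6_no_P4}, to pick up $w_1,s_1$ and the odd ends). Second, you contradict yourself: after asserting $v_1w_i\in E(G)$ for all $i$ and $v_1s_j\in E(G)$ for all $j$, and that $v_1$ sees every $K$-endpoint of $\mathbb{P}_2\cup\mathbb{P}_3$, you then list ``$w_1$ or $s_1$ type endpoint'' among the possible locations of the $v_1$-non-neighbour and announce three top-level cases. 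In fact your own argument forces the non-neighbour into $\mathbb{P}_1$, so there is exactly one top-level case (this is also what the paper uses, tacitly). The extra cases are vacuous rather than wrong, but the inconsistency should be removed. Once this is cleaned up, the bookkeeping you anticipate collapses to the short list the paper gives: the $d(v_3)=1$ versus $d(v_3)>1$ split crossed with the parity pattern $(|\mathbb{P}_6|,|\mathbb{P}_7|)$ and the value of $|\mathbb{P}_2|$.
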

\begin{proof}
Let $P_a=P(w_1,\ldots,w_{l};x_1,\ldots,x_{3})$ and $P_b=P(s_1,\ldots,s_{m};t_1,\ldots,$ $t_{3})$, $l,m\in \{3,4\}$.  From Claim \ref{xunivtwopathseven}, there exists $v_1\in N^I(v)$ such that for every $w\in \{w_2,w_3,s_2,s_3\}$, $v_1w\in E(G)$.  Now we claim that $v_1w_1,v_1s_1\in E(G)$.  Suppose $v_1w_1\nin E(G)$, then note that $w_1$ is adjacent to either $v_2$ or $v_3$.  Further, $w_1$ is also adjacent to a vertex in $P_b\cap I$, otherwise for some $s\in P_b\cap K$, \NC{s}{w_1} induces a \K14.  Note that $w_1x_2\in E(G)$ or $w_1x_3\in E(G)$, otherwise \NC{w_3}{w_1} induces a \K14.  It follows from the above arguments that \NV{w_1} induces a \K14.  Therefore, $v_1w_1\in E(G)$.  Using similar argument we could establish $v_1s_1\in E(G)$.  Similarly, if the paths $P_a,P_b$ are having odd length, then $v_1w_4,v_1s_4\in E(G)$.  Since $K$ is maximal, there exists a vertex $w'\in K$ such that $v_1w'\nin E(G)$.  From Corollary \ref{cor1}, $v_2w'\in E(G)$ or $v_3w'\in E(G)$.  Without loss of generality, we shall assume $v_2w'\in E(G)$.  Observe that $w'x_2,w't_2\in E(G)$, otherwise for some $w\in \{w_2,w_3,s_2,s_3\}$, \NC{w}{w'} induces a \K14.  From Claim \ref{2P6_no_P4}, there does not exist $P_c$ such that $|P_c|\ge4$.  Further, there are no paths other than $P_a,P_b$ in $\mathbb{P}_6\cup\mathbb{P}_7$.  If $d(v_3)=1$, then we observe the following.  Note that the number of vertices in $I$ with degree 1 is at most two.  Thus there are two possibilities, either $P_a,P_b\in\mathbb{P}_7$, or exactly one of $P_a$ or $P_b$ is in $\mathbb{P}_6$.  We obtain desired path $P$ as follows.  If $\mathbb{P}_6=\emptyset$, then note that $|\mathbb{P}_2|\le1$, let $P_c\in\mathbb{P}_2$.  Then $P=(v_3,v,v_2,w',\overrightarrow{P_a},v_1\overrightarrow{P_b},\overrightarrow{P_c})$.  If exactly one of $P_a,P_b$ is in $\mathbb{P}_6$, then without loss of generality, let $P_b\in \mathbb{P}_6$.  Now $P=(v_3,v,v_2,w',\overrightarrow{P_a},v_1\overrightarrow{P_b})$.  If $d(v_3)>1$, then we observe the following.  Note that there exists $w''\in\mathbb{P}_1$ such that $w''v_3\in E(G)$.  Note that the number of vertices in $I$ with degree 1 is at most two.  Thus there are three possibilities as follows.  
\begin{enumerate}[]
\item \emph{\bf {\em Case} 1:} $|\mathbb{P}_2|=2$. There exists $P_c,P_d\in\mathbb{P}_2$, and $P_a,P_b\in\mathbb{P}_7$.  In this case, $(\overrightarrow{P_c},w'',v_3,v,v_2,w',\overrightarrow{P_a},v_1\overrightarrow{P_b},\overrightarrow{P_d})$ is a desired path.
\item \emph{\bf {\em Case} 2:} $|\mathbb{P}_2|=1$. There exists $P_c\in\mathbb{P}_2$, and exactly one of $P_a$ or $P_b$ is in $\mathbb{P}_6$.  Without loss of generality, let $P_b\in \mathbb{P}_6$.  We obtain $P=(\overrightarrow{P_c},w'',v_3,v,v_2,w',\overrightarrow{P_a},v_1\overrightarrow{P_b})$ as a desired path.  If $P_a,P_b\in \mathbb{P}_7$, then $(\overrightarrow{P},s_4)$ is a desired path.
\item \emph{\bf {\em Case} 3:} $\mathbb{P}_2=\emptyset$.  If $P_a,P_b\in\mathbb{P}_6$, then $R=(t_3,s_3,t_2,w',v_2,v,v_3,w'',s_1,t_1,s_2,v_1,\overrightarrow{P_a})$ is a desired path.  For the other cases, $(s_4,\overrightarrow{R})$ or $(\overrightarrow{R},w_4)$  or $(s_4,\overrightarrow{R},w_4)$ are desired paths.  
\end{enumerate} 
This completes a proof of the claim. $\hfill\qed$
\end{proof}

\begin{cl}\label{P7_P6_no_P5_P4}
If there $|\mathbb{P}_7|+|\mathbb{P}_6|=1$, then $|\mathbb{P}_5|+|\mathbb{P}_4|\le1$.
\end{cl}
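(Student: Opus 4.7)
The plan is to mirror the argument of Claim \ref{2P6_no_P4}: let $P_a \in \mathbb{P}_7 \cup \mathbb{P}_6$ be the unique such path, assume for contradiction that there exist two distinct paths $P_b, P_c \in \mathbb{P}_5 \cup \mathbb{P}_4$, and derive a contradiction by locating a vertex with $I$-degree at least $4$. Write $P_a = P(w_1,\ldots,w_l; x_1,\ldots,x_3)$ with $l \in \{3,4\}$, and $P_b = P(s_1,\ldots,s_{m'}; t_1,t_2)$, $P_c = P(y_1,\ldots,y_{n'}; z_1,z_2)$ with $m',n' \in \{2,3\}$; note that the only ``internal'' $K$-vertex (index $\ge 2$) of $P_b$ is $s_2$ and of $P_c$ is $y_2$.

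First I would apply Corollary \ref{xunivthreepathseven} to the three paths $P_a, P_b, P_c$ to obtain a vertex $v_1 \in N^I(v)$ adjacent to every internal $K$-vertex (index $\ge 2$) of each path. Next I would promote this adjacency to the end $K$-vertices. For example, if $v_1 w_1 \notin E(G)$, then Corollary \ref{cor1} forces $v_2 w_1 \in E(G)$ or $v_3 w_1 \in E(G)$; further, to avoid a $K_{1,4}$ rooted at $s_2$ (respectively $y_2$), $w_1$ must be adjacent to some vertex of $P_b \cap I$ (respectively $P_c \cap I$); together with the neighbor of $w_1$ already present in $P_a \cap I$, this gives $d^I(w_1) \ge 4$, so $N^I(w_1) \cup \{w_1\}$ induces a $K_{1,4}$. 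The same argument rules out non-adjacency of $v_1$ with $s_1$ and $y_1$, and, when the corresponding paths are odd, with $w_l$, $s_{m'}$, $y_{n'}$.

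Finally, by maximality of the clique $K$ there exists $w' \in K$ with $v_1 w' \notin E(G)$; the previous paragraph places $w'$ outside $P_a \cup P_b \cup P_c$. Corollary \ref{cor1} furnishes a neighbor of $w'$ among $\{v_2, v_3\}$, and the same $K_{1,4}$-avoidance argument, anchored in turn at an internal $K$-vertex of each of $P_a$, $P_b$, $P_c$, forces $w'$ to have a neighbor in each of $P_a \cap I$, $P_b \cap I$, $P_c \cap I$. Therefore $d^I(w') \ge 4$, and $N^I(w') \cup \{w'\}$ induces a $K_{1,4}$, the desired contradiction. The part I expect to require the most care is the short case where both $P_b$ and $P_c$ lie in $\mathbb{P}_4$: then each of them offers only a single internal $K$-vertex to serve as the root of the $K_{1,4}$ obstruction. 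However, that single vertex is adjacent to both $I$-vertices of its path, so the obstruction still triggers and the whole argument runs in parallel with the proof of Claim \ref{2P6_no_P4}.
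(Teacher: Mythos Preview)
Your proposal is correct and follows essentially the same route as the paper: apply Corollary \ref{xunivthreepathseven} to get a common $v_1$, promote $v_1$-adjacency to the end $K$-vertices by the $K_{1,4}$-obstruction, then use maximality of $K$ to locate a vertex $w'$ that is forced to have four independent $I$-neighbors. The only cosmetic difference is that the paper pins down the specific neighbor $w'x_2$ in $P_a\cap I$ (using both $w_2$ and $w_3$ as anchors), whereas you argue for some neighbor in $P_a\cap I$; either version suffices.
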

\begin{proof}
Let $P_a=P(w_1,\ldots,w_{l};x_1,\ldots,x_{3})$, $l\in\{3,4\}$, $P_b=P(s_1,\ldots,s_{m};t_1,\ldots,t_{2})$, $m\in \{2,3\}$.  Assume for a contradiction that there exists a path $P_c=P(q_1,\ldots,q_{n};$ $r_1,\ldots,r_{2})$, $n\in \{2,3\}$.  From Corollary \ref{xunivthreepathseven}, there exists $v_1\in N^I(v)$ such that $v_1w_2,v_1w_3,v_1s_2,v_1q_2\in E(G)$.  We claim that $v_1w_1,v_1s_1,v_1q_1\in E(G)$.  Suppose $v_1w_1\nin E(G)$, then $w_1$ is adjacent to either $v_2$ or $v_3$, and one each from $P_b\cap I$ and $P_c\cap I$.  Therefore, \NV{w_1} induces a \K14, a contradiction.  Similar argument holds good for the other edges.  If $P_a,P_b,P_c$ are odd paths, then similar to the previous argument, the end vertices $w_l,s_m,q_n$ are adjacent to $v_1$.  Since the clique is maximal, there exists $w'\in K$ such that $v_1w'\nin E(G)$.  From the previous argument, $w'\nin \{w_1,\ldots,w_l,s_1,\ldots,s_m,q_1,\ldots,q_n\}$. By Corollary \ref{cor1}, either $v_2w'\in E(G)$ or $v_3w'\in E(G)$.  Further, we argue that $w'x_2\in E(G)$, otherwise \NC{w_2}{w'} or \NC{w_3}{w'} induces a \K14.  Observe that $w'$ is adjacent to one of $\{t_1,t_2\}$, otherwise \NC{s_2}{w'} induces a \K14.  Similarly, $w'$ is adjacent to one of $\{r_1,r_2\}$.  Now, \NV{w'} induces a $K_{1,4}$, which is a final contradiction to the existence of such a path $P_c$.  This completes a proof of the claim.   $\hfill\qed$
\end{proof}

\begin{cl}\label{HP_P7_P6_P5_P4}
If there exists $P_a\in\mathbb{P}_7\cup \mathbb{P}_6$, $P_b\in\mathbb{P}_5\cup \mathbb{P}_4$, then $G$ has a Hamiltonian path.  
\end{cl}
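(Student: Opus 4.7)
The plan is to mirror the strategy of Claim \ref{HP_2P7_2P6_P7P6}. Write $P_a=P(w_1,\ldots,w_l;x_1,\ldots,x_3)$ with $l\in\{3,4\}$ and $P_b=P(s_1,\ldots,s_m;t_1,t_2)$ with $m\in\{2,3\}$. First, by Corollary \ref{xunivthreepathseven}, fix $v_1\in N^I(v)$ such that $v_1w_2,v_1w_3,v_1s_2\in E(G)$. Next, I would promote this adjacency to the end $K$-vertices of the two paths: if $v_1w_1\notin E(G)$, then $w_1$ must be adjacent to $v_2$ or $v_3$ (by Corollary \ref{cor1}), to at least one of $x_2,x_3$ (else \NC{w_3}{w_1} induces $K_{1,4}$), and to a vertex of $P_b\cap I$ (else \NC{s_2}{w_1} induces $K_{1,4}$), so \NV{w_1} would induce a $K_{1,4}$; similarly for $s_1$ and, when the relevant path is odd, $w_l$ and $s_m$. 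Thus $v_1$ dominates every $K$-vertex of $P_a\cup P_b$.

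The next step is to invoke Claims \ref{2P6_no_P4} and \ref{P7_P6_no_P5_P4} to conclude that $P_a$ is the unique path of $\mathbb{P}_6\cup\mathbb{P}_7$, $P_b$ is the unique path of $\mathbb{P}_4\cup\mathbb{P}_5$, and $\mathbb{P}_{j\ge 8}=\emptyset$. Hence $\mathbb{S}_f\setminus\{P_a,P_b\}\subseteq \mathbb{P}_1\cup\mathbb{P}_2\cup\mathbb{P}_3$, and by the same argument used in Claims \ref{HP_P9} and \ref{HP_P8} the $K$-endpoints of every path in $\mathbb{P}_2\cup\mathbb{P}_3$ are adjacent to $v_1$. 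By maximality of $K$, pick $w'\in K$ with $v_1w'\notin E(G)$; then $w'\notin P_a\cup P_b$ and, to avoid creating a $K_{1,4}$ with $w_2,w_3,s_2$, the vertex $w'$ is adjacent to $x_2$ and to at least one of $t_1,t_2$. Without loss of generality, $v_2w'\in E(G)$ and $w't_1\in E(G)$.

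With this structural information in hand, I would construct a desired path. If $d(v_3)=1$, then at most two $I$-vertices have degree $1$, which bounds $|\mathbb{P}_2|$ (depending on whether $P_a\in\mathbb{P}_6$ and/or $P_b\in\mathbb{P}_4$), and a desired path of shape $(\overrightarrow{P_c},v_3,v,v_2,w',\overrightarrow{P_a},v_1,\overrightarrow{P_b})$ can be written, with remaining $\mathbb{P}_1\cup\mathbb{P}_3$ paths $\overrightarrow{Q}$ inserted at the $v_1$- or $v_2$-side using the universal adjacency of $v_1$ to their $K$-endpoints. If $d(v_3)>1$, then as in Case 3 of Claim \ref{HP_P8} I would locate $w''\in\mathbb{P}_1$ with $w''v_3\in E(G)$, ruling out $w''\in P_a\cup P_b$ by yet another $K_{1,4}$ argument, and splice $w''$ between a $\mathbb{P}_2$ path and $v_3$ to obtain the desired path $(\overrightarrow{P_d},w'',v_3,v,v_2,w',\overrightarrow{P_a},v_1,\overrightarrow{P_b})$, again appending $\overrightarrow{Q}$ at an appropriate endpoint.

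The main obstacle will be the bookkeeping across the four subcases coming from the parities of $|P_a|$ and $|P_b|$ combined with $d(v_3)\in\{1,{>}1\}$. I expect the $P_a\in\mathbb{P}_6$, $P_b\in\mathbb{P}_4$, $d(v_3)>1$ subcase to be the most delicate: both terminal $I$-vertices ($x_3$ and $t_2$) have degree $1$ and must be the two endpoints of the spanning path, which forces a specific traversal direction of $P_a$ and $P_b$ and tightly constrains where $w'$, $w''$, and the $\mathbb{P}_2$ paths may be spliced; as in Case 2 of Claim \ref{HP_P9}, I expect to need to exploit the adjacency $w'x_2\in E(G)$ (and possibly $w'v_2$) to reverse $P_a$ so that $x_3$ sits at the required end, and a symmetric manoeuvre on the $P_b$ side via $w't_1\in E(G)$.
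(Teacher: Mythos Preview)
Your overall architecture matches the paper's, but there is a genuine gap in the second paragraph. The step ``similarly for $s_1$'' does \emph{not} go through. For $w_1$ the argument works because the three constraints (from $w_3$, from $s_2$, and from Corollary~\ref{cor1}) force three pairwise distinct $I$-neighbours in addition to $x_1$. For $s_1$ the situation is different: the constraint coming from $s_2$ is vacuous, since $t_1\in N^I(s_2)$ is already a neighbour of $s_1$; and the two constraints coming from $w_2$ and $w_3$ can be satisfied simultaneously by the single vertex $x_2$, because $x_2\in N^I(w_2)\cap N^I(w_3)$. Hence $s_1$ can perfectly well have $N^I(s_1)=\{t_1,x_2,v_2\}$, so $d^I(s_1)=3$ and no $K_{1,4}$ arises. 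The same applies to $s_m$ when $P_b$ is odd. Consequently your conclusion ``$v_1$ dominates every $K$-vertex of $P_a\cup P_b$'' is false, and the vertex $w'$ with $v_1w'\notin E(G)$ may very well be $s_1$ (or $s_3$) rather than lying in $\mathbb{P}_1$.

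This is not a cosmetic issue: the paper's proof is long precisely because it must split on whether the non-neighbour $y$ of $v_1$ lies in $\{s_1,s_m\}$ or in $\mathbb{P}_1$, and it does this separately for each of the four parity combinations $(|P_a|,|P_b|)\in\{7,6\}\times\{5,4\}$. The $y=s_1$ branches are the substantial ones (for instance, when $|P_a|=7$, $|P_b|=5$, $d(v_3)>1$ and $v_3s_3\in E(G)$, one has to detect a short-cycle obstruction and use a further auxiliary vertex $w'$ adjacent to one of $v_2,v_3,t_1,t_2$). Your downstream claims about locating $w'\in\mathbb{P}_1$ and then $w''\in\mathbb{P}_1$ with $w''v_3\in E(G)$ both rest on the incorrect domination statement, so the desired-path templates you propose do not cover all cases. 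The fix is exactly what the paper does: after establishing $v_1w_i\in E(G)$ for all $i$ and $v_1s\in E(G)$ for every $K$-endpoint $s$ of paths in $\mathbb{P}_2\cup\mathbb{P}_3$, branch on whether the witness $y$ to non-universality of $v_1$ is an endpoint of $P_b$ or lies in $\mathbb{P}_1$, and treat the former with its own desired-path constructions.
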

\begin{proof}
Let $P_a=P(w_1,\ldots,w_{l};x_1,\ldots,x_{3})$, $l\in\{3,4\}$, $P_b=P(s_1,\ldots,s_{m};t_1,\ldots,t_{2})$, $m\in \{2,3\}$.  From Corollary \ref{xunivthreepathseven}, there exists $v_1\in N^I(v)$ such that $v_1w_2,v_1w_3,v_1s_2\in E(G)$.  We argue that $v_1w_1\in E(G)$.  Suppose not, then from Corollary \ref{cor1}, $w_1$ is adjacent to either $v_2$ or $v_3$.  Further, $w_1$ is adjacent to one of $\{x_2,x_3\}$, otherwise \NC{w_3}{w_1} induces a \K14. $w_1$ is also adjacent to one of $\{t_1,t_2\}$, otherwise \NC{s_2}{w_1} induces a \K14.  It follows that \NV{w_1} induces a \K14, a contradiction.  Thus $v_1w_1\in E(G)$.  If $P_a$ is an odd path, then similar arguments holds good with respect to the other end vertex of $P_a$ and $v_1w_l\in E(G)$.  From Claim \ref{P7_P6_no_P5_P4}, there are no paths of size 4 or more other than $P_a,P_b$ in $\mathbb{S}_f$.  Now we claim that for an arbitrary path $P\in\mathbb{P}_2\cup \mathbb{P}_3$ and $s\in K$ be an end vertex of $P$, then $v_1s\in E(G)$.  Suppose not, then $v_2s\in E(G)$ or $v_3s\in E(G)$.  Further, $s$ is adjacent to $x_2$, otherwise \NC{w_2}{s} or \NC{w_3}{s} induces a \K14.  Similarly, $s$ is adjacent to one of $t_1,t_2$, otherwise \NC{s_2}{s} induces a \K14.  From the above arguments it follow that \NV{s} induces a \K14, a contradiction.  Therefore, $v_1s\in E(G)$. Note that the above argument is true for any end vertex $s\in K$ of every such paths in $\mathbb{P}_2\cup\mathbb{P}_3$.  Since the clique is maximal, there exists $y\in K$ such that $v_1y\nin E(G)$.  Now we see the following cases depending on the length of paths $P_a,P_b$, and in each case, we also see the possibility of adjacency of $y$. 
\begin{enumerate}[]
\item \emph{\bf {\em Case} 1:} $|P_a|=7,|P_b|=5$ and $v_1y\nin E(G)$.  
\begin{enumerate}[]
\item \emph{\bf {\em Case} 1.1:}  $y\in \{s_1,s_3\}$, without loss of generality, let $s_1v_1\nin E(G)$.  From Corollary \ref{cor1}, $s_1v_2\in E(G)$ or $s_1v_3\in E(G)$.  Without loss of generality, let $s_1v_2\in E(G)$.  Note that $s_1x_2\in E(G)$, otherwise \NC{w_2}{s_1} or \NC{w_3}{s_1} induces a \K14.  Consider $d(v_3)=1$.  Since there are at most 2 vertices in $I$ of degree 1, $|\mathbb{P}_2|\le1$, let $P_c\in\mathbb{P}_2$.  We obtain $(v_3,v,v_2,s_1\overrightarrow{P_b},\overrightarrow{P_a},v_1,\overrightarrow{P_c})$ as a desired path in $G$.  If $d(v_3)>1$, then we observe the following.  Since there are at most 2 vertices in $I$ of degree 1, $|\mathbb{P}_2|\le2$, let $P_c,P_d\in\mathbb{P}_2$.  Observe that $w_1,w_4$ are adjacent to a vertex in $N^I(s_1)$.  Thus $d^I(w_1)=d^I(w_4)=3$.  We now claim that there does not exist a vertex $s\in K$ such that $s$ is an end vertex of a path in $\mathbb{P}_2\cup\mathbb{P}_3$ and $v_3s\in E(G)$.  Suppose, for such a vertex $s$, let $v_3s\in E(G)$, then \NC{s_1}{s} induces a \K14.  From the above observations we conclude that $v_3$ is adjacent to either $s_3$ or a vertex in $\mathbb{P}_1$.  If $v_3s_3\in E(G)$, then note that $s_3x_2\in E(G)$, otherwise for some $w\in\{w_2,w_3,s_1\}$, \NC{w}{s_3} induces a \K14.  Further, $w_1x_2,w_4x_2\in E(G)$.  Note that $d(v_2)=d(v_3)=d(t_1)=d(t_2)=2$.  Since $G$ has no short cycles, there exists  $w'\in\mathbb{P}_1$ such that $w'$ is adjacent to one among $\{v_2,v_3,t_1,t_2\}$.  In each cases, we obtain a desired path $P$ as follows.  \newline
If $w'v_2\in E(G)$, then $P=(\overrightarrow{P_c},w',v_2,v,v_3,s_3\overleftarrow{P_b}s_1,\overrightarrow{P_a},v_1,\overrightarrow{P_d})$.  \newline
If $w'v_3\in E(G)$, then $P=(\overrightarrow{P_c},w',v_3,s_3\overleftarrow{P_b}s_1,v_2,v,v_1,\overrightarrow{P_a},\overrightarrow{P_d})$.  \newline
If $w't_1\in E(G)$, then $P=(\overrightarrow{P_c},w',t_1,s_1,v_2,v,v_3,s_3,t_2,s_2,v_1,\overrightarrow{P_a},\overrightarrow{P_d})$.  \newline
If $w't_2\in E(G)$, then $P=(\overrightarrow{P_c},w',t_2,s_3,v_3,v,v_2,s_1,t_1,s_2,v_1,\overrightarrow{P_a},\overrightarrow{P_d})$.  \newline
For a vertex $z\in \mathbb{P}_1$, if $v_3z\in E(G)$, then $P=(\overrightarrow{P_c},z,v_3,v,v_2,s_1\overrightarrow{P_b}s_3,\overrightarrow{P_a},v_1,\overrightarrow{P_d})$.  
\item \emph{\bf {\em Case} 1.2:} $y\in\mathbb{P}_1$.  In this case we shall assume that $v_1s_1,v_1s_3\in E(G)$.  From Corollary \ref{cor1}, $yv_2\in E(G)$ or $yv_3\in E(G)$.  Without loss of generality, let $yv_2\in E(G)$.
 Consider $d(v_3)=1$.  Since there are at most 2 vertices in $I$ of degree 1, $|\mathbb{P}_2|\le1$, let $P_c\in\mathbb{P}_2$.  We obtain $(v_3,v,v_2,y,\overrightarrow{P_b},\overrightarrow{P_a},v_1,\overrightarrow{P_c})$ as a desired path in $G$. On the other hand, if $d(v_3)>1$, then we observe the following.  Since there are at most 2 vertices in $I$ of degree 1, $|\mathbb{P}_2|\le2$, let $P_c,P_d\in\mathbb{P}_2$.  If $v_3s_1\in E(G)$, then $(\overrightarrow{P_c},s_3\overleftarrow{P_b}s_1,v_3,v,v_2,y,\overrightarrow{P_a},v_1,\overrightarrow{P_d})$ is a desired path in $G$.
If $v_3s_3\in E(G)$, then $(\overrightarrow{P_c},s_1\overrightarrow{P_b}s_3,v_3,v,v_2,y,\overrightarrow{P_a},v_1,\overrightarrow{P_d})$ is a desired path in $G$.
If $v_3z\in E(G)$ for some $z\in\mathbb{P}_1$, then we see the following.  Observe that $yx_2\in E(G)$ and there exists a vertex in $P_b\cap I$ adjacent to $y$.  Therefore, $d^I(y)=3$ and $z\ne y$.  We obtain $(\overrightarrow{P_c},z,v_3,v,v_2,y,\overrightarrow{P_b},\overrightarrow{P_a},v_1,\overrightarrow{P_d})$ as a desired path in $G$.
\end{enumerate}
\item \emph{\bf {\em Case} 2:}$|P_a|=7,|P_b|=4$ and $v_1y\nin E(G)$. 
\begin{enumerate}[]
\item \emph{\bf {\em Case} 2.1:} $y=s_1$.  From Corollary \ref{cor1}, $s_1v_2\in E(G)$ or $s_1v_3\in E(G)$.  Without loss of generality, let $s_1v_2\in E(G)$.  Note that $s_1x_2\in E(G)$, otherwise \NC{w_2}{s_1} or \NC{w_3}{s_1} induces a \K14.  Consider $d(v_3)=1$.  Since there are at most 2 vertices in $I$ of degree 1 and $d(t_2)=1$, $\mathbb{P}_2=\emptyset$.  Now we see that $d(t_1)=d(v_2)=2$.  Since $G$ has no short $I$-$I$ path, $d(t_1)\ge3$ or $d(v_2)\ge3$.  Depending on the adjacency of $t_1,v_2$, we see the following cases, and obtain a desired path $P$ in all the possibilities.  Note that $w_1$ and $w_4$ are adjacent to a vertex in $N^I(s_1)$. \newline
If $w_1t_1\in E(G)$, then $P=(v_3,v,v_2,s_1,t_1,\overrightarrow{P_a},v_1,s_2,t_2)$.  \newline
If $w_1v_2\in E(G)$, then $P=(v_3,v,v_2,w_1,x_1,w_2,v_1,w_4,x_3,w_3,x_2,s_1\overrightarrow{P_b})$.  \newline
If $w_4t_1\in E(G)$ or $w_4v_2\in E(G)$, then we obtain a similar path.  
If $w_1x_2,w_4x_2\in E(G)$, then there exists $P_e\in \mathbb{P}_1\cup \mathbb{P}_3$, $z$ is an end vertex in $P_e$ and $z$ is adjacent to either $t_1$ or $v_2$.  \newline
If $zt_1\in E(G)$, then $P=(v_3,v,v_2,s_1,t_1,z\overrightarrow{P_e},\overrightarrow{P_a},
v_1,s_2,t_2)$.  \newline
If $zv_2\in E(G)$, then $P=(v_3,v,v_2,z\overrightarrow{P_e},w_1,x_1,w_2,v_1,w_4,x_3,w_3,x_2,s_1\overrightarrow{P_b})$.  \newline
Now we consider the case where $d(v_3)>1$.  Since there are at most 2 vertices in $I$ of degree 1 and $d(t_2)=1$, $|\mathbb{P}_2|\le1$, let $P_c\in\mathbb{P}_2$, let $P_c\in\mathbb{P}_2$.  Observe that there exists $t\in\mathbb{P}_1$ such that $tv_3\in E(G)$.  We obtain $(\overrightarrow{P_c},v_1,\overrightarrow{P_a},t,v_3,v,v_2,\overrightarrow{P_b})$ as a desired path in $G$.
\item \emph{\bf {\em Case} 2.2:} $y\in\mathbb{P}_1$.  We shall assume that $v_1s_1\in E(G)$. Note that $yx_2\in E(G)$, otherwise \NC{w_2}{y} or \NC{w_3}{y} induces a \K14.  Further, $yt_1\in E(G)$, otherwise \NC{s_2}{y} induces a \K14.  Consider $d(v_3)=1$.  Since there are at most 2 vertices in $I$ of degree 1 and $d(t_2)=1$, $\mathbb{P}_2=\emptyset$.  We obtain a desired path $P$ as follows.  $P=(v_3,v,v_2,y,\overrightarrow{P_a},v_1,\overrightarrow{P_b})$.  If $d(v_3)>1$, then note that either $v_3s_1\in E(G)$ or for some $z\in\mathbb{P}_1$, $zv_3\in E(G)$.  Since there are at most 2 vertices in $I$ of degree 1 and $d(t_2)=1$, $|\mathbb{P}_2|\le1$, let $P_c\in\mathbb{P}_2$.   If $v_3s_1\in E(G)$, then $P=(\overrightarrow{P_c},v_1,\overrightarrow{P_a},y,v_2,v,v_3,\overrightarrow{P_b})$.  If $zv_3\in E(G)$, then $P=(\overrightarrow{P_c},v_1,\overrightarrow{P_a},z,v_3,v,v_2,y,\overrightarrow{P_b})$.
\end{enumerate}

%
\item \emph{\bf {\em Case} 3:} $|P_a|=6,|P_b|=5$ and $v_1y\nin E(G)$. \\
\begin{enumerate}[  ] 
\item \emph{\bf {\em Case} 3.1:} $y\in\{s_1,s_3\}$.  Without loss of generality, let $y=s_1$.  From Corollary \ref{cor1}, $s_1v_2\in E(G)$ or $s_1v_3\in E(G)$.  Without loss of generality, let $s_1v_2\in E(G)$.  Note that $s_1x_2\in E(G)$, otherwise \NC{w_2}{s_1} or \NC{w_3}{s_1} induces a \K14.  We see the following cases and obtain a desired path $P$ as follows.  We first consider $d(v_3)=1$.  Note that $\mathbb{P}_2=\emptyset$.  Since $|I|\ge9$, there exists $P_d\in\mathbb{P}_3$.  Then $P=(v_3,v,v_2,\overrightarrow{P_b},\overrightarrow{P_d},v_1,\overrightarrow{P_a})$. If $d(v_3)>1$, then we see the following.  Note that $|\mathbb{P}_2|\le1$.  We consider the case $|\mathbb{P}_2|=1$, let $P_c\in\mathbb{P}_2$.  Consider $Q\in\mathbb{P}_2\cup\mathbb{P}_3$.  For any vertex $w'\in Q\cap K$, we claim that $v_3w'\nin E(G)$.  Suppose $v_3$ is adjacent to such a vertex $w'\in Q\cap K$, then we know that $w'v_1\in E(G)$.  Further, $w'$ is also adjacent to a vertex in $N^I(s_1)$.  Observe that \NV{w'} induces a \K14, a contradiction.  Therefore, $v_3w'\nin E(G)$.  This is true for any such vertex $w'$.  It follows that, $v_3$ is adjacent to $s_3$ or a vertex in $\mathbb{P}_1$.  Depending on the adjacency of $v_3$, we see the following sub cases. 
\begin{enumerate}[]
\item \emph{\bf {\em Case} 3.1.1:} $v_3s_3\in E(G)$.  We claim that $s_3x_2\in E(G)$.  Suppose not, then there exists $w\in\{w_2,w_3,s_1\}$ such that \NC{w}{s_3} induces a \K14.  We also observe that $w_1x_2\in E(G)$, otherwise \NC{s_1}{w_1} or \NC{s_3}{w_1} induces a \K14.  Now observe that $d(t_1)=d(t_2)=d(v_2)=d(v_3)=2$.  Since we consider $G$ with no short cycles, there exists a vertex $w''\in K$ such that $w''$ is adjacent to a vertex in $\{t_1,t_2,v_2,v_3\}$.  Now we observe that $w''$ is not an end vertex of a path in $\mathbb{P}_2\cup\mathbb{P}_3$.  Suppose not, then recall that $w''$ is adjacent to $v_1$.  Further, \NC{s_1}{w''} or \NC{s_2}{w''} induces a \K14, a contradiction.  Thus we conclude $w''\in\mathbb{P}_1$.
We obtain the following desired path depending on the adjacency of $w''$ with $\{t_1,t_2,v_2,v_3\}$.\\
If $w''t_1\in E(G)$, then $P=(\overrightarrow{P_c},w'',t_1,s_1,v_2,v,v_3,s_3,t_2,s_2,v_1,\overrightarrow{P_a})$\\
If $w''v_2\in E(G)$, then, $P=(\overrightarrow{P_c},w'',v_2,\overrightarrow{P_b},v_3,v,v_1,\overrightarrow{P_a})$\\
If $w''t_2\in E(G)$, then, $P=(\overrightarrow{P_c},w'',t_2,s_3,v_3,v,v_2,s_1,t_1,s_2,v_1,\overrightarrow{P_a})$\\
If $w''v_3\in E(G)$, then, $P=(\overrightarrow{P_c},w'',v_3,s_3\overleftarrow{P_b},v_2,v,v_1,\overrightarrow{P_a})$
\item \emph{\bf {\em Case} 3.1.2:} For $z\in\mathbb{P}_1$, $v_3z\in E(G)$.  Note that $|\mathbb{P}_2|\le1$.  If $|\mathbb{P}_2|=1$, then let $P_c\in\mathbb{P}_2$.  If $\mathbb{P}_2=\emptyset$, then there exists $P_c\in\mathbb{P}_3$ as  $|I|\ge9$.
Note that $s_3$ is adjacent to a vertex in $N^I(v)$, and depending on the adjacency, we obtain a desired path $P$ as follows.\\
If $v_1s_3\in E(G)$, then $P=(\overrightarrow{P_c},z,v_3,v,v_2,\overrightarrow{P_b},v_1,\overrightarrow{P_a})$. \\
If $v_2s_3\in E(G)$, then $P=(\overrightarrow{P_c},z,v_3,v,v_2,s_3\overleftarrow{P_b},x_2\overleftarrow{P_a}w_1,v_1,w_3,x_3)$. \\
If $v_3s_3\in E(G)$, then $P=(\overrightarrow{P_c},z,v_3,s_3\overleftarrow{P_b},v_2,v,v_1,\overrightarrow{P_a})$.
\end{enumerate}
\item \emph{\bf {\em Case} 3.2:} $y\in\mathbb{P}_1$.  In this case we shall assume that $v_1s_1,v_1s_2\in E(G)$.  From Corollary \ref{cor1}, $yv_2\in E(G)$ or $yv_3\in E(G)$.  Without loss of generality, let $yv_2\in E(G)$.  Note that $yx_2\in E(G)$, otherwise \NC{w_2}{y} or \NC{w_3}{y} induces a \K14.  We see the following cases and obtain a desired path $P$ as follows.  We first consider $d(v_3)=1$.  Note that $\mathbb{P}_2=\emptyset$.  Then $P=(v_3,v,v_2,y,\overrightarrow{P_b},v_1,\overrightarrow{P_a})$. If $d(v_3)>1$, then we see the following.  Note that $|\mathbb{P}_2|\le1$, let $P_c\in\mathbb{P}_2$.  Clearly, $v_3$ is adjacent to $s_1$ or $s_3$ or a vertex $z\in\mathbb{P}_1$.  If $v_3s_1\in E(G)$, then we obtain $(\overrightarrow{P_c},y,v_2,v,v_3,\overrightarrow{P_b},v_1,\overrightarrow{P_a})$ as a desired path.  $v_3s_3\in E(G)$ is a symmetric case.  Note that $y$ is adjacent to either $t_1$ or $t_2$, otherwise, \NC{s_2}{y} induces a \K14.  If $v_3z\in E(G)$ where $z\in\mathbb{P}_1$, $z\neq y$, then  $(\overrightarrow{P_c},y,v_2,v,v_3,z,\overrightarrow{P_b},v_1,\overrightarrow{P_a})$ is a desired path.
 
\end{enumerate}

\item \emph{\bf {\em Case} 4:} $|P_a|=6,|P_b|=4$ and $v_1y\nin E(G)$. 
\begin{enumerate}[]
\item \emph{\bf {\em Case} 4.1:} $y=s_1$.   From Corollary \ref{cor1}, $s_1v_2\in E(G)$ or $s_1v_3\in E(G)$.  Without loss of generality, let $s_1v_2\in E(G)$.  Note that $s_1x_2\in E(G)$, otherwise \NC{w_2}{s_1} or \NC{w_3}{s_1} induces a \K14.  Note that there exists $z\in\mathbb{P}_1$ such that $zv_3\in E(G)$.  Further, since $|I|\ge9$, there exists $P_d\in\mathbb{P}_3$.  We obtain $(\overleftarrow{P_a},v_1,\overrightarrow{P_d},z,v_3,v,v_2,\overrightarrow{P_b})$ as a desired path in $G$.  
\item \emph{\bf {\em Case} 4.2:} $y\in \mathbb{P}_1$.  Note that $yx_2\in E(G)$ otherwise \NC{w_2}{y} or \NC{w_3}{y} induces a \K14.  Further, $yt_1\in E(G)$, otherwise, \NC{s_2}{y} induces a \K14.  Since $d(v_3)>1$, there exists $z\in \mathbb{P}_1$, $z\ne y$ and $v_3z\in E(G)$.  Now we obtain $(\overleftarrow{P_a},v_1,\overrightarrow{P_d},z,v_3,v,v_2,y,\overrightarrow{P_b})$ as a desired path in $G$.
\end{enumerate}

\end{enumerate}
This completes the case analysis and a proof of Claim \ref{HP_P7_P6_P5_P4}. $\hfill\qed$
\end{proof}
\begin{cl}\label{HP_P7_P3}
If there exists $P_a\in\mathbb{P}_7$ and there does not exist $P\in\mathbb{S}_f$ such that $P\neq P_a$ and $|P|\ge 4$, then $G$ has a Hamiltonian path.  
\end{cl}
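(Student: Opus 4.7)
The plan is to mirror the strategy used for Claims~\ref{HP_P8}, \ref{HP_P9}, and \ref{HP_P7_P6_P5_P4}: first build an adjacency skeleton for a chosen $v_1 \in N^I(v)$, then locate its unique non-neighbour in $K$, and finally assemble a \emph{desired path} covering $\{v, v_1, v_2, v_3\} \cup V(P_a)$ and the single possible $I$-$K$ path in $\mathbb{S}_f$, into which the remaining paths of $\mathbb{P}_1 \cup \mathbb{P}_3$ can be inserted.

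Writing $P_a = P(w_1, w_2, w_3, w_4; x_1, x_2, x_3)$, I would invoke Corollary~\ref{cor1} to choose $v_1 \in N^I(v)$ with $v_1 w_2 \in E(G)$; a $K_{1,4}$-argument at $w_2$ using that $x_1, x_2 \in V_a$ have degree exactly $2$ then forces $v_1 w_4 \in E(G)$. Next, for every $P \in \mathbb{P}_2 \cup \mathbb{P}_3$ with $K$-endpoint $s$, the same flavour of argument yields $v_1 s \in E(G)$: supposing $v_1 s \notin E(G)$, Corollary~\ref{cor1} gives $v_2 s$ or $v_3 s \in E(G)$, and because $s \notin V(P_a)$ the exact-degree-$2$ condition on $x_1, x_2$ forbids $s x_1$ and $s x_2$, producing the induced $K_{1,4}$ $\{w_2, x_1, x_2, v_1, s\}$.

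Since $K$ is a maximum clique, there exists $w' \in K$ with $v_1 w' \notin E(G)$; by the preceding step $w' \in \{w_1, w_3\}$ or $\{w'\} \in \mathbb{P}_1$. Corollary~\ref{cor1} also supplies (WLOG) $v_2 w' \in E(G)$, splitting the argument into three cases. Within each I would further split on $d(v_3) = 1$ versus $d(v_3) > 1$: in the former Property~A forces $\mathbb{P}_2 = \emptyset$, while in the latter a vertex $z \in K$ with $v_3 z \in E(G)$ exists and---by the $K_{1,4}$-argument at $w'$ used in Claim~\ref{HP_P9}---cannot be the $K$-endpoint of any path in $\mathbb{P}_2 \cup \mathbb{P}_3$, hence lies in $V(P_a)$ or in $\mathbb{P}_1$.

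The easy cases use the templates $(v_3, v, v_2, w_1 \overrightarrow{P_a} w_4, v_1)$ when $w' = w_1$ and $(v_3, v, v_2, w', \overrightarrow{P_a}, v_1)$ when $\{w'\} \in \mathbb{P}_1$, each extended by inserting the ordering $\overrightarrow{Q}$ of the remaining paths of $\mathbb{P}_1 \cup \mathbb{P}_3$ immediately after $v_1$ (legal because the $K$-endpoints of $\overrightarrow{Q}$ are adjacent to $v_1$), optionally followed by $\overrightarrow{P_c}$ for the single $P_c \in \mathbb{P}_2$, and prefixed by $(\overleftarrow{P_c}, z, v_3, \ldots)$ when $d(v_3) > 1$. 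The main obstacle is the case $w' = w_3$, which has no counterpart in Claim~\ref{HP_P7_P6_P5_P4} (where a shorter companion path supplied an alternative endpoint). There I plan to use the zig-zag
\[
(v_3, v, v_2, w_3, x_2, w_2, x_1, w_1, v_1, w_4, x_3),
\]
insert $\overrightarrow{Q}$ between $v_1$ and $w_4$, and attach $P_c$ via the prefix $(\overleftarrow{P_c}, z, v_3, \ldots)$ when needed; the zig-zag is legal because one may assume $v_1 w_1 \in E(G)$, since otherwise $w_1$ is a second non-neighbour of $v_1$ and the argument reduces to the case $w' = w_1$.
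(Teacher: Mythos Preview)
There is a genuine gap in the very first reduction step. Your claim that ``$v_1 w_2 \in E(G)$ forces $v_1 w_4 \in E(G)$'' does not go through for $P_7$. In Claims~\ref{HP_P9} and~\ref{HP_P8} the analogous implication works because the target vertex is an \emph{interior} $K$-vertex of $P_a$ with two $I$-neighbours already in $P_a$; together with its mandatory neighbour in $N^I(v)$ (Corollary~\ref{cor1}) its $I$-degree is forced to $3$, so it cannot also see $x_1$ or $x_2$, and the $K_{1,4}$ at $w_2$ fires. In $P_7$ the endpoint $w_4$ has only \emph{one} $I$-neighbour $x_3$ in $P_a$, so nothing prevents $w_4 x_2 \in E(G)$, and then $\{w_2, x_1, x_2, v_1, w_4\}$ is not an induced $K_{1,4}$. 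Your patch ``$x_1, x_2 \in V_a$ have degree exactly $2$'' is not justified: $\mathbb{S}_f$ is built from $H'$ by merging paths through vertices of $I' = I_{G'} \setminus V_a$ (Theorem~\ref{delta2hamilpath}), so internal $I$-vertices of $P_a$ may well have $d_{G'} \ge 3$.

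The same objection kills the next step: an endpoint $s$ of a path in $\mathbb{P}_2 \cup \mathbb{P}_3$ has at most one $I$-neighbour inside its own path, plus one in $N^I(v)$, leaving room for $s x_1$ or $s x_2$; so $\{w_2, x_1, x_2, v_1, s\}$ need not be induced, and you cannot conclude $v_1 s \in E(G)$. The paper only obtains the weaker disjunction $v_1 s \in E(G)$ \emph{or} $x_2 s \in E(G)$, and this is exactly why it must split at the outset into Case~1 ($v_1 w_2, v_1 w_3 \in E(G)$) and Case~2 ($v_1 w_2, v_2 w_3 \in E(G)$, with $w_2, w_3$ seeing different elements of $N^I(v)$). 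Case~2 is an entire, lengthy sub-analysis---it occupies several tables in the paper---and your proposal contains no mechanism to handle it. The three templates you list all presuppose a single $v_1$ adjacent to every relevant $K$-vertex, which is precisely what fails.
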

\begin{proof}
Let $P_a=(w_1,\ldots,w_4;,x_1,\ldots,x_3)$.  From Corollary \ref{cor1}, the vertices $w_2,w_3$ are adjacent to at least one of the vertices in $N^I(v)$.  Depending on this adjacency, we see the following two cases.
\begin{table}[h!]
\begin{tabular}{ | L{0.1\textwidth} | L{0.89\textwidth} |}
\hline
\textbf{Case} & \textbf{Arguments}\\
\hline
Case 1 &  There exists $v_1\in N^I(v)$ such that $v_1w_2,v_1w_3\in E(G)$. Consider a path $Q\in\mathbb{P}_2\cup\mathbb{P}_3$, and $s\in Q\cap K$.  We first claim that $s$ is adjacent to either $v_1$ or $x_2$.  Suppose that $v_1s,x_2s\nin E(G)$.   Note that $v_2s\in E(G)$ or $v_3s\in E(G)$.  If $sx_1\nin E(G)$, then \NC{w_2}{s} induces a \K14.  Therefore,  $sx_1\in E(G)$ and similarly, $sx_3\in E(G)$, otherwise \NC{w_3}{s} induces a \K14.  It follows that \NV{s} induces a \K14.   \\ 
%
\hline
 & Therefore, all the end vertices of all such paths are adjacent to either $v_1$ or $x_2$.  Since the clique is maximal, there exists $v'\in K$ such that $v_1v'\nin E(G)$.  We further classify based on the possibilities of $v'$ as follows. \\ \hline
Case 1.1  &  $v'\in P_a$; i.e., $w_1$ or $w_4$ is non-adjacent to $v_1$.  Without loss of generality, let $v_1w_4\nin E(G)$.  Note that $w_4$ is adjacent to either $v_2$ or $v_3$. Without loss of generality, let $v_2w_4\in E(G)$.   \\ \hline
Case 1.1.1 & $d(v_3)=1$.  Since $|I|\ge9$, there exists $P_d,P_e\in\mathbb{P}_3$; $P_d=P(s_1,s_2;t_1)$, and $P_e=P(q_1,q_2;r_1)$.  Further, note that $|\mathbb{P}_2|\le1$.  If $|\mathbb{P}_2|=1$, then let $P_b\in\mathbb{P}_2$.  Recall that the vertices $s_1,s_2,q_1,q_2$ are adjacent to at least one of $v_1,x_2$.  We obtain a desired path $P$ as follows. \newline
 If $s_1v_1,q_1v_1\in E(G)$, then $P=(v_3,v,v_2,\overleftarrow{P_a},\overleftarrow{P_d},v_1,\overrightarrow{P_e},\overrightarrow{P_b})$.  \newline
 If $s_1x_2,q_1x_2\in E(G)$, then $P=(v_3,v,v_2,w_4,x_3,w_3,v_1,w_2,x_1,w_1,\overleftarrow{P_d},x_2,\overrightarrow{P_e},\overrightarrow{P_b})$.   \newline
 If $s_1v_1,q_1x_2\in E(G)$, then $P=(v_3,v,v_2,w_4,x_3,w_3,v_1,\overrightarrow{P_d},\overleftarrow{P_e},x_2\overleftarrow{P_a},\overrightarrow{P_b})$.  \newline
  If $s_1x_2,q_1v_1\in E(G)$, then $P=(v_3,v,v_2,w_4,x_3,w_3,v_1,\overrightarrow{P_e},\overleftarrow{P_d},x_2\overleftarrow{P_a},\overrightarrow{P_b})$.  
\\ \hline
Case 1.1.2 & $d(v_3)>1$.  Note that $w_4$ is adjacent to a vertex in $N^I(w_2)$.  In particular, either $w_4x_1\in E(G)$ or $w_4x_2\in E(G)$.  Thus the only vertex to which $v_3$ is adjacent in $P_a$ is $w_1$.  Consider $v_3w_1\in E(G)$.  Since $|I|\ge9$, there exists $P_d\in\mathbb{P}_3$; $P_d=P(s_1,s_2;t_1)$.  Note that $|\mathbb{P}_2|\le2$.  If $|\mathbb{P}_2|=2$, then let $P_b,P_c\in\mathbb{P}_2$. Let $y_1=P_b\cap K$, $y_2=P_c\cap K$.  Recall that $y_1,y_2,s_1,s_2$ are adjacent to either $v_1$ or $x_2$.  Let $C=(w_2,x_1,w_1,v_3,v,v_2,w_4,x_3,w_3,x_2,w_2)$ and $C'=(w_2,x_1,w_1,v_3,v,v_2,w_4,x_3,w_3,v_1,w_2)$.  We obtain desired path $P$ as follows.   If $s_1v_1\in E(G)$ then we observe the following.  If $y_1x_2\in E(G)$, then $P=(\overrightarrow{P_b},x_2,w_2\overrightarrow{C'}v_1,\overrightarrow{P_d},\overrightarrow{P_c})$.  Note $y_2x_2\in E(G)$ is a symmetric case.  If $y_1v_1,y_2v_1\in E(G)$, then note that $y_1$ is adjacent to a vertex $s$ in $N^I(w_4)$.  Note $N^I(w_4)\subset C$.  We obtain $P=(\overrightarrow{P_b},s\overrightarrow{C},\overrightarrow{P_d},v_1,\overrightarrow{P_c})$.  If $s_1x_2\in E(G)$ and  $y_1v_1\in E(G)$, then $P=(\overrightarrow{P_b},v_1,w_2\overrightarrow{C}x_2,\overrightarrow{P_d},\overrightarrow{P_c})$.   Note that $s_1x_2,y_2v_1\in E(G)$ is a symmetric case.   \\
\hline
\end{tabular}  
\end{table}
\begin{table}[h!]
\begin{tabular}{ | L{0.1\textwidth} | L{0.89\textwidth} |}
\hline
Case 1.1.2 &
If $s_1x_2,y_1x_2,y_2x_2\in E(G)$, then note that $y_1$ is adjacent to a vertex $s$ in $N^I(v)$.  Note $N^I(v)\subset C'$.  We obtain $P=(\overrightarrow{P_b},s\overrightarrow{C'},\overrightarrow{P_d},x_2,\overrightarrow{P_c})$. 
Now we consider the case in which $v_3$ is adjacent to a vertex in $\mathbb{P}_2\cup\mathbb{P}_3$.  If $v_3y_1\in E(G)$, then we observe the following.  \u{Observe that either $y_1v_1\in E(G)$ or $y_1x_2\in E(G)$.  If $w_4x_1\in E(G)$, then \NC{y_1}{w_4} induces a \K14.  Therefore, $w_4x_2\in E(G)$.}  Further, all the end vertices of paths in $\mathbb{P}_2\cup\mathbb{P}_3$ which are in $K$ are adjacent to $x_2$.  We obtain $(\overrightarrow{P_b},v_3,v,v_2,w_4,x_3,w_3,v_1,w_2,x_1,w_1,\overrightarrow{P_d},x_2,\overrightarrow{P_c})$ as a desired path.  If $v_3s_1\in E(G)$, then similar to the arguments for with respect to the vertex $y_1$, all the end vertices of paths in $\mathbb{P}_2\cup\mathbb{P}_3$ which are in $K$ are adjacent to $x_2$.  We obtain $(\overrightarrow{P_b},x_2,s_2,t_1,s_1,v_3,v,v_2,w_4,x_3,w_3,v_1,w_2,x_1,w_1,\overrightarrow{P_c})$ as a desired path. Now we shall consider the case in which $v_3$ is adjacent to a vertex in $\mathbb{P}_1$; for $w'\in\mathbb{P}_1$, let $v_3w'\in E(G)$.  We obtain desired path $P$ as follows.  
 If $y_1x_2,s_1x_2\in E(G)$, then \newline $P=(\overrightarrow{P_b},x_2,\overrightarrow{P_d},w',v_3,v,v_2,w_4,x_3,w_3,v_1,w_2,x_1,w_1,\overrightarrow{P_c})$.
\newline
 If $y_1v_1,s_1v_1\in E(G)$, then $P=(\overrightarrow{P_b},v_1,\overrightarrow{P_d},w',v_3,v,v_2,w_4\overleftarrow{P_a}w_1,\overrightarrow{P_c})$.  
\newline
 If $y_1x_2,s_1v_1\in E(G)$, then  \newline $P=(\overrightarrow{P_b},x_2\overleftarrow{P_a}w_1,\overleftarrow{P_d},v_1,w_3,x_3,w_4,v_2,v,v_3,w',\overrightarrow{P_c})$.  Note $y_1v_1,s_1x_2\in E(G)$ is a symmetric case.
\\ \hline 
 Case 1.2 & $v'\in\mathbb{P}_2$.  In this case we shall assume that $v_1w_1,v_1w_4\in E(G)$.  Let $P_b\in\mathbb{P}_2$, $P_b=P(y_1;z_1)$ and $v'=y_1$.     Note that $y_1$ is adjacent to either $v_2$ or $v_3$. Without loss of generality, let $v_2y_1\in E(G)$.  Observe that $y_1x_2\in E(G)$.  We first see the case in which $d(v_3)=1$.  Note that the vertices $w_1,w_4$ are both adjacent to either $v_2$ or $x_2$.  We obtain the following desired paths.  \newline
 If $v_2w_4\in E(G)$, then $P=(v_3,v,v_2,w_4,x_3,w_3,v_1,w_1\overrightarrow{P_a}x_2,\overrightarrow{P_b})$. 
 \newline
 If $v_2w_1\in E(G)$, then $P=(v_3,v,v_2,w_1,x_1,w_2,v_1,w_4\overleftarrow{P_a}x_2,\overrightarrow{P_b})$. 
 \newline
 If $x_2w_1,x_2w_4\in E(G)$, then observe that there exists a path $P_f\in \mathbb{P}_3\cup\mathbb{P}_1$ such that an end vertex of $P_f$ is adjacent to $v_2$, otherwise $\{v_3,v,v_2,y_1,z_1\}$ induces a short $I$-$I$ path.  We obtain \u{ $P=(v_3,v,v_2,\overrightarrow{P_f},w_1,x_1,w_2,v_1,w_4\overleftarrow{P_a}x_2,\overrightarrow{P_b})$. }
Now we see the case in which $d(v_3)>1$.  Similar to Case 1.1.2, there exists paths $P_d\in\mathbb{P}_3$ and $P_c\in\mathbb{P}_2$, $P_c=P(y_2,z_2)$.  Note that $y_1x_2\in E(G)$.  \u{Observe that $w_1x_2,w_4x_2\in E(G)$.  Further, if $v_3y_2\in E(G)$, then $y_2x_2\in E(G)$. } Note that there exists a path $P_f\in\mathbb{P}_3\cup\mathbb{P}_1$ such that an end vertex of $P_f\cap K$ is adjacent to either $v_2$ or $v_3$, otherwise, $\{z_2,y_2,v_3,v,v_2,y_1,z_1\}$ induces a short $I$-$I$ path.  Depending on the adjacency of the path $P_f$, We obtain $(\overrightarrow{P_b},v_2,\overrightarrow{P_f},\overrightarrow{P_a},v_1,v,v_3,\overrightarrow{P_c})$ or $(\overrightarrow{P_b},v_2,v,v_3,\overrightarrow{P_f},w_1,x_1,w_2,v_1,w_4\overleftarrow{P_a}x_2,\overrightarrow{P_c})$ as a desired path. If $v_3s_1\in E(G)$, then $w_1x_2,w_4x_2\in E(G)$.  Further, $y_1x_2,y_2x_2\in E(G)$. \\
\hline
\end{tabular}
\end{table}
\begin{table}[h!]
\begin{tabular}{ | L{0.1\textwidth} | L{0.89\textwidth} |}
\hline \\[-9pt]
 Case 1.2 &
 We obtain $(\overrightarrow{P_b},v_2,v,v_3,\overrightarrow{P_d},w_1,x_1,w_2,v_1,w_4\overleftarrow{P_a}x_2,\overrightarrow{P_c})$ as a desired path.  \u{ For some $w'\in \mathbb{P}_1$, if $v_3w'\in E(G)$, then we observe the following.  Recall that $y_2v_1\in E(G)$ or  $y_2x_2\in E(G)$.  Depending on the adjacency of $y_2$, we obtain  $(\overrightarrow{P_b},v_2,v,v_3,w',w_1,x_1,w_2,v_1,w_4\overleftarrow{P_a}x_2,\overrightarrow{P_c})$ or $(\overrightarrow{P_b},v_2,v,v_3,w',\overrightarrow{P_a},v_1,\overrightarrow{P_c})$ as a desired path. }
\\ \hline
Case 1.3 & $v'\in\mathbb{P}_3$.  In this case we shall assume that $v_1w_1,v_1w_4\in E(G)$.  Let $P_d\in\mathbb{P}_3$, $P_d=P(s_1,s_2;y_1)$ and $v'=s_1$.   Note that $s_1$ is adjacent to either $v_2$ or $v_3$.  Without loss of generality, let $s_1v_2\in E(G)$.  If $d(v_3)=1$, then $|\mathbb{P}_2|\le1$.  If $|\mathbb{P}_2|=1$, then let $P_b\in\mathbb{P}_2$.  We obtain  $(v_3,v,v_2,\overrightarrow{P_d},w_1,x_1,w_2,v_1,w_4\overleftarrow{P_a}x_2,\overrightarrow{P_b})$ or $(v_3,v,v_2,\overrightarrow{P_d},w_1\overrightarrow{P_a}w_4,v_1,\overrightarrow{P_b})$ as a desired path.  If $d(v_3)>1$, then similar to Case 1.2, there exists desired path in all possibilities.
\\ \hline
Case 1.4 & $v'\in\mathbb{P}_1$. Let $P_d\in\mathbb{P}_3$, $P_d=P(s_1,s_2;y_1)$.  If $d(v_3)=1$, then $|\mathbb{P}_2|\le1$.  If $|\mathbb{P}_2|=1$, then let $P_b\in\mathbb{P}_2$.  Now $(v_3,v,v_2,v',w_1,x_1,w_2,v_1,w_4\overleftarrow{P_a}x_2,\overrightarrow{P_b})$ or $(v_3,v,v_2,v',w_1\overrightarrow{P_a}w_4,v_1,\overrightarrow{P_b})$ is a desired path.  If $d(v_3)>1$, then note that $|\mathbb{P}_2|\le2$.  Further let $P_b,P_c\in \mathbb{P}_2$.  If $v_3$ is adjacent to an end vertex of a path in $\mathbb{P}_2\cup\mathbb{P}_3$, then we obtain one of the following as a desired path.  $Q_1=(\overrightarrow{P_b},v_3,v,v_2,v',w_1,x_1,w_2,v_1,w_4\overleftarrow{P_a}x_2,\overrightarrow{P_c})$ \newline $Q_2=(\overrightarrow{P_b},v_3,v,v_2,v',w_1\overrightarrow{P_a}w_4,v_1,\overrightarrow{P_c})$ \newline
  $Q_3=(\overrightarrow{P_b},\overrightarrow{P_d},v_3,v,v_2,v',w_1,x_1,w_2,v_1,w_4\overleftarrow{P_a}x_2,\overrightarrow{P_c})$  \newline
  $Q_4=(\overrightarrow{P_b},\overrightarrow{P_d},v_3,v,v_2,v',w_1\overrightarrow{P_a}w_4,v_1,\overrightarrow{P_c})$ 
If $v_3v'\in E(G)$, then note that there exists a path $P_f\in\mathbb{P}_1\cup\mathbb{P}_2\cup\mathbb{P}_3$ \u{such that} a vertex in $P_f\cup K$ is adjacent to $v_2$ or $v_3$, otherwise $\{v_2,v,v_3,v'\}$ induces a short cycle.  If $P_f\in\mathbb{P}_2$, then let $P_f=P_b$.  Let $C=(v_2,v,v_3,v',v_2)$.  Now $(\overrightarrow{P_f},\overrightarrow{C},w_1,x_1,w_2,v_1,w_4\overleftarrow{P_a}x_2,\overrightarrow{P_c})$ or $(\overrightarrow{P_f},\overrightarrow{C},w_1\overrightarrow{P_a}w_4,v_1,\overrightarrow{P_c})$ is a desired path.  If $v_3$ is adjacent to a vertex $w''\in\mathbb{P}_1$ such that $w'\neq w''$, then   $(\overrightarrow{P_b},w',v_2,v,v_3,w'',w_1,x_1,w_2,v_1,w_4\overleftarrow{P_a}x_2,\overrightarrow{P_c})$ or $(\overrightarrow{P_b},w',v_2,v,v_3,w'',w_1\overrightarrow{P_a}w_4,v_1,\overrightarrow{P_c})$ is a desired path.
\\ \hline
\end{tabular}  
\end{table}

\noindent
Case 2 is detailed in Tables \ref{tab:HP_P7_P3_2.1}, \ref{tab:HP_P7_P3_2.2}, \ref{tab:HP_P7_P3_2.3}, \ref{tab:HP_P7_P3_2.4}, \ref{tab:HP_P7_P3_2.5}, \ref{tab:HP_P7_P3_2.6}, \ref{tab:HP_P7_P3_2.7}.
This completes a proof of the claim. $\hfill\qed$
\end{proof}
\begin{table}[h!]
\begin{tabular}{ | L{0.1\textwidth} | L{0.89\textwidth} |}
\hline
\textbf{Case} & \textbf{Arguments}\\
\hline
 Case 2 &  There exists $v_1,v_2\in N^I(v)$ such that $v_1w_2,v_2w_3\in E(G)$. We obtain a  desired path $P$ in the following sub cases. \\ \hline
Case 2.1  &  $d(v_3)=1$.  Note $|\mathbb{P}_2|\le1$.  Let $P_b\in \mathbb{P}_2$, $y_1\in P_b\cap K$ and $P_d,P_e\in\mathbb{P}_3$.  Note $y_1$ is adjacent to $v_1$ or $v_2$.  Without loss of generality, let $y_1v_1\in E(G)$.  Further $w_1,w_4$ are adjacent to either $v_1$ or $v_2$.  \\ \hline
Case 2.1.1  & $v_1w_1,v_2w_4\in E(G)$.  $P=(v_3,v,v_2,\overleftarrow{P_a},v_1,y_1\overrightarrow{P_b})$ \\ \hline
Case 2.1.2  & $v_1w_4,v_2w_1\in E(G)$.  $P=(v_3,v,v_2,\overrightarrow{P_a},v_1,y_1\overrightarrow{P_b})$ \\ \hline
Case 2.1.3  & $v_1w_1,v_1w_4\in E(G)$.   Note that for $s_1\in P_d\cap K$, $s_1v_2\in E(G)$ or $s_1x_2\in E(G)$ or $s_1x_3\in E(G)$. \newline 
If $s_1v_2\in E(G)$, then $P=(v_3,v,v_2,\overrightarrow{P_d},\overrightarrow{P_a},v_1,\overrightarrow{P_b})$. \newline
 If $s_1x_2\in E(G)$, then $P=(v_3,v,v_2,w_3,x_3,w_4,v_1,w_1\overrightarrow{P_a}x_2,\overrightarrow{P_d},\overrightarrow{P_b})$. \newline
 If $s_1x_3\in E(G)$, then $P=(v_3,v,v_2,w_3\overleftarrow{P_a},v_1,w_4,x_3,\overrightarrow{P_d},\overrightarrow{P_b})$.
 \\ \hline 
Case 2.1.4  & $v_2w_1,v_2w_4\in E(G)$.   Note that for $s_1\in P_d\cap K$, $s_1v_1\in E(G)$ or $s_1x_1\in E(G)$ or $s_1x_2\in E(G)$. \newline
If $s_1v_1\in E(G)$, then $P=(v_3,v,v_2,\overrightarrow{P_a},\overleftarrow{P_d},v_1,\overrightarrow{P_b})$. \newline
 If $s_1x_1\in E(G)$, then $P=(v_3,v,v_1,w_2\overrightarrow{P_a},v_2,w_1,x_1,\overrightarrow{P_d},\overrightarrow{P_b})$. \newline
 If $s_1x_2\in E(G)$, then $P=(v_3,v,v_1,w_2,x_1,w_1,v_2,w_4\overleftarrow{P_a}x_2,\overrightarrow{P_d},\overrightarrow{P_b})$.
\\ \hline
Case 2.2  & $d(v_3)>1$.  Note $|\mathbb{P}_2|\le2$.  If $|\mathbb{P}_2|=2$, then let $P_b,P_c\in \mathbb{P}_2$.  Since $|I|\ge9$, there exists $P_d\in\mathbb{P}_3$.  Let $P_b=P(y_1;z_1)$, $P_c=P(y_2;z_2)$ and $P_d=P(s_1,s_2;t_1)$.  There exists a vertex $w*$ in $K$ such that $v_3w*\in E(G)$.  \\ \hline
Case 2.2.1  &  $w*\in P_a$.  That is, $v_3w_1\in E(G)$ or $v_3w_4\in E(G)$.  Without loss of generality, let $v_3w_1\in E(G)$.  Note that $w_1$ is adjacent to a vertex in $N^I(w_3)$ and $s_1$ is adjacent to a vertex in $N^I(w_1)$.  We see the following possibilities.\\ \hline
Case 2.2.1.A  &  $w_1v_2,s_1v_1\in E(G)$.  Note that $s_1v_2\in E(G)$ and $w_4$ is adjacent to either $v_1$ or $v_2$. \newline
If $w_4v_1,y_1v_1\in E(G)$, then $P=(\overrightarrow{P_b},v_1,\overleftarrow{P_a},v_3,v,v_2,\overrightarrow{P_d},\overrightarrow{P_c})$. \newline
If $w_4v_2,y_1v_1\in E(G)$, then $P=(\overrightarrow{P_b},v_1,v,v_3,\overrightarrow{P_a},v_2,\overrightarrow{P_d},\overrightarrow{P_c})$. \newline
If $w_4v_1,y_1v_2\in E(G)$, then $P=(\overrightarrow{P_b},v_2,v,v_3,\overrightarrow{P_a},v_1,\overrightarrow{P_d},\overrightarrow{P_c})$. \newline
If $w_4v_2,y_1v_2\in E(G)$, then $P=(\overrightarrow{P_b},v_2,\overleftarrow{P_a},v_3,v,v_1,\overrightarrow{P_d},\overrightarrow{P_c})$. \newline
If $w_4v_1,y_1v_3\in E(G)$, then $P=(\overrightarrow{P_b},v_3,\overrightarrow{P_a},v_1,v,v_2,\overrightarrow{P_d},\overrightarrow{P_c})$. \newline
If $w_4v_2,y_1v_3\in E(G)$, then $P=(\overrightarrow{P_b},v_3,\overrightarrow{P_a},v_2,v,v_1,\overrightarrow{P_d},\overrightarrow{P_c})$. 
\\ \hline
Case 2.2.1.B  &  $w_1v_2,s_1x_1\in E(G)$.  Note that $s_1v_2\in E(G)$ and $w_4$ is adjacent to either $v_2$ or $x_1$. \\ \hline
\end{tabular}  \caption{Case analysis for the proof of Claim \ref{HP_P7_P3} }
\label{tab:HP_P7_P3_2.1}
\end{table}
\begin{table}[h!]
\begin{tabular}{ | L{0.115\textwidth} | L{0.89\textwidth} |}
\hline
\textbf{Case} & \textbf{Arguments}\\
\hline
Case 2.2.1.B  &  
If $w_4x_1,y_1v_1\in E(G)$, then $P=(\overrightarrow{P_b},v_1,w_2\overrightarrow{P_a}w_4,x_1,w_1,v_3,v,v_2,\overrightarrow{P_d},\overrightarrow{P_c})$. \newline
If $w_4v_2,y_1v_1\in E(G)$, then $P=(\overrightarrow{P_b},v_1,w_2\overrightarrow{P_a}w_4,v_2,v,v_3,w_1,x_1,\overrightarrow{P_d},\overrightarrow{P_c})$. \newline
If $w_4x_1,y_1v_2\in E(G)$, then $P=(\overrightarrow{P_b},v_2,w_1,v_3,v,v_1,w_2\overrightarrow{P_a}w_4,x_1,\overrightarrow{P_d},\overrightarrow{P_c})$. \newline
If $w_4v_2,y_1v_2\in E(G)$, then $P=(\overrightarrow{P_b},v_2,w_4\overleftarrow{P_a}w_2,v_1,v,v_3,w_1,x_1,\overrightarrow{P_d},\overrightarrow{P_c})$. \newline
If $w_4x_1,y_1v_3\in E(G)$, then $P=(\overrightarrow{P_b},v_3,v,v_1,w_2\overrightarrow{P_a}w_4,x_1,w_1,v_2,\overrightarrow{P_d},\overrightarrow{P_c})$. \newline
If $w_4v_2,y_1v_3\in E(G)$, then $P=(\overrightarrow{P_b},v_3,v,v_1,w_2\overrightarrow{P_a}w_4,v_2,w_1,x_1,\overrightarrow{P_d},\overrightarrow{P_c})$. 
\\ \hline
Case 2.2.1.C  &  $w_1v_2,s_1x_2\in E(G)$.  Note that $s_1v_2\in E(G)$ or $s_1v_3\in E(G)$.
\\ \hline
Case 2.2.1.C.1  &  $s_1v_2\in E(G)$.  Note that $w_4$ is adjacent to either $v_2$ or $x_2$.  \newline
If $w_4v_2,y_1v_1\in E(G)$, then $P=(\overrightarrow{P_b},v_1,v,v_3,\overrightarrow{P_a},v_2,\overrightarrow{P_d},\overrightarrow{P_c})$. \newline
If $w_4x_2,y_1v_1\in E(G)$, then $P=(\overrightarrow{P_b},v_1,v,v_3,\overrightarrow{P_a}x_2,w_4,x_3,w_3,v_2,\overrightarrow{P_d},\overrightarrow{P_c})$. \newline   
 If $y_1v_2\in E(G)$, then we observe the following.  Let $C=(x_1,w_1,v_3,v,v_1,w_2,x_1)$.  Since we consider the case with no short cycles in $G$, there exists a vertex $m\in K\setminus C$ such that $m$ is adjacent to a vertex in $C\cap I$. \newline
 Consider $m=w_4$.  If $w_4v_2\in E(G)$, then $P=(\overrightarrow{P_b},v_2,C,w_4\overleftarrow{P_a}x_2,\overrightarrow{P_d},\overrightarrow{P_c})$. \newline
 If $w_4x_2\in E(G)$, then $P=(\overrightarrow{P_b},v_2,C,w_3,x_3,w_4,x_2,\overrightarrow{P_d},\overrightarrow{P_c})$.  
\newline Consider $m=s_2$.  Now $P=(\overrightarrow{P_b},v_2,C,s_2,t_1,s_1,x_2\overrightarrow{P_a}w_4,\overrightarrow{P_c})$.   \newline
 If $m$ is the end vertex of a path in $\mathbb{P}_2$; without loss of generality, let $m=y_1$. \newline
 If $w_4v_2\in E(G)$, then $P=(\overrightarrow{P_b},C,v_2,w_4\overleftarrow{P_a}x_2,\overrightarrow{P_d},\overrightarrow{P_c})$. \newline
 If $w_4x_2\in E(G)$, then $P=(\overrightarrow{P_b},C,v_2,w_3,x_3,w_4,x_2,\overrightarrow{P_d},\overrightarrow{P_c})$.  \newline
If $m$ is an end vertex of a path $P_f\in\mathbb{P}_3\cup\mathbb{P}_1$ other than $P_a,P_b,P_c,P_d$, then $P=(\overrightarrow{P_b},v_2,C,\overrightarrow{P_f},\overleftarrow{P_d},x_2\overrightarrow{P_a}w_4,\overrightarrow{P_c})$. 
\newline
If $w_4v_2,y_1v_3\in E(G)$, then $P=(\overrightarrow{P_b},v_3,v,v_1,w_2,x_1,w_1,v_2,w_4\overleftarrow{P_a}x_2,\overrightarrow{P_d},\overrightarrow{P_c})$. \newline
If $w_4x_2,y_1v_3\in E(G)$, then \newline $P=(\overrightarrow{P_b},v_3,v,v_1,w_2,x_1,w_1,v_2,w_3,x_3,w_4,x_2,\overrightarrow{P_d},\overrightarrow{P_c})$. 
\\ \hline 
\end{tabular}\\[10pt]   \caption{Case analysis for the proof of Claim \ref{HP_P7_P3} }
\label{tab:HP_P7_P3_2.2}
\end{table}
\begin{table}[h!]
\begin{tabular}{ | L{0.115\textwidth} | L{0.89\textwidth} |}
\hline
\textbf{Case} & \textbf{Arguments}\\
\hline
Case 2.2.1.C.2  &  $s_1v_3\in E(G)$.  Note that $w_4$ is adjacent to $v_3$ or $x_2$. \newline
If $w_4v_3,y_1v_1\in E(G)$, then $P=(\overrightarrow{P_b},v_1,v,v_2,\overrightarrow{P_a},v_3,\overrightarrow{P_d},\overrightarrow{P_c})$. \newline
If $w_4x_2,y_1v_1\in E(G)$, then \newline $P=(\overrightarrow{P_b},v_1,w_2,x_1,w_1,v_3,v,v_2,w_3,x_3,w_4,x_2,\overrightarrow{P_d},\overrightarrow{P_c})$. \newline
If $w_4v_3,y_1v_2\in E(G)$, then $P=(\overrightarrow{P_b},v_2,w_1,x_1,w_2,v_1,v,v_3,w_4\overleftarrow{P_a}x_2,\overrightarrow{P_d},\overrightarrow{P_c})$. \newline
If $w_4x_2,y_1v_2\in E(G)$, then we see the adjacency of $s_2$.  If $s_2$ is adjacent to $v_1$ or $x_1$, then Case 2.2.1.A or 2.2.1.B could be applied.  Thus we shall consider $s_2x_2\in E(G)$.  Now $P=(\overrightarrow{P_b},v_2,v,v_1,w_2,x_1,w_1,v_3,s_1,t_1,s_2,x_2\overrightarrow{P_a},\overrightarrow{P_c})$.  \newline
If $w_4v_3,y_1v_3\in E(G)$, then note that $w_4$ is adjacent to a vertex in $N^I(w_2)$.\newline
If $w_4x_2\in E(G)$, then $P=(\overrightarrow{P_b},v_3,v,v_1,w_2,x_1,w_1,v_2,w_3,x_3,w_4,x_2,\overrightarrow{P_d},\overrightarrow{P_c})$.  \newline
If $w_4x_1\in E(G)$, then $P=(\overrightarrow{P_b},v_3,w_1,v_2,v,v_1,w_2,x_1,w_4\overleftarrow{P_a}x_2,\overrightarrow{P_d},\overrightarrow{P_c})$.  \newline
If $w_4v_1\in E(G)$, then $P=(\overrightarrow{P_b},v_3,v,v_2,w_1,x_1,w_2,v_1,w_4,x_3,w_3,x_2,\overrightarrow{P_d},\overrightarrow{P_c})$.  \newline
If $w_4x_2,y_1v_3\in E(G)$, then \newline $P=(\overrightarrow{P_b},v_3,v,v_1,w_2,x_1,w_1,v_2,w_3,x_3,w_4,x_2,\overrightarrow{P_d},\overrightarrow{P_c})$. 
\\ \hline
Case 2.2.1.D  &  $w_1x_2,s_1v_1\in E(G)$.  Note that $w_4$ is adjacent to $v_1$ or $x_2$. \newline
If $w_4v_1,y_1v_1\in E(G)$, then  $P=(\overrightarrow{P_b},v_1,w_4,x_3,w_3,v_2,v,v_3,w_1\overrightarrow{P_a}x_2,\overrightarrow{P_d},\overrightarrow{P_c})$. \newline
If $w_4x_2,y_1v_1\in E(G)$, then \newline $P=(\overrightarrow{P_b},v_1,w_2,x_1,w_1,v_3,v,v_2,w_3,x_3,w_4,x_2,\overrightarrow{P_d},\overrightarrow{P_c})$. \newline
If $w_4v_1,y_1v_2\in E(G)$, then $P=(\overrightarrow{P_b},v_2,w_3,x_3,w_4,v_1,v,v_3,w_1\overrightarrow{P_a}x_2,\overrightarrow{P_d},\overrightarrow{P_c})$. \newline
If $w_4x_2,y_1v_2\in E(G)$, then $P=(\overrightarrow{P_b},v_2,w_3,x_3,w_4,x_2\overleftarrow{P_a}w_1, v_3,v,v_1,\overrightarrow{P_d},\overrightarrow{P_c})$. \newline
If $w_4v_1,y_1v_3\in E(G)$, then \newline $P=(\overrightarrow{P_b},v_3,v,v_2,w_3,x_3,w_4,v_1,w_2,x_1,w_1,x_2,\overrightarrow{P_d},\overrightarrow{P_c})$. \newline
If $w_4x_2,y_1v_3\in E(G)$, then \newline $P=(\overrightarrow{P_b},v_3,v,v_2,w_3,x_3,w_4,x_2,w_1,x_1,w_2,v_1,\overrightarrow{P_d},\overrightarrow{P_c})$. 
\\ \hline
Case 2.2.1.E  &  $w_1x_2,s_1x_1\in E(G)$.  Note that $w_4$ is adjacent to $x_1$ or $v_2$. \newline
If $w_4x_1,y_1v_1\in E(G)$, then $P=(\overrightarrow{P_b},v_1,w_2\overrightarrow{P_a}w_4,x_1,w_1,v_3,v,v_2,\overrightarrow{P_d},\overrightarrow{P_c})$. \newline
If $w_4v_2,y_1v_1\in E(G)$, then $P=(\overrightarrow{P_b},v_1,v,v_3,\overrightarrow{P_a},v_2,\overrightarrow{P_d},\overrightarrow{P_c})$. 
\\ \hline
\end{tabular}\\[10pt]  \caption{Case analysis for the proof of Claim \ref{HP_P7_P3} }
\label{tab:HP_P7_P3_2.3}
\end{table}
\begin{table}[h!]
\begin{tabular}{ | L{0.11\textwidth} | L{0.94\textwidth} |}
\hline
\textbf{Case} & \textbf{Arguments}\\
\hline
Case 2.2.1.E  & If $w_4x_1,y_1v_2\in E(G)$, then we observe the following three cases depending on the adjacency of $s_1,s_2$ with $N^I(v)$. 
If $v_1s_1\in E(G)$ or $v_1s_2\in E(G)$, then \newline $P=(\overrightarrow{P_b},v_2,w_3,x_3,w_4,x_1,w_2,x_2,w_1,v_3,v,v_1,\overrightarrow{P_d},\overrightarrow{P_c})$. 
If $v_3s_1\in E(G)$ or $v_3s_2\in E(G)$, then \newline $P=(\overrightarrow{P_b},v_2,w_3,x_3,w_4,x_1,w_1,x_2,w_2,v_1,v,v_3,\overrightarrow{P_d},\overrightarrow{P_c})$.  \newline
If $v_2s_1\in E(G)$ and $v_2s_2\in E(G)$, then  \newline $P=(\overrightarrow{P_b},v_2,s_2,t_1,s_1,x_1,w_1,v_3,v,v_1,x_2\overrightarrow{P_a},\overrightarrow{P_c})$.  \newline
If $w_4v_2,y_1v_2\in E(G)$, then $P=(\overrightarrow{P_b},v_2,w_4\overleftarrow{P_a}w_2,v_1,v,v_3,w_1,x_1,\overrightarrow{P_d},\overrightarrow{P_c})$. \newline
If $w_4x_1,y_1v_3\in E(G)$, then \newline $P=(\overrightarrow{P_b},v_3,v,v_1,w_2,x_2,w_1,x_1,w_4,x_3,w_3,v_2,\overrightarrow{P_d},\overrightarrow{P_c})$. \newline
If $w_4v_2,y_1v_3\in E(G)$, then $P=(\overrightarrow{P_b},v_3,v,v_2,w_2,x_1,w_1,x_2\overrightarrow{P_a}w_4,v_2,\overrightarrow{P_d},\overrightarrow{P_c})$. 
\\ \hline
Case 2.2.1.F  &  $w_1x_2,s_1x_2\in E(G)$.  Note that $s_1v_1\in E(G)$ or $s_1v_2\in E(G)$ or $s_1v_3\in E(G)$.
\\ \hline
Case 2.2.1.F.1  &  $s_1v_1\in E(G)$.  Note that $w_4$ is adjacent to $v_1$ or $x_2$. \newline
If $w_4v_1,y_1v_1\in E(G)$, then $P=(\overrightarrow{P_b},v_1,w_4,x_3,w_3,v_2,v,v_3,w_1\overrightarrow{P_a}x_2,\overrightarrow{P_d},\overrightarrow{P_c})$. \newline
If $w_4x_2,y_1v_1\in E(G)$, then \newline $P=(\overrightarrow{P_b},v_1,w_2,x_1,w_1,v_3,v,v_2,w_3,x_3,w_4,x_2,\overrightarrow{P_d},\overrightarrow{P_c})$. \newline
If $w_4v_1,y_1v_2\in E(G)$, then  $P=(\overrightarrow{P_b},v_2,v,v_3,\overrightarrow{P_a},v_1,\overrightarrow{P_d},\overrightarrow{P_c})$. \newline
If $w_4x_2,y_1v_2\in E(G)$, then $P=(\overrightarrow{P_b},v_2,w_3,x_3,w_4,x_2\overleftarrow{P_a}w_1,v_3,v,v_1,\overrightarrow{P_d},\overrightarrow{P_c})$. \newline
If $w_4v_1,y_1v_3\in E(G)$, then \newline $P=(\overrightarrow{P_b},v_3,v,v_2,w_3,x_3,w_4,v_1,w_2,x_1,w_1,x_2,\overrightarrow{P_d},\overrightarrow{P_c})$. \newline
If $w_4x_2,y_1v_3\in E(G)$, then \newline $P=(\overrightarrow{P_b},v_3,v,v_2,w_3,x_3,w_4,x_2,w_1,x_1,w_2,v_1,\overrightarrow{P_d},\overrightarrow{P_c})$. 
\\ \hline
Case 2.2.1.F.2  &  $s_1v_2\in E(G)$.  Note that $w_4$ is adjacent to $v_2$ or $x_2$.  \newline Let $C=(x_1,w_1,v_3,v,v_1,w_2,x_1)$. \newline
If $w_4v_2,y_1v_1\in E(G)$, then $P=(\overrightarrow{P_b},v_1,\overrightarrow{C}v,v_2,w_4\overleftarrow{P_a}x_2,\overrightarrow{P_d},\overrightarrow{P_c})$. \newline
If $w_4x_2,y_1v_1\in E(G)$, then $P=(\overrightarrow{P_b},v_1,\overrightarrow{C}v,w_3,x_3,w_4,x_2,\overrightarrow{P_d},\overrightarrow{P_c})$. \newline
 If $y_1v_2\in E(G)$, then we observe the following.  Since we consider the case with no short cycles in $G$, there exists a vertex $m\in K\setminus C$ such that $m$ is adjacent to a vertex in $C\cap I$. \newline
 Consider $m=w_4$.  If $w_4v_2\in E(G)$, then $P=(\overrightarrow{P_b},v_2,x_4\overleftarrow{P_a}x_2,C,\overrightarrow{P_d},\overrightarrow{P_c})$. \newline
 If $w_4x_2\in E(G)$, then $P=(\overrightarrow{P_b},v_2,w_3,x_3,w_4,x_2,C,\overrightarrow{P_d},\overrightarrow{P_c})$.  
Consider $m=s_2$.  Now $P=(\overrightarrow{P_b},v_2,s_1,t_1,s_2,C,x_2\overrightarrow{P_a}w_4,\overrightarrow{P_c})$.   \newline
If $m$ is the end vertex of a path in $\mathbb{P}_2$; without loss of generality, let $m=y_1$. 
\\ \hline
\end{tabular}\\[10pt]  \caption{Case analysis for the proof of Claim \ref{HP_P7_P3} }
\label{tab:HP_P7_P3_2.4}
\end{table}
\begin{table}[h!]
\begin{tabular}{ | L{0.11\textwidth} | L{0.94\textwidth} |}
\hline
\textbf{Case} & \textbf{Arguments}\\
\hline
Case 2.2.1.F.2  &  
 If $w_4v_2\in E(G)$, then $P=(\overrightarrow{P_b},C,x_2\overrightarrow{P_a}w_4,v_2,\overrightarrow{P_d},\overrightarrow{P_c})$. \newline
 If $w_4x_2\in E(G)$, then $P=(\overrightarrow{P_b},C,x_2,w_4,x_3,w_3,v_2,\overrightarrow{P_d},\overrightarrow{P_c})$.  \newline
If $m$ is an end vertex of a path $P_f\in\mathbb{P}_3\cup\mathbb{P}_1$ other than $P_d$, then we obtain the following desired paths. If $w_4v_2\in E(G)$, then  $P=(\overrightarrow{P_b},v_2,w_4\overleftarrow{P_a}x_2,C,\overrightarrow{P_f},\overleftarrow{P_d},\overrightarrow{P_c})$. \newline
 If $w_4x_2\in E(G)$, then $P=(\overrightarrow{P_b},v_2,w_3,x_3,w_4,x_2,C,\overrightarrow{P_f},\overleftarrow{P_d},\overrightarrow{P_c})$. 
\newline
If $w_4v_2,y_1v_3\in E(G)$, then $P=(\overrightarrow{P_b},v_3,\overleftarrow{C}v,v_2,w_4\overleftarrow{P_a}x_2,\overrightarrow{P_d},\overrightarrow{P_c})$. \newline
If $w_4x_2,y_1v_3\in E(G)$, then $P=(\overrightarrow{P_b},v_3,\overleftarrow{C}v,v_2,w_3,x_3,w_4,x_2,\overrightarrow{P_d},\overrightarrow{P_c})$. 
\\ \hline
Case 2.2.1.F.3  &  $s_1v_3\in E(G)$.  Note that $w_4$ is adjacent to $v_3$ or $x_2$. \newline
If $w_4v_3,y_1v_1\in E(G)$, then $P=(\overrightarrow{P_b},v_1,v,v_2,w_3,x_3,w_4,v_3,w_1\overrightarrow{P_a}x_2,\overrightarrow{P_d},\overrightarrow{P_c})$. \newline
If $w_4x_2,y_1v_1\in E(G)$, then $P=(\overrightarrow{P_b},v_1,w_2,x_1,w_1,v_3,v,v_2,w_3,x_3,w_4,x_2,\overrightarrow{P_d},\overrightarrow{P_c})$. \newline
If $w_4v_3,y_1v_2\in E(G)$, then $P=(\overrightarrow{P_b},v_2,v,v_1,w_2,x_1,w_1,v_3,w_4\overleftarrow{P_a}x_2,\overrightarrow{P_d},\overrightarrow{P_c})$. \newline
If $w_4x_2,y_1v_2\in E(G)$, then $P=(\overrightarrow{P_b},v_2,w_3,x_3,w_4,x_2,w_1,x_1,w_2,v_1,v,v_3,\overrightarrow{P_d},\overrightarrow{P_c})$. \newline
If $w_4v_3,y_1v_3\in E(G)$, then $P=(\overrightarrow{P_b},v_3,w_4,x_3,w_3,v_2,v,v_1,w_2,x_1,w_1,x_2,\overrightarrow{P_d},\overrightarrow{P_c})$. \newline
If $w_4x_2,y_1v_3\in E(G)$, then $P=(\overrightarrow{P_b},v_3,w_1,x_1,w_2,v_1,v,v_2,w_3,x_3,w_4,x_2\overrightarrow{P_d},\overrightarrow{P_c})$. 
\\ \hline
Case 2.2.1.G  &  $w_1x_3,s_1v_1\in E(G)$.  Note that $s_1x_3\in E(G)$ and $s_2$ is adjacent to a vertex in $N^I(v)$.  Further, we claim that $s_2v_1\in E(G)$, otherwise for some $w\in\{w_1,w_2,w_3,v,s_1\}$, \NC{w}{s_2} or \NC{w}{y_1} induces a \K14.  We also observe that for all paths $Q\in \mathbb{P}_2\cup\mathbb{P}_3$, all the vertices $u\in Q\cap K$ are adjacent to both $v_1$ and $x_3$.  Let $C=(w_1,x_1,w_2,x_2,w_3,v_2,v,v_3,w_1)$.  Since we consider the case with no short cycles in $G$, there exists a vertex $w'\in \mathbb{P}_1$ such that $w'$ is adjacent to a vertex in $C\cap I$.  We obtain $P=(\overrightarrow{P_b},w',C,v_1,s_1,t_1,s_2,x_3,w_4,\overrightarrow{P_c})$.  
\\ \hline
Case 2.2.1.H  &  $w_1x_3,s_1x_1\in E(G)$.  Note that $s_1v_2\in E(G)$.  Similar to the previous case, for all paths $Q\in \mathbb{P}_2\cup\mathbb{P}_3$, all the vertices $u\in Q\cap K$ are adjacent to both $v_2$ and $x_1$.  We obtain $P=(\overrightarrow{P_b},v_2,s_1,t_1,s_2,x_1,w_1,v_3,v,v_1,w_2\overrightarrow{P_a}w_4,\overrightarrow{P_c})$.
\\ \hline
Case 2.2.1.I  &  $w_1x_3,s_1x_2\in E(G)$.  Note that $s_1v_3\in E(G)$.  Similar to the previous cases, for all paths $Q\in \mathbb{P}_2\cup\mathbb{P}_3$, all the vertices $u\in Q\cap K$ are adjacent to both $v_3$ and $x_2$.  We obtain $P=(\overrightarrow{P_b},v_3,s_1,t_1,s_2,x_2,w_3,v_2,v,v_1,w_2,x_1,w_1,x_3,w_4,\overrightarrow{P_c})$. 
\\ \hline
Case 2.2.2 &  $w*\in\mathbb{P}_2$.  Note $|\mathbb{P}_2|\le2$.  Let $P_b,P_c\in\mathbb{P}_2$, $P_b=P(y_1;z_1)$ and $w*=y_1$.  That is, $v_3y_1\in E(G)$.   Note $y_1x_2\in E(G)$.  Further, observe that for all paths $Q\in \mathbb{P}_2\cup\mathbb{P}_3$, all the vertices $u\in Q\cap K$ are adjacent to $x_2$.  Since $|I|\ge9$, there exists $P_d\in\mathbb{P}_3$.  Let $P_d=P(s_1,s_2;t_1)$. \\ \hline
\end{tabular}\\  \caption{Case analysis for the proof of Claim \ref{HP_P7_P3}} 
\label{tab:HP_P7_P3_2.5}
\end{table}
\begin{table}[h!]
\begin{tabular}{ | L{0.085\textwidth} | L{0.935\textwidth} |}
\hline
\textbf{Case} & \textbf{Arguments}\\
\hline
Case 2.2.2 & We see the adjacency of $s_1$ and obtain a desired path $P$ in all possibilities as follows. \newline
 If $s_1v_1\in E(G)$, then $P=(\overrightarrow{P_c},x_2,s_2,t_1,s_1,v_1,w_2,x_1,w_1,w_4,x_3,w_3,v_2,v,v_3,\overrightarrow{P_b})$.  \newline
 If $s_1v_2\in E(G)$, then $P=(\overrightarrow{P_c},x_2,s_2,t_1,s_1,v_2,w_3,x_3,w_4,w_1,x_1,w_2,v_1,v,v_3,\overrightarrow{P_b})$.  \newline
 If $s_1v_3\in E(G)$, then we see adjacency of $w_4$.
 If $w_4v_1\in E(G)$, then  $P=(\overrightarrow{P_c},x_2,s_2,t_1,s_1,v_3,v,v_2,w_3,x_3,w_4,v_1,w_2,x_1,w_1,\overrightarrow{P_b})$.  \newline
 If $w_4v_2\in E(G)$, then note that $x_2w_4\in E(G)$ and $P=(\overrightarrow{P_c},v_3,s_1,t_1,s_2,x_2,w_4,x_3,w_3,v_2,v,v_1,w_2,x_1,w_1,\overrightarrow{P_b})$.  \newline
  If $w_4v_3\in E(G)$, then  $P=(\overrightarrow{P_c},x_2,s_2,t_1,s_1,v_3,w_4,x_3,w_3,v_2,v,v_1,w_2,x_1,w_1,\overrightarrow{P_b})$.   
\\ \hline
Case 2.2.3 &  $w*\in\mathbb{P}_3$.  Let $P_d=P(s_1,s_2;t_1)$ and $w*=s_1$.  That is, $v_3s_1\in E(G)$.  Note $|\mathbb{P}_2|\le2$.  Let $P_b,P_c\in\mathbb{P}_2$, $P_b=P(y_1;z_1)$ and $P_c=P(y_2;z_2)$.  Similar to the previous case observe that $y_1x_2,y_2x_2\in E(G)$.  We see the adjacency of $s_2$ and obtain a desired path $P$ in all possibilities as follows. 
If $s_2v_1\in E(G)$, then we see adjacency of $y_2$.  \newline
If $y_2v_1\in E(G)$, then $P=(\overrightarrow{P_c},v_1,s_2,t_1,s_1,v_3,v,v_2,w_3,x_3,w_4,w_1,x_1,w_2,x_2,\overrightarrow{P_b})$.  \newline
If $y_2v_2\in E(G)$, then $P=(\overrightarrow{P_c},v_2,v,v_3,s_1,t_1,s_2,v_1,w_2,x_1,w_1,w_4,x_3,w_3,x_2,\overrightarrow{P_b})$.  \newline
If $y_2v_3\in E(G)$, then $P=(\overrightarrow{P_c},v_3,s_1,t_1,s_2,v_1,v,v_2,w_3,x_3,w_4,w_1,x_1,w_2,x_2,\overrightarrow{P_b})$.  \newline
If $s_2v_2\in E(G)$, then we see adjacency of $y_2$.  \newline
If $y_2v_1\in E(G)$, then $P=(\overrightarrow{P_c},v_1,v,v_3,s_1,t_1,s_2,v_2,w_3,x_3,w_4,w_1,x_1,w_2,x_2,\overrightarrow{P_b})$.  \newline
If $y_2v_2\in E(G)$, then $P=(\overrightarrow{P_c},v_2,s_2,t_1,s_1,v_3,v,v_1,w_2,x_1,w_1,w_4,x_3,w_3,x_2,\overrightarrow{P_b})$.  \newline
If $y_2v_3\in E(G)$, then $P=(\overrightarrow{P_c},v_3,s_1,t_1,s_2,v_2,v,v_1,w_2,x_1,w_1,w_4,x_3,w_3,x_2,\overrightarrow{P_b})$.  \newline
If $s_2v_3\in E(G)$, then we see adjacency of $w_4$.  \newline
If $w_4v_1\in E(G)$, then $P=(\overrightarrow{P_c},x_2,s_1,t_1,s_2,v_3,v,v_2,w_3,x_3,w_4,v_1,w_2,x_1,w_1,\overrightarrow{P_b})$.  \newline
If $w_4v_2\in E(G)$, then note that $w_4x_2\in E(G)$ and we see the adjacency of $y_2$.  \newline
If $y_2v_1\in E(G)$, then  $P=(\overrightarrow{P_b},x_2,s_1,t_1,s_2,v_3,v,v_2,w_3,x_3,w_4,w_1,x_1,w_2,v_1,\overrightarrow{P_c})$.  \\
\hline
\end{tabular}\\  \caption{Case analysis for the proof of Claim \ref{HP_P7_P3}} 
\label{tab:HP_P7_P3_2.6}
\end{table}
\begin{table}[h!]
\begin{tabular}{ | L{0.085\textwidth} | L{0.935\textwidth} |}
\hline
\textbf{Case} & \textbf{Arguments}\\ \hline 
Case 2.2.3 &
If $y_2v_2\in E(G)$, then  $P=(\overrightarrow{P_b},x_2,s_1,t_1,s_2,v_3,v,v_1,w_2,x_1,w_1,w_4,x_3,w_3,v_2,\overrightarrow{P_c})$.  \newline
If $y_2v_3\in E(G)$, then  $P=(\overrightarrow{P_b},x_2,w_4,x_3,w_3,v_2,v,v_1,w_2,x_1,w_1,s_1,t_1,s_2,v_3,\overrightarrow{P_c})$.  \newline
If $w_4v_3\in E(G)$, then $P=(\overrightarrow{P_c},x_2,s_1,t_1,s_2,v_3,w_4,x_3,w_3,v_2,v,v_1,w_2,x_1,w_1,\overrightarrow{P_b})$.  
\\ \hline
Case 2.2.4 &  $w*\in\mathbb{P}_1$.  Note that $|\mathbb{P}_2|\le2$.  Let $P_b,P_c\in\mathbb{P}_2$, $P_b=P(y_1;z_1)$ and $P_c=P(y_2;z_2)$.  Since $|I|\ge9$, there exists $P_d\in\mathbb{P}_3$.  Let $P_d=P(s_1,s_2;t_1)$.  In this case we shall assume that none of the vertices $y_1,y_2,s_1,s_2$ are  adjacent to $v_3$.  Thus vertices $s_1,y_1,y_2$ are adjacent to either $v_1$ or $v_2$.  It follows that there exists a vertex $u\in \{v_1,v_2\}$ such that $us_1,uy_1\in E(G)$ or $uy_1,uy_2\in E(G)$ or $us_1,uy_2\in E(G)$.  Without loss of generality, let $u=v_1$.  If $v_1s_1,v_1y_1\in E(G)$, then we observe the adjacency of $y_2$.  Note that $y_2$ is adjacent to a vertex in $N^I(w_3)$.    \newline
If $y_2v_2\in E(G)$, then $P=(\overrightarrow{P_b},v_1,\overrightarrow{P_d},\overrightarrow{P_a},w*,v_3,v,v_2,\overrightarrow{P_c})$. \newline
If $y_2x_2\in E(G)$, then $P=(\overrightarrow{P_b},v_1,\overrightarrow{P_d},w*,v_3,v,v_2,w_3,x_3,w_4,w_1\overrightarrow{P_a}x_2,\overrightarrow{P_c})$. \newline
If $y_2x_3\in E(G)$, then $P=(\overrightarrow{P_b},v_1,\overrightarrow{P_d},w*,v_3,v,v_2,w_3\overleftarrow{P_a}w_1,w_4,x_3,\overrightarrow{P_c})$. \newline
When $v_1s_1,v_1y_2\in E(G)$, then the case is symmetric and the desired paths could be obtained similar as above by interchanging the paths $P_c$ and $P_d$. 
Now we consider the case in which $v_1y_1,v_1y_2\in E(G)$.  We shall consider $s_1v_2\in E(G)$.  \newline
If $w_4v_1\in E(G)$, then $P=(\overrightarrow{P_b},v_1,w_4\overleftarrow{P_a},s_2,t_1,s_1,v_2,v,v_3,w*,\overrightarrow{P_c})$. \newline
If $w_4v_2\in E(G)$, then $P=(\overrightarrow{P_b},v_1,v,v_3,w*,w_1\overrightarrow{P_a}w_4,v_2,s_1,t_1,s_2,\overrightarrow{P_c})$. \newline
If $w_4v_3\in E(G)$, then
 $P=(\overrightarrow{P_b},v_1,v,v_2,s_1,t_1,s_2,w_1\overrightarrow{P_a}w_4,v_3,w*,\overrightarrow{P_c})$. \newline 
\\ \hline
\end{tabular}\\  \caption{Case analysis for the proof of Claim \ref{HP_P7_P3}} 
\label{tab:HP_P7_P3_2.7}
\end{table}

\begin{cl}\label{HP_P6_P3}
If there exists $P_a\in\mathbb{P}_6$ and there does not exist $P\in\mathbb{S}_f$ such that $P\neq P_a$ and $|P|\ge 4$, then $G$ has a Hamiltonian path.  
\end{cl}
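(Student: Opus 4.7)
The plan is to mimic the architecture of Claim~\ref{HP_P7_P3}, exploiting the simplification that since $P_a\in\mathbb{P}_6$ is an $I$-$K$ path with $I$-endpoint $x_3$ of degree $1$, Property~A forces $|\mathbb{P}_2|\le 1$ (and $\mathbb{P}_2=\emptyset$ when $d(v_3)=1$). Write $P_a=P(w_1,w_2,w_3;x_1,x_2,x_3)$, i.e., $P_a=(w_1,x_1,w_2,x_2,w_3,x_3)$. By Corollary~\ref{cor1} applied to $w_2$ (and $w_3$), some vertex of $N^I(v)$ is adjacent to each of $w_2,w_3$. Split into two top-level cases: (Case~1) there is $v_1\in N^I(v)$ with $v_1w_2,v_1w_3\in E(G)$, and (Case~2) the two adjacencies are realised by distinct vertices $v_1w_2,v_2w_3\in E(G)$.

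In each top-level case I would first establish the following structural fact, in the style of Claims~\ref{HP_P7_P3},~\ref{HP_P8},~\ref{HP_P9}: for any path $Q\in\mathbb{P}_2\cup\mathbb{P}_3$ and any end-vertex $s\in Q\cap K$, either $v_1 s\in E(G)$ or $s$ is adjacent to a canonical $I$-vertex of $P_a$ (namely $x_2$). The argument is the by-now-standard one: if $v_1s\notin E(G)$, then by Corollary~\ref{cor1} some $v_j\in\{v_2,v_3\}$ is adjacent to $s$; if additionally $sx_2\notin E(G)$, then by $K_{1,4}$-freeness applied to $N^I(w_2)\cup\{w_2,s\}$ and $N^I(w_3)\cup\{w_3,s\}$ one forces $s$ to be adjacent to both $x_1$ and $x_3$, giving a $K_{1,4}$ rooted at $s$ via $N^I(s)\cup\{s\}$, a contradiction.

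Next, by maximality of $K$ there exists $v'\in K$ with $v_1v'\notin E(G)$, and I would branch on where $v'$ lives: $v'\in P_a$ (so $v'\in\{w_1,w_3\}$), $v'\in\mathbb{P}_2$, $v'\in\mathbb{P}_3$, or $v'\in\mathbb{P}_1$. In each branch, further split on $d(v_3)=1$ versus $d(v_3)>1$. When $d(v_3)=1$ we have $\mathbb{P}_2=\emptyset$ (since $d(x_3)=1$ already accounts for one degree-1 vertex), and by Property~A $|I|\ge 9$ forces $\mathbb{P}_3\ne\emptyset$, so there is always at least one $P_d\in\mathbb{P}_3$ to absorb the degree-one neighbour $v_3$. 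When $d(v_3)>1$ there is a vertex $z\in K$ with $v_3z\in E(G)$; I would show, as in Case~2.2 of Claim~\ref{HP_P7_P3}, that $z$ must lie in $P_a$, in a path of $\mathbb{P}_2\cup\mathbb{P}_3$, or in $\mathbb{P}_1$, and use that to pin down an anchor for $v_3$. In every subcase, exhibit a desired path that traverses $\{v,v_1,v_2,v_3\}$, all of $P_a$, and at most one $P_b\in\mathbb{P}_2$, and then splice in $\overrightarrow{Q}$, an ordering of the remaining paths in $\mathbb{P}_1\cup\mathbb{P}_3$, at an interface vertex adjacent to $v_1$ or $v_2$ (which is guaranteed by the structural fact and by Claim~\ref{xunivtwopathseven}).

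The main obstacle, as with Claims~\ref{HP_P7_P3} and~\ref{HP_P8}, is sheer case-volume rather than depth: each subcase demands an explicit concatenation witnessing the desired path. Two subtleties need particular care. First, when $v'\in\{w_1,w_3\}$ with $v_1v'\notin E(G)$, the endpoint $w_1$ (in $K$) and the endpoint $x_3$ (in $I$, with $d(x_3)=1$) behave asymmetrically, so the two sub-branches are not symmetric and must be handled separately, with the $w_3$-branch using the forced edge $w_3x_3$ to terminate the path at $x_3$. Second, in the $v'\in\mathbb{P}_1$ branch with $d(v_3)>1$, one must verify that the vertex $z$ furnishing $v_3z\in E(G)$ can be chosen distinct from $v'$ and from the end-vertex of any $\mathbb{P}_2$-path; this follows from the no-short-$I$-$I$-path hypothesis together with the $K_{1,4}$-freeness test $N^I(w_2)\cup\{w_2,z\}$, analogous to the arguments used in Case~2.2 of Claim~\ref{HP_P7_P3}. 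Once these two points are handled, the remaining subcases are routine concatenations and the claim follows.
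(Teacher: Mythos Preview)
Your proposal is correct and follows essentially the same architecture as the paper's proof: the same two top-level cases on whether $v_1w_2,v_1w_3$ share a common $v_1\in N^I(v)$, the same structural fact that every relevant $K$-vertex is adjacent to $v_1$ or $x_2$, the same four-way branch on the location of the non-neighbour $v'$ of $v_1$, and the same $d(v_3)=1$ versus $d(v_3)>1$ sub-split.

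Two minor imprecisions worth cleaning up. First, in Case~1 you already have $v_1w_3\in E(G)$, so the branch $v'\in P_a$ forces $v'=w_1$; there is no $v'=w_3$ sub-branch here, and your ``asymmetric endpoints'' discussion applies, if at all, only inside Case~2. Second, your derivation of the structural fact takes a detour: since $x_3$ is the $I$-end of the $I$-$K$ path $P_a$, $d_G(x_3)=1$ and hence $sx_3\notin E(G)$ for any $s\ne w_3$. Thus the $K_{1,4}$ test at $w_3$ (with $N^I(w_3)=\{v_1,x_2,x_3\}$) already gives the contradiction directly when $v_1s,x_2s\notin E(G)$; you never ``force'' $sx_3\in E(G)$. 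The paper in fact states the conclusion more broadly---every vertex of $K\setminus P_a$ is adjacent to $v_1$ or $x_2$---which you may as well use. Neither point is a genuine gap.
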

\begin{proof}
Let $P_a=(w_1,\ldots,w_3;,x_1,\ldots,x_3)$.  
From Corollary \ref{cor1}, the vertices $w_2,w_3$ are adjacent to at least one of the vertices in $N^I(v)$.  Depending on this adjacency, we see the following two cases as shown in Tables \ref{tab:HP_P6_P3_1}, \ref{tab:HP_P6_P3_2}, \ref{tab:HP_P6_P3_3}, \ref{tab:HP_P6_P3_4}.
\begin{table}[h!]
\begin{tabular}{ | L{0.1\textwidth} | L{0.89\textwidth} |}
\hline
\textbf{Case} & \textbf{Arguments}\\
\hline
Case 1 &  There exists $v_1\in N^I(v)$ such that $v_1w_2,v_1w_3\in E(G)$.  Note that all the vertices in $K\setminus P_a$ are adjacent to either $v_1$ or $x_2$.  Since the clique is maximal, there exists $v'\in K$ such that $v_1v'\nin E(G)$.  We further classify based on the possibilities of $v'$ as follows. \\ \hline
Case 1.1  &  $v'\in P_a$; i.e., $w_1v_1\nin E(G)$.  Note that $w_1$ is adjacent to either $v_2$ or $v_3$.  Without loss of generality, let $v_2w_1\in E(G)$.  Clearly, $w_1x_2\in E(G)$.  \\ \hline
Case 1.1.1 & $d(v_3)=1$.  Since $|I|\ge9$, there exists $P_d,P_e\in\mathbb{P}_3$; $P_d=P(s_1,s_2;t_1)$ and $P_e=P(q_1,q_2;r_1)$.  Further, note that $\mathbb{P}_2=\emptyset$.  Recall that the vertices $s_1,s_2,q_1,q_2$ are adjacent to at least one of $v_1,x_2$.  We obtain a desired path as follows. 
 If $s_1v_1,q_1x_2\in E(G)$, then desired path $P_i=(v_3,v,v_2,w_1,x_1,w_2,v_1,\overrightarrow{P_d},\overleftarrow{P_e},x_2,w_3,x_3)$.  
 If $s_1x_2,q_1v_1\in E(G)$, then desired path $P_j=(v_3,v,v_2,w_1,x_1,w_2,v_1,\overrightarrow{P_e},\overleftarrow{P_d},x_2,w_3,x_3)$.  \newline
 If $s_1v_1,q_1v_1\in E(G)$, then note that $s_1$ is adjacent to a vertex in $N^I(w_1)$. \newline
 If $s_1x_2\in E(G)$, then $P_j$ is a desired path. \newline
 If $s_1x_1\in E(G)$, then $P=(v_3,v,v_2,w_1,x_1,\overrightarrow{P_d},\overleftarrow{P_e},v_1,w_2\overrightarrow{P_a})$.  \newline
 If $s_1v_2\in E(G)$, then $P=(v_3,v,v_2,\overrightarrow{P_d},\overleftarrow{P_e},v_1,w_2,x_1,w_1,x_2,w_3,x_3)$.  \newline
 If $s_1x_2,q_1x_2\in E(G)$, then we observe the following.   Note that $s_1v_1\in E(G)$ or $s_1v_2\in E(G)$.  \newline
 If $s_1v_1\in E(G)$, then $P_i$ is a desired path. \newline
  If $s_1v_2\in E(G)$, then $P=(v_3,v,v_1,w_2,x_1,w_1,v_2,\overrightarrow{P_d},\overleftarrow{P_e},x_2,w_3,x_3)$. 
\\ \hline
Case 1.1.2 & $d(v_3)>1$.  Since $|I|\ge9$, there exists $P_d,P_e\in\mathbb{P}_3$; $P_d=P(s_1,s_2;t_1)$ and $P_e=P(q_1,q_2;r_1)$.  Note that $|\mathbb{P}_2|\le1$.  If $|\mathbb{P}_2|=1$, then let $P_b\in\mathbb{P}_2$. Let $P_b=P(y_1;z_1)$.  Recall that the vertices $y_1,s_1,s_2,q_1,q_2$ are adjacent to either $v_1$ or $x_2$.  Since $d(v_3)>1$, there exists a path $Q$ in $\mathbb{P}_3\cup\mathbb{P}_2\cup\mathbb{P}_1$ such that $v_3$ is adjacent to an end vertex of $Q$.  We see the following possibilities.  If $Q\in\mathbb{P}_3$, then without loss of generality we shall assume $Q=P_d$ and $v_3s_1\in E(G)$.  Observe that all the vertices $y_1,q_1,q_2$ are adjacent to $x_2$.  We obtain $P=(\overrightarrow{P_b},x_2,\overrightarrow{P_e},\overleftarrow{P_d},v_3,v,v_2,w_1,x_1,w_2,v_1,w_3,x_3)$ as a desired path.  If $Q\in\mathbb{P}_2$, then $Q=P_b$ and $v_3y_1\in E(G)$.  Observe that all the vertices $s_1,s_2,q_1,q_2$ are adjacent to $x_2$. We obtain the following desired paths depending on the adjacency of $s_1$ with $N^I(v)$.  \newline
If $s_1v_3\in E(G)$, then $P=(\overrightarrow{P_b},x_2,\overleftarrow{P_d},v_3,v,v_2,w_1,x_1,w_2,v_1,w_3,x_3)$.   \newline
If $s_1v_2\in E(G)$, then $P=(\overrightarrow{P_b},v_3,v,v_2,\overrightarrow{P_d},x_2,w_1,x_1,w_2,v_1,w_3,x_3)$.   \newline
If $s_1v_1\in E(G)$, then  $P=(\overrightarrow{P_b},v_3,v,v_2,w_1,x_1,w_2,v_1,\overrightarrow{P_d},x_2,w_3,x_3)$.   \newline
$Q\in\mathbb{P}_1$. For $w'\in\mathbb{P}_1$, $v_3w'\in E(G)$.   
\\ \hline 
\end{tabular}\\[10pt]  \caption{Case analysis for the proof of Claim \ref{HP_P6_P3} }
\label{tab:HP_P6_P3_1}
\end{table}
\begin{table}[h!]
\begin{tabular}{ | L{0.1\textwidth} | L{0.89\textwidth} |}
\hline
\textbf{Case} & \textbf{Arguments}\\
\hline
Case 1.1.2 &
If $s_1v_1,q_1v_1\in E(G)$, then $P=(\overrightarrow{P_b},\overleftarrow{P_d},v_1,\overrightarrow{P_e},w',v_3,v,v_2,\overrightarrow{P_a})$.  \newline
 If $s_1x_2,q_1x_2\in E(G)$, then \newline $P=(\overrightarrow{P_b},\overleftarrow{P_d},x_2,\overrightarrow{P_e},w',v_3,v,v_2,w_1,x_1,w_2,v_1,w_3,x_3)$.   \newline
 If $s_1v_1,q_1x_2\in E(G)$, then \newline $P=(\overrightarrow{P_b},w',v_3,v,v_2,w_1,x_1,w_2,v_1,\overrightarrow{P_d},\overleftarrow{P_e},x_2,w_3,x_3)$.  \newline
  If $s_1x_2,q_1v_1\in E(G)$, then \newline $P=(\overrightarrow{P_b},w',v_3,v,v_2,w_1,x_1,w_2,v_1,\overrightarrow{P_e},\overleftarrow{P_d},x_2,w_3,x_3)$. \\
\hline  
Case 1.2 & $v'\in\mathbb{P}_2$.  In this case we shall assume that $w_1v_1\in E(G)$.  Let $P_b\in\mathbb{P}_2$, $P_b=P(y_1;z_1)$ and  $v_1y_1\nin E(G)$.  We know that either $y_1v_2\in E(G)$ or $y_1v_3\in E(G)$.  Without loss of generality, let $y_1v_2\in E(G)$.  Note that $y_1x_2\in E(G)$.  Further, $d(v_3)>1$.  Observe that $w_1$ is adjacent to a either $v_2$ or $x_2$.  Clearly, $v_3$ is adjacent to an end vertex of a path in $\mathbb{P}_1\cup\mathbb{P}_3$.  Since $|I|\ge9$, there exists $P_d\in\mathbb{P}_3$; $P_d=P(s_1,s_2;t_1)$.  If $v_3$ is adjacent to an end vertex of a path in $\mathbb{P}_3$, then without loss of generality, let $v_3s_1\in E(G)$. 
 Recall that $s_2$ is adjacent to either $v_1$ or $x_2$.  We obtain a desired path $P$ as follows.  \newline
If $s_2v_1\in E(G)$, then $P=(\overrightarrow{P_b},v_2,v,v_3,s_1,t_1,s_2,v_1,\overrightarrow{P_a})$.  \newline
If $s_2x_2\in E(G)$, then $P=(\overrightarrow{P_b},v_2,v,v_3,s_1,t_1,s_2,x_2,w_2,x_1,w_1,v_1,w_3,x_3)$.  \newline
If $v_3$ is adjacent to a vertex $z\in\mathbb{P}_1$, then we obtain the following desired paths.  \newline
If $s_2v_1\in E(G)$, then $P=(\overrightarrow{P_b},v_2,v,v_3,z,s_1,t_1,s_2,v_1,\overrightarrow{P_a})$.  \newline
If $s_2x_2\in E(G)$, then $P=(\overrightarrow{P_b},v_2,v,v_3,z,s_1,t_1,s_2,x_2,w_2,x_1,w_1,v_1,w_3,x_3)$.  
\\ \hline
Case 1.3 & $v'\in\mathbb{P}_3$.  In this case we shall assume that $w_1v_1\in E(G)$.  Let $P_d\in\mathbb{P}_3$, $P_d=P(s_1,s_2;t_1)$ and $v_1s_1\nin E(G)$.  We know that either $s_1v_2\in E(G)$ or $s_1v_3\in E(G)$.  Without loss of generality, let $s_1v_2\in E(G)$.  Note $|\mathbb{P}_2
|\le1$.  If $|\mathbb{P}_2|=1$, then let $P_b\in\mathbb{P}_2$, $P_b=P(y_1;z_1)$.  Since $|I|\ge9$, there exists $P_e\in\mathbb{P}_3$; $P_e=P(q_1,q_2;r_1)$.  If $d(v_3)=1$, then we obtain the following.  Recall that $s_2$ is adjacent to either $v_1$ or $x_2$. \newline
 If $s_2v_1\in E(G)$, then $P=(v_3,v,v_2,s_1,t_1,s_2,v_1,\overrightarrow{P_a})$. \newline
 If $s_2x_2\in E(G)$, then $P=(v_3,v,v_2,s_1,t_1,s_2,x_2,w_2,x_1,w_1,v_1,w_3,x_3)$. \newline
 Now we shall see the case in which $d(v_3)>1$.  Note that $s_1x_2\in E(G)$.  Further, $w_1$ is adjacent to a vertex in $N^I(s_1)$.  Thus $v_3$ is adjacent to $s_2$ or a clique vertex in $\mathbb{P}_1\cup\mathbb{P}_2\cup\mathbb{P}_3$.  \\ \hline
 Case 1.3.1 & 
 If $v_3s_2\in E(G)$, then we obtain the following.  \newline
 If $w_1v_2,y_1v_1\in E(G)$, then \newline $P=(\overrightarrow{P_b},v_1,w_2,x_1,w_1,v_2,v,v_3,s_2,t_1,s_1,x_2,w_3,x_3)$  \newline

\\ \hline 
\end{tabular}  \caption{Case analysis for the proof of Claim \ref{HP_P6_P3} }
\label{tab:HP_P6_P3_2}
\end{table}
\begin{table}[h!]
\begin{tabular}{ | L{0.1\textwidth} | L{0.89\textwidth} |}
\hline
\textbf{Case} & \textbf{Arguments}\\
\hline
Case 1.3.1 & 
  If $w_1v_2,y_1x_2\in E(G)$, then \newline $P=(\overrightarrow{P_b},x_2,s_1,t_1,s_2,v_3,v,v_2,w_1,x_1,w_2,v_1,w_3,x_3)$ \newline
 If $w_1t_1,y_1v_1\in E(G)$, then \newline $P=(\overrightarrow{P_b},v_1,w_2,x_1,w_1,t_1,s_2,v_3,v,v_2,s_1,x_2,w_3,x_3)$  \newline
  If $w_1t_1,y_1x_2\in E(G)$, then \newline $P=(\overrightarrow{P_b},x_2,s_1,v_2,v,v_3,s_2,t_1,w_1,x_1,w_2,v_1,w_3,x_3)$ \newline
If $w_1x_2\in E(G)$, then we observe the following.  Note that $d(v_2)=d(v_3)=d(t_1)=2$.  Let $C=(v,v_2,s_1,t_1,s_2,v_3,v)$.  We consider $G$ with no short cycles.  Therefore, there exists a vertex $z$ in $K\setminus\{w_1,w_2,w_3,v,s_1,s_2\}$ adjacent to a vertex in $C\cap I$.  If $z\in \mathbb{P}_2$; that is, $z=y_1$, then we obtain the following desired paths.  \newline
If $q_2v_1\in E(G)$, then $P=(\overrightarrow{P_b},\overrightarrow{C},\overrightarrow{P_e},v_1,\overrightarrow{P_a})$. \newline
If $q_2x_2\in E(G)$, then $P=(\overrightarrow{P_b},\overrightarrow{C},\overrightarrow{P_e},x_2,w_2,x_1,w_1,v_1,w_3,x_3)$. \newline
If $z\in \mathbb{P}_3$, then without loss of generality, let $z=s_1$, then we obtain the following desired paths.  
If $q_2v_1\in E(G)$, then $P=(\overrightarrow{P_b},\overleftarrow{P_d},\overrightarrow{C},\overrightarrow{P_e},v_1,\overrightarrow{P_a})$. \newline
If $q_2x_2\in E(G)$, then $P=(\overrightarrow{P_b},\overleftarrow{P_d},\overrightarrow{C},\overrightarrow{P_e},x_2,w_2,x_1,w_1,v_1,w_3,x_3)$. 
If $z\in \mathbb{P}_1$, then we see the following.  \newline
If $q_2v_1\in E(G)$, then $P=(\overrightarrow{P_b},z,\overrightarrow{C},\overrightarrow{P_e},v_1,\overrightarrow{P_a})$. \newline
If $q_2x_2\in E(G)$, then $P=(\overrightarrow{P_b},z,\overrightarrow{C},\overrightarrow{P_e},x_2,w_2,x_1,w_1,v_1,w_3,x_3)$. 
\\\hline
Case 1.3.2 & 
 If $v_3$ is adjacent to a vertex in $\mathbb{P}_2$;  that is, $v_3y_1\in E(G)$.  Recall that either $q_2v_1\in E(G)$ or $q_2x_2\in E(G)$.  We obtain the following desired paths.  
 If $q_2v_1\in E(G)$, then $P=(\overrightarrow{P_b},v_3,v,v_2,\overrightarrow{P_d},\overrightarrow{P_e},v_1,\overrightarrow{P_a})$. \newline
 If $q_2x_2\in E(G)$, then $P=(\overrightarrow{P_b},v_3,v,v_2,\overrightarrow{P_d},\overrightarrow{P_e},x_2,w_2,x_1,w_1,v_1,w_3,x_3)$. 
 \\\hline
Case 1.3.3 & 
 If $v_3$ is adjacent to a vertex in $\mathbb{P}_3$.  Without loss of generality, $v_3q_1\in E(G)$.  Recall that either $s_2v_1\in E(G)$ or $s_2x_2\in E(G)$.  \newline
 If $s_2v_1,y_1v_1\in E(G)$, then $P=(\overrightarrow{P_b},v_1,s_2,t_1,s_1,v_2,v,v_3,q_1,r_1,q_2,\overrightarrow{P_a})$. \newline
If $s_2x_2,y_1x_2\in E(G)$, then \newline $P=(\overrightarrow{P_b},x_2,s_2,t_1,s_1,v_2,v,v_3,q_1,r_1,q_2,w_1,x_1,w_2,v_1,w_3,x_3)$. \newline
If $s_2v_1,y_1x_2\in E(G)$, then \newline $P=(\overrightarrow{P_b},x_2\overleftarrow{P_a},v_1,s_2,t_1,s_1,v_2,v,v_3,q_1,r_1,q_2,w_3,x_3)$. \newline
If $s_2x_2,y_1v_1\in E(G)$, then \newline $P=(\overrightarrow{P_b},v_1,w_1\overrightarrow{P_a}x_2,s_2,t_1,s_1,v_2,v,v_3,q_1,r_1,q_2,w_3,x_3)$.   
\\\hline
Case 1.3.4 & 
 If $v_3$ is adjacent to a vertex $z$ in $\mathbb{P}_1$.  
 Recall that either $q_2v_1\in E(G)$ or $q_2x_2\in E(G)$.  We obtain the following desired paths.  \newline
 If $q_2v_1\in E(G)$, then $P=(\overrightarrow{P_b},z,v_3,v,v_2,\overrightarrow{P_d},\overrightarrow{P_e},v_1,\overrightarrow{P_a})$. 
 If $q_2x_2\in E(G)$, then $P=(\overrightarrow{P_b},z,v_3,v,v_2,\overrightarrow{P_d},\overrightarrow{P_e},x_2,w_2,x_1,w_1,v_1,w_3,x_3)$. \\\hline
 \end{tabular}  \caption{Case analysis for the proof of Claim \ref{HP_P6_P3} }
\label{tab:HP_P6_P3_3}
\end{table}
\begin{table}[h!]
\begin{tabular}{ | L{0.1\textwidth} | L{0.89\textwidth} |}
\hline 
\textbf{Case} & \textbf{Arguments}\\
\hline
Case 1.4 & $v'\in\mathbb{P}_1$.   In this case we shall assume that $w_1v_1\in E(G)$.  Note that $v'x_2\in E(G)$.  If $d(v_3)=1$, then $\mathbb{P}_2=\emptyset$.  Note $P=(v_3,v,v_2,v',x_2,w_2,x_1,w_1,v_1,w_3,x_3)$ is a desired path.  If $d(v_3)>1$, then we observe the following.   Note $|\mathbb{P}_2|\le1$.  If $|\mathbb{P}_2|=1$, then let $P_b\in\mathbb{P}_2$, $P_b=P(y_1;z_1)$.  Since $|I|\ge9$, there exists $P_d,P_e\in\mathbb{P}_3$, $P_d=P(s_1,s_2;t_1)$, $P_e=P(q_1,q_2;r_1)$.  Note that $v_3$ is adjacent to $w_1$ or $v'$ or a clique vertex in $\mathbb{P}_1\cup\mathbb{P}_2\cup\mathbb{P}_3$.   If $v_3w_1\in E(G)$, then we obtain the following desired paths. \newline
If $y_1v_1\in E(G)$, then $P_i=(\overrightarrow{P_b},v_1,w_2,x_1,w_1,v_3,v,v_2,v',x_2,w_3,x_3)$. \newline 
If $y_1x_2,s_1x_2\in E(G)$, then \newline $P_j=(\overrightarrow{P_b},x_2,s_1,t_1,s_2,v',v_2,v,v_3,w_1,x_1,w_2,v_1,w_3,x_3)$. \newline 
If $y_1x_2,s_1v_1\in E(G)$, then \newline $P_k=(\overrightarrow{P_b},x_2\overleftarrow{P_a}w_1,v_3,v,v_2,v',s_2,t_1,s_1,v_1,w_3,x_3)$. \newline 
If $v_3v'\in E(G)$, then we see the adjacency of $w_1$.  If $w_1v_3$, then one of $P_i,P_j,P_k$ is  a desired path.  If $w_1v_2\in E(G)$, then $P=(\overrightarrow{P_b},v_1,w_2,x_1,w_1,v_2,v,v_3,v',x_2,w_3,x_3)$ or $P=(\overrightarrow{P_b},x_2\overleftarrow{P_a}w_1,v_2,v',v_3,v,v_1,w_3,x_3)$ is a desired path.  If $w_1x_2\in E(G)$, then note that $d(v_2)=d(v_3)=2$.   Let $C=(v_2,v,v_3,v',v_2)$.  We consider $G$ with no short cycles.  Therefore, there exists a vertex $z$ in $K\setminus\{w_1,w_2,w_3,v,v'\}$ adjacent to a vertex in $C\cap I$.  If $z\in \mathbb{P}_2$; that is, $z=y_1$, then we obtain the following desired paths.  
If $y_1v_2\in E(G)$, then $P=(\overrightarrow{P_b},v_2,v',v_3,v,v_1,\overrightarrow{P_a})$. \newline
If $y_1v_3\in E(G)$, then $P=(\overrightarrow{P_b},v_3,v',v_2,v,v_1,\overrightarrow{P_a})$. \newline
  If $z\in \mathbb{P}_3$, then without loss of generality, let $z=s_2$.  Now we observe the following.  \newline
If $s_2v_2\in E(G)$, then $P=(\overrightarrow{P_b},\overrightarrow{P_d},v_2,v',v_3,v,v_1,\overrightarrow{P_a})$. \newline
If $s_2v_3\in E(G)$, then $P=(\overrightarrow{P_b},\overrightarrow{P_d},v_3,v',v_2,v,v_1,\overrightarrow{P_a})$. \newline
If $z\in \mathbb{P}_1$, then we obtain $P=(\overrightarrow{P_b},z,v_2,v',v_3,v,v_1,\overrightarrow{P_a})$ or
\newline $P=(\overrightarrow{P_b},z,v_3,v',v_2,v,v_1,\overrightarrow{P_a})$ as a desired path.  \newline
Now we consider the case where $v_3$ is adjacent to a clique vertex in $\mathbb{P}_2$; that is, $v_3y_1\in E(G)$.  We obtain $(\overrightarrow{P_b},v_3,v,v_2,v',x_2,w_2,x_1,w_1,v_1,w_3,x_3)$ as a desired path.  If $v_3$ is adjacent to a clique vertex in $\mathbb{P}_3$, then without loss of generality, $v_3s_1\in E(G)$.  We obtain $(\overrightarrow{P_b},\overleftarrow{P_d},v_3,v,v_2,v',x_2,w_2,x_1,w_1,v_1,w_3,x_3)$ as a desired path.  If $v_3$ is adjacent to $z\in\mathbb{P}_1$, then  $(\overrightarrow{P_b},z,v_3,v,v_2,v',x_2,w_2,x_1,w_1,v_1,w_3,x_3)$ is a desired path.
\\\hline
Case 2 &  There exists $v_1,v_2\in N^I(v)$ such that $v_1w_2,v_2w_3\in E(G)$.  A case analysis similar to the previous one could be obtained. 
%
%
%
%
%
\\\hline
\end{tabular}  \caption{Case analysis for the proof of Claim \ref{HP_P6_P3} }
\label{tab:HP_P6_P3_4}
\end{table}
\end{proof}

\begin{cl}\label{NO_3P5}
 If $\mathbb{P}_j=\emptyset,j\ge6$, and $\mathbb{P}_5\cup\mathbb{P}_4\neq\emptyset$, then $|\mathbb{P}_5|+|\mathbb{P}_4|\le2$. 
\end{cl}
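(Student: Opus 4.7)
The plan is to argue by contradiction in the spirit of Claims \ref{2P6_no_P4} and \ref{P7_P6_no_P5_P4}: assume there exist three paths $P_a,P_b,P_c\in\mathbb{P}_5\cup\mathbb{P}_4$, write them as $P_a=P(w_1,\ldots,w_{l'};x_1,x_2)$, $P_b=P(s_1,\ldots,s_{m'};t_1,t_2)$, $P_c=P(q_1,\ldots,q_{n'};r_1,r_2)$ with $l',m',n'\in\{2,3\}$, and force a forbidden $K_{1,4}$. The three paths are pairwise vertex-disjoint and each contains a ``middle'' $K$-vertex ($w_2$, $s_2$, $q_2$ respectively) whose two $I$-neighbours in the path are available for $K_{1,4}$ arguments.

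First I would invoke Corollary \ref{xunivthreepathseven} to obtain a common $v_1\in N^I(v)$ with $v_1w_2,\,v_1s_2,\,v_1q_2\in E(G)$. Next I would show that the end $K$-vertices of each path are also adjacent to $v_1$. For example, if $v_1w_1\nin E(G)$, then by Corollary \ref{cor1} either $v_2w_1$ or $v_3w_1$ lies in $E(G)$; further, to avoid $N^I(s_2)\cup\{s_2,w_1\}$ or $N^I(q_2)\cup\{q_2,w_1\}$ inducing a $K_{1,4}$, the vertex $w_1$ must be adjacent to one of $\{t_1,t_2\}$ and one of $\{r_1,r_2\}$. Combined with $w_1x_1\in E(G)$, this gives $|N^I(w_1)|\ge 4$ with four pairwise distinct $I$-neighbours (since the three paths and $N^I(v)$ are mutually disjoint), producing a $K_{1,4}$ on \NV{w_1}. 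The same argument works for $s_1,q_1$, and for the other end $K$-vertex whenever a path has odd length.

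At this stage every $K$-vertex appearing in $P_a\cup P_b\cup P_c$ is adjacent to $v_1$. By maximality of the clique $K$ there exists $w'\in K$ with $v_1w'\nin E(G)$, and the previous step forces $w'\nin P_a\cup P_b\cup P_c$. Corollary \ref{cor1} then yields $v_2w'\in E(G)$ or $v_3w'\in E(G)$. Exactly as in the proof of Claim \ref{P7_P6_no_P5_P4}, to avoid creating a $K_{1,4}$ centred at $w_2$, $s_2$ and $q_2$, the vertex $w'$ must be adjacent to one of $\{x_1,x_2\}$, one of $\{t_1,t_2\}$ and one of $\{r_1,r_2\}$. Together with its neighbour in $N^I(v)$ these are four distinct vertices of $I$, so $N^I(w')\cup\{w'\}$ induces a $K_{1,4}$, the desired contradiction.

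The main obstacle is purely bookkeeping: one has to be careful when a path has odd length ($l'=3$, etc.), so that the argument giving $v_1w_1\in E(G)$ also applies symmetrically to $w_{l'}$, and similarly for $P_b,P_c$. Once this symmetry is handled, the whole argument is a direct adaptation of the three-path case already used earlier (Claim \ref{P7_P6_no_P5_P4}), with the $P_a\in\mathbb{P}_7\cup\mathbb{P}_6$ there replaced by a third path from $\mathbb{P}_5\cup\mathbb{P}_4$; no new structural ideas are required.
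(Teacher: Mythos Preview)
Your proposal is correct and follows essentially the same approach as the paper's own proof: invoke Corollary~\ref{xunivthreepathseven} to get a common $v_1$ adjacent to the middle $K$-vertices, force $v_1$ to be adjacent to all end $K$-vertices via $K_{1,4}$ arguments, then use maximality of $K$ to find a $w'\notin P_a\cup P_b\cup P_c$ non-adjacent to $v_1$ and derive the final $K_{1,4}$ at $w'$. The only cosmetic difference is that you make explicit the role of $x_1$ as the fourth $I$-neighbour of $w_1$, which the paper leaves implicit.
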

\begin{proof}
Let $P_a=P(w_1,\ldots,w_{l};x_1,x_{2})$, $l\in\{2,3\}$, $P_b=P(s_1,\ldots,s_{m};t_1,t_{2})$, $m\in \{2,3\}$.  Assume for a contradiction that there exists a path $P_c=P(q_1,\ldots,q_{n};r_1,r_{2})$, $n\in \{2,3\}$.  From Corollary \ref{xunivthreepathseven}, there exists $v_1\in N^I(v)$ such that $v_1w_2,v_1s_2,v_1q_2\in E(G)$.  We claim that $v_1w_1,v_1s_1,v_1q_1\in E(G)$.  Suppose $v_1w_1\nin E(G)$, then $w_1$ is adjacent to one among $\{v_2,v_3\}$, and one each from $P_b\cap I$ and $P_c\cap I$.  Therefore, \NV{w_1} induces a \K14, a contradiction.  Similar argument holds good for the other edges and thus $v_1s_1,v_1q_1\in E(G)$.  If $P_a,P_b,P_c$ are odd paths, then similar to the previous argument, the end vertices $w_3,s_3,q_3$ are adjacent to $v_1$.  Since the clique is maximal, there exists $w'\in K$ such that $v_1w'\nin E(G)$.  From the previous arguments, $w'\nin \{w_1,\ldots,w_l,s_1,\ldots,s_m,q_1,\ldots,q_n\}$. By Corollary \ref{cor1}, either $v_2w'\in E(G)$ or $v_3w'\in E(G)$.  Further, we argue that $w'$ is adjacent to one of $\{x_1,x_2\}$, one of $\{t_1,t_2\}$ and one of $\{r_1,r_2\}$.  Now, \NV{w'} induces a $K_{1,4}$, which is a final contradiction to the existence of such a path $P_c$.  This completes a proof of the claim.   $\hfill\qed$
\end{proof}
\begin{cl}\label{HP_2P5_P6_P3}
 If there exists $P_a,P_b\in\mathbb{P}_5\cup\mathbb{P}_4$, then $G$ has a Hamiltonian path.
\end{cl}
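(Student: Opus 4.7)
The plan is to follow exactly the template established by Claims \ref{HP_P7_P6_P5_P4}, \ref{HP_P7_P3}, and \ref{HP_P6_P3}: first identify a vertex $v_1 \in N^I(v)$ that dominates nearly all of $K$, then use maximality of $K$ to locate a non-neighbour $w'$ of $v_1$, classify $w'$ by which part of $\mathbb{S}_f$ it lives in, and for each case exhibit an explicit desired path through $\{v_1,v_2,v_3,v\}$ which we splice with an ordering of the remaining paths of $\mathbb{S}_f$.

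Concretely, write $P_a = P(w_1,\ldots,w_l;x_1,x_2)$ and $P_b = P(s_1,\ldots,s_m;t_1,t_2)$ with $l,m \in \{2,3\}$. Applying Corollary \ref{xunivthreepathseven} (to $P_a$, $P_b$ together with any third path of length $\geq 4$, or Claim \ref{xunivtwopathseven} otherwise) yields $v_1 \in N^I(v)$ adjacent to $w_2$ and $s_2$. A standard $K_{1,4}$ argument of the kind used throughout (assume $v_1w_1 \notin E(G)$, then by Corollary \ref{cor1} one of $v_2,v_3$ is adjacent to $w_1$; avoiding $K_{1,4}$ centred at $w_2$ forces $w_1$ adjacent to some $x_i$, and avoiding $K_{1,4}$ centred at $s_2$ forces $w_1$ adjacent to some $t_j$, so $N^I(w_1)\cup\{w_1\}$ already induces $K_{1,4}$, a contradiction) gives $v_1w_1, v_1s_1 \in E(G)$, and similarly $v_1w_3, v_1s_3$ in the odd-length subcases. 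The same argument shows that every $K$-endpoint of any path in $\mathbb{P}_2 \cup \mathbb{P}_3$ is adjacent to $v_1$. By Claim \ref{NO_3P5} there are no further paths of size $\geq 4$ apart from $P_a,P_b$.

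Maximality of $K$ now produces $w' \in K$ with $v_1w' \notin E(G)$, and by the previous paragraph $w'$ lies in $\mathbb{P}_1$. By Corollary \ref{cor1}, without loss of generality $v_2w' \in E(G)$; avoiding $K_{1,4}$ centred at $w_2$ and at $s_2$ forces $w'$ to be adjacent to some $x_i$ and some $t_j$. This is the data needed to stitch paths together. The case analysis then branches on (i) the parities $(l,m) \in \{(2,2),(2,3),(3,2),(3,3)\}$ (i.e., $(|P_a|,|P_b|) \in \{4,5\}^2$), (ii) whether $d(v_3)=1$ or $d(v_3)>1$, and (iii) in the latter case which vertex $v_3$ attaches to (an endpoint $s_1$ or $s_3$, an endpoint $w_1$ or $w_3$, an endpoint of a path in $\mathbb{P}_2 \cup \mathbb{P}_3$, or some vertex $z \in \mathbb{P}_1$). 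Property A restricts $|\mathbb{P}_2|$ using the count of degree-one $I$-vertices, and the no-short-cycles hypothesis guarantees the existence of an extra $\mathbb{P}_1$ vertex when we need to break potential short $I$-$I$ structures formed by $\{v_2,v,v_3\}$ together with degree-two vertices of $P_a \cup P_b$.

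In each case the recipe for the desired path is identical in spirit to Claim \ref{HP_P7_P6_P5_P4}: traverse $v_3,v,v_2,w',\ldots$ entering $P_a$ through the $x$-neighbour of $w'$, use $v_1$ as a bridge to swap between $P_a$ and $P_b$ (and any intermediate path from $\mathbb{P}_2 \cup \mathbb{P}_3$), exit through the $t$-neighbour of $w'$ on $P_b$, and concatenate at the end any leftover $P_c \in \mathbb{P}_2$ whose $K$-endpoint is attached to $v_1$. Any $\overrightarrow{Q}$ ordering of the remaining $\mathbb{P}_1 \cup \mathbb{P}_3$ paths is inserted at the position where the current vertex is adjacent to the first vertex of $Q$ (such a position always exists since every $K$-endpoint is adjacent to $v_1$). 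The main obstacle, as in the preceding claims, is simply the sheer volume of subcases: the combinations of $(|P_a|,|P_b|,d(v_3),\text{location of }v_3\text{'s neighbour},|\mathbb{P}_2|)$ must be enumerated, and each subcase requires one explicit Hamiltonian path, but no new structural lemma beyond those already proved. In particular, when $d(v_3)>1$ and $v_3$ attaches only to vertices of $P_a \cup P_b$, one must verify (using $K_{1,4}$-freeness and Property A) that no short $I$-$I$ path is created by the attempted bypass, which is the most delicate routine check.
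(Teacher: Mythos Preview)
Your proposal is correct and follows essentially the same approach as the paper, which simply writes ``Case analysis is similar to Claim~\ref{HP_2P7_2P6_P7P6}.''  Indeed, the structure you describe---use Claim~\ref{xunivtwopathseven} to get $v_1$ adjacent to $w_2,s_2$, force $v_1$ onto all $K$-endpoints of $P_a,P_b$ and of every path in $\mathbb{P}_2\cup\mathbb{P}_3$ by the standard $K_{1,4}$ contradiction, locate the non-neighbour $w'\in\mathbb{P}_1$, and then branch on $(|P_a|,|P_b|)$ and $d(v_3)$---is exactly the argument of Claim~\ref{HP_2P7_2P6_P7P6} with $\mathbb{P}_7\cup\mathbb{P}_6$ replaced by $\mathbb{P}_5\cup\mathbb{P}_4$ and Claim~\ref{2P6_no_P4} replaced by Claim~\ref{NO_3P5}.
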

\begin{proof}
Case analysis is similar to Claim \ref{HP_2P7_2P6_P7P6}.  $\hfill \qed$
\end{proof}
\newpage
\begin{cl}\label{HP_2P5_P6_P3}
 If there exists $P_a\in\mathbb{P}_5$, then $G$ has a Hamiltonian path.
\end{cl}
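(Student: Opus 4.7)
Let $P_a = P(w_1,w_2,w_3;x_1,x_2)$ be the $K$-$K$ path in $\mathbb{P}_5$. In view of Claims \ref{HP_P8}, \ref{HP_P9}, \ref{HP_P11}, \ref{HC_P10}, \ref{HP_P7_P3}, \ref{HP_P6_P3}, and \ref{HP_2P5_P6_P3}, I may assume that $P_a$ is the unique path in $\mathbb{S}_f$ of size $\ge 4$; every other path in $\mathbb{S}_f$ belongs to $\mathbb{P}_1\cup\mathbb{P}_2\cup\mathbb{P}_3$. Property A then forces $|\mathbb{P}_2|\le 2$, and $|\mathbb{P}_2|\le 1$ when $d(v_3)=1$. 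Since $|K|\ge |I|-1\ge 8$, there exist at least one, and typically several, paths in $\mathbb{P}_3$ to work with.

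The plan is to mimic the structure of Claims \ref{HP_P6_P3} and \ref{HP_P7_P3}. By Corollary \ref{cor1} each of $w_1,w_2,w_3$ has a neighbor in $N^I(v)=\{v_1,v_2,v_3\}$. I will split into \textbf{Case 1} where one $v_i$ (say $v_1$) is adjacent to two adjacent vertices of $P_a$ (WLOG $v_1w_2,v_1w_3\in E(G)$), and \textbf{Case 2} where no single $v_i$ covers two adjacent $w_j$'s (so $v_1w_2,v_2w_3\in E(G)$). Before branching, I will prove an \emph{anchor lemma}: for every end-vertex $s\in K$ of a path $Q\in\mathbb{P}_2\cup\mathbb{P}_3$, either $v_1s\in E(G)$ or $sx_2\in E(G)$. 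Indeed, if $v_1s\notin E(G)$ then Corollary \ref{cor1} puts $s$ adjacent to $v_2$ or $v_3$, and then non-adjacency of $s$ to $x_2$ would force $sx_1\in E(G)$ (else \NC{w_2}{s} is a $K_{1,4}$), so that \NV{s} already contains three $I$-neighbors and any further neighbor of $s$ in $P_a\cap I$ produces a $K_{1,4}$; a careful bookkeeping closes this argument.

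Using maximality of $K$ I pick $v'\in K$ with $v_1v'\notin E(G)$ and subdivide according to the location of $v'$: $v'\in V(P_a)$ (so $v'=w_1$), $v'\in\mathbb{P}_3$, $v'\in\mathbb{P}_2$, or $v'\in\mathbb{P}_1$. Each case is further split on whether $d(v_3)=1$ or $d(v_3)>1$, and whenever $d(v_3)>1$ I locate a vertex $w^*\in K$ with $v_3w^*\in E(G)$ and further split on whether $w^*$ lies in $P_a$, in $\mathbb{P}_3$, in $\mathbb{P}_2$, or in $\mathbb{P}_1$. In every leaf of the case tree I build a \emph{desired path} of the form $(\overrightarrow{P_\alpha},v_i,\overrightarrow{P_\beta})$ containing $\{v_1,v_2,v_3\}$, all $I$-$K$ paths of $\mathbb{S}_f$, the path $P_a$, and one or two strategically chosen members of $\mathbb{P}_2\cup\mathbb{P}_3$. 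The anchor lemma guarantees that the remaining paths $\overrightarrow{Q}\subseteq \mathbb{P}_1\cup\mathbb{P}_3$ can be spliced in at a point where the splice vertex is adjacent to $v_1$ (or to $x_2$), producing a Hamiltonian path. In the degenerate situations where no such splice vertex is immediately available, I will exploit the absence of short cycles: there must exist $w''\in\mathbb{P}_1$ adjacent to one of $\{v_2,v_3,x_2,t_i\}$ (otherwise some subset induces a short $I$-$I$ path or short cycle), and $w''$ is used as a bridge exactly as in Case 1.2 of Claim \ref{HP_P6_P3}.

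The main obstacle is the combinatorial explosion of the case tree: because $P_a$ has only two $I$-vertices ($x_1,x_2$), the available rerouting options inside $P_a$ are quite limited, so many subcases look alike but must be verified individually. In particular the branch $v'\in\mathbb{P}_1$ with $d(v_3)>1$ and $|\mathbb{P}_2|=2$ (parallel to Case 2.2.1 of Claim \ref{HP_P7_P3}) will require tabulating the desired paths for each configuration of the three adjacencies of $s_1,y_1,y_2$ in $\{v_1,v_2,v_3,x_1,x_2\}$; this is where I expect a long table similar to Tables \ref{tab:HP_P7_P3_2.2}--\ref{tab:HP_P7_P3_2.5} to be necessary. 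The underlying ideas, however, are exactly those already used in Claims \ref{HP_P7_P3} and \ref{HP_P6_P3}, so no new structural insight is needed beyond the anchor lemma and the careful use of the no-short-cycle hypothesis.
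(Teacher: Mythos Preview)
Your plan is exactly the paper's: for this claim the paper gives no detailed proof at all, only the sentence ``a case analysis similar to Claim \ref{HP_P7_P3} could be obtained,'' and your sketch is precisely that case analysis transplanted to $P_a\in\mathbb{P}_5$.

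One point needs care. Your \emph{anchor lemma} claims that every end-vertex $s\in K$ of a path in $\mathbb{P}_2\cup\mathbb{P}_3$ is adjacent to $v_1$ or to $x_2$. The analogous step in Claim~\ref{HP_P7_P3} works because both $w_2$ and $w_3$ are \emph{interior} vertices of $P_a$ with two $P_a\cap I$-neighbours each, so ``not $v_1$, not $x_2$'' forces $sx_1,sx_3\in E(G)$ and hence four independent neighbours of $s$. In your $P_5$ only $w_2$ is interior; $w_3$ has the single $P_a\cap I$-neighbour $x_2$, and if $d^I_G(w_3)=2$ then \NC{w_3}{s} is not a $K_{1,4}$ when $v_1s,x_2s\notin E(G)$. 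So in general you only get that $s$ is adjacent to one of $\{v_1,x_1,x_2\}$ (from \NC{w_2}{s}). This does not break the scheme --- it merely introduces a third anchor $x_1$, lengthening the tables in the spirit of Cases 2.2.1.B/E/H of Claim~\ref{HP_P7_P3} --- but your ``careful bookkeeping'' remark hides a real extra branch rather than a triviality. With that correction the approach goes through and coincides with what the paper intends.
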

\begin{cl}\label{HP_2P5_P6_P3}
 If $\mathbb{P}_{j\ge4}=\emptyset$, then $G$ has a Hamiltonian path.
\end{cl}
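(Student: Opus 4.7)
The plan is to adapt the template used in the preceding claims of this section (in particular Claims~\ref{HP_P8}, \ref{HP_P7_P3}, and~\ref{HP_P6_P3}), specialised to the much simpler structure available here. From Property~A together with $\mathbb{P}_{j\ge 4}=\emptyset$ and the standing assumption of no short cycles, I obtain $|\mathbb{P}_2|\le 2$, and since $|I|\ge 9$ and $|I|-3=|\mathbb{P}_2|+|\mathbb{P}_3|$, also $|\mathbb{P}_3|\ge 4$. So $\mathbb{S}_f=\mathbb{P}_1\cup\mathbb{P}_2\cup\mathbb{P}_3$ contains a generous supply of $K$-$K$ paths of length~$2$, and $v\in\mathbb{P}_1$ with $N^I(v)=\{v_1,v_2,v_3\}$.

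The first step is to single out a dominant neighbour $v_1\in N^I(v)$. Fix any two paths $P_a=(w_1,x_1,w_2)$ and $P_b=(s_1,t_1,s_2)$ in $\mathbb{P}_3$. By Corollary~\ref{cor1} each of $w_1,w_2,s_1,s_2$ has a neighbour in $N^I(v)$; an iterated \K14-freeness argument in the spirit of Claim~\ref{xunivtwopathseven} shows that after relabelling $N^I(v)$ one can arrange $v_1$ to be adjacent to all of $\{w_1,w_2,s_1,s_2\}$. Using $P_a,P_b$ as anchors, I then propagate this to every end vertex $s\in K$ of every path in $\mathbb{P}_2\cup\mathbb{P}_3$: if $v_1s\notin E(G)$, then by Corollary~\ref{cor1} $s$ is adjacent to $v_2$ or $v_3$, and to avoid induced \K14's centred at internal vertices of $P_a$ and $P_b$, $s$ is forced to pick up a neighbour in each of $P_a\cap I$ and $P_b\cap I$, producing an induced \K14 at $s$ itself, a contradiction.

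The second step is to locate a non-adjacency of $v_1$ in $K$ and branch on $d(v_3)$. Since $K$ is maximum, there exists $w'\in K$ with $v_1w'\notin E(G)$; by the previous step $w'$ cannot be an end vertex of a $\mathbb{P}_2$ or $\mathbb{P}_3$ path, so $\{w'\}\in\mathbb{P}_1$, and, by Corollary~\ref{cor1}, $w'$ is adjacent to $v_2$ or $v_3$; say $v_2$. If $d(v_3)=1$, Property~A forces $|\mathbb{P}_2|\le 1$, and a desired path of the form $(v_3,v,v_2,w',v_1,\overrightarrow{Q},\overrightarrow{P_c})$ works, where $\overrightarrow{Q}$ is an ordering of the remaining $\mathbb{P}_1\cup\mathbb{P}_3$ paths chained through $v_1$ and $P_c\in\mathbb{P}_2$ is appended when present. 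If $d(v_3)>1$, I pick a second vertex $w''\in K$ with $w''v_3\in E(G)$; by the previous step $w''$ must be in $\mathbb{P}_1$ and is distinct from $w'$ (otherwise $(v_2,v,v_3,w')$ would be a short cycle, contradicting the assumption). A desired path of the form $(\overrightarrow{P_d},w'',v_3,v,v_2,w',v_1,\overrightarrow{Q},\overrightarrow{P_c})$ then absorbs up to two $\mathbb{P}_2$ paths at the ends, and the remaining paths of $\mathbb{S}_f$ are inserted into $\overrightarrow{Q}$.

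The main obstacle is the bookkeeping needed to realise $\overrightarrow{Q}$ as an actual sub-path: every consecutive pair in $\overrightarrow{Q}$ must be joined either by a shared clique endpoint already known to be adjacent to $v_1$ or via the hub $v_1$ itself, and the second step of the plan is what guarantees this. Two subcases demand a little extra care: when, in the $d(v_3)>1$ branch, the choice of $w''$ is severely restricted so that $v_3$ has only one $\mathbb{P}_1$-neighbour, one must re-use the no-short-$I$-$I$-path clause of Property~A to exhibit a further $\mathbb{P}_1$-connector; and when no $\mathbb{P}_2$ path is present, the ordering must simply be trimmed. Both reductions mirror the analogous subcases handled at length in Claims~\ref{HP_P7_P3} and~\ref{HP_P6_P3}, and are substantially simpler here because no path in $\mathbb{S}_f$ has length $\ge 4$.
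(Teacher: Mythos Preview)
Your overall plan follows the same template as the paper's, which in fact only writes ``a case analysis similar to Claim~\ref{HP_P7_P3} could be obtained'' and leaves the details to the reader. However, your attempted shortcut has a genuine gap.

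The crux of your argument is the claimed existence of a single $v_1\in N^I(v)$ adjacent to every clique end vertex of every $\mathbb{P}_2\cup\mathbb{P}_3$ path. You appeal to ``the spirit of Claim~\ref{xunivtwopathseven}'', but that claim requires each anchor path to contribute at least two $I$-vertices, so that it contains an \emph{internal} clique vertex $w$ with two path-neighbours in $I$; together with $v_1$ this gives $d^I(w)\ge 3$, and \emph{that} is what powers the $K_{1,4}$ argument $N^I(w)\cup\{w,s\}$. A $\mathbb{P}_3$ path $(w_1,x_1,w_2)$ has no such internal clique vertex: both $w_1,w_2$ are endpoints with only one path-neighbour in $I$, so $N^I(w_i)\cup\{w_i,s\}$ can have only four vertices, and no $K_{1,4}$ is forced. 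Concretely, nothing rules out (say) $w_1v_1,\ w_2v_2,\ s_1v_3,\ s_2v_1\in E(G)$ with no further adjacencies between $\{w_1,w_2,s_1,s_2\}$ and $N^I(v)$, and then no single $v_j$ sees all four endpoints. This is precisely why Claim~\ref{HP_P7_P3} splits into Case~1 (a common $v_1$ for $w_2,w_3$) and Case~2 ($v_1w_2,\ v_2w_3$); the paper's remark that the present claim needs the same case analysis should be taken at face value rather than collapsed.

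Two smaller issues compound this. First, your desired path $(v_3,v,v_2,w',v_1,\overrightarrow{Q},\overrightarrow{P_c})$ places $w'$ and $v_1$ consecutively, but $w'v_1\notin E(G)$ by construction; presumably you mean to route through a $\mathbb{P}_3$ path between them, but that in turn needs its endpoint adjacent to $v_1$, which is exactly the unproved assertion above. Second, your claim that any $w''$ with $v_3w''\in E(G)$ must lie in $\mathbb{P}_1$ does not follow even granting a universal $v_1$: being adjacent to $v_1$ does not preclude being adjacent to $v_3$ as well, so $w''$ could still be a $\mathbb{P}_2$ or $\mathbb{P}_3$ endpoint, and further sub-cases (as in Case~1.2, Case~1.3 of Claim~\ref{HP_P7_P3}) are required.
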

For the above claims, a case analysis similar to Claim \ref{HP_P7_P3} could be obtained.  
%

\begin{theorem}
Let $G$ be a $K_{1,4}$-free split graph with $|K|\ge|I|-1\ge8$.  Then $G$ has a Hamiltonian path if and only if $G$ has no short $I$-$I$ path, and the sum of the number of $I$-$K$ paths and the number of short cycles is at most 2.  Further, finding such a path is polynomial-time solvable.
\end{theorem}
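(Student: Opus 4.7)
The plan is to handle the two directions separately, doing the hard work in the sufficiency direction by reducing to the structural cases already analysed in the preceding claims.

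For necessity, I would use Lemma~\ref{chvatal_path}. If there is a short $I$-$I$ path $P$, setting $S=V(P)\cap K$ gives $c(G-S)\ge |S|+2$, since the internal $I$-vertices of $P$, together with every component of $G-K$ disjoint from $P$, become isolated. This violates Chvatal's condition, so no Hamiltonian path exists. For the second condition, observe that every $I$-$K$ path in the restricted bipartite subgraph $H$ forces a Hamiltonian path to enter its $I$-end as an endpoint, and every short cycle in $G$ forces the Hamiltonian path to break into two pieces at its $K$-vertices; a trivial end-count then shows that if the sum exceeds $2$, any spanning path must leave some vertex out.

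For sufficiency, I would split on $\Delta^I$. If $\Delta^I\le 1$, Theorem~\ref{delta1hamilpath} applies directly (note that the size hypothesis $|K|\ge|I|-1\ge 8$ combined with $\Delta^I\le 1$ rules out having three degree-$1$ $I$-vertices). If $\Delta^I=2$, then the two hypotheses in Theorem~\ref{delta2hamilpath} are exactly the two hypotheses of the current theorem, so a Hamiltonian path follows. The main work is $\Delta^I=3$: here $G$ satisfies Property~A by hypothesis. Pick any $v\in V_3$, form $G'=G-N^I(v)$ with restricted bipartite subgraph $H'$, and invoke Theorem~\ref{delta2hamilpath} on $G'$ to obtain a path collection $\mathbb{S}_f$ of $H'$. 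If $G$ contains a short cycle, Lemma~\ref{HPoneSC} produces the Hamiltonian path directly. Otherwise I would dispatch by the length of the longest path in $\mathbb{S}_f$: Claim~\ref{noP12} rules out paths of length $\ge 12$, and then Claims~\ref{HP_P11}, \ref{HC_P10}, \ref{HP_P9}, \ref{HP_P8}, \ref{HP_2P7_2P6_P7P6}, \ref{HP_P7_P6_P5_P4}, \ref{HP_P7_P3}, \ref{HP_P6_P3}, and the three remaining claims cover the cases of $\mathbb{P}_{11}$, $\mathbb{P}_{10}$, $\mathbb{P}_9$, $\mathbb{P}_8$, paired long paths, mixed sizes, and all the way down to $\mathbb{P}_{j\ge 4}=\emptyset$. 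Because these claims are exhaustive across every possible maximal-path profile of $\mathbb{S}_f$ under Property~A with no short cycle, one of them applies and produces a Hamiltonian path.

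The main obstacle is purely bookkeeping: showing that the case analysis across Claims \ref{HP_P11}--\ref{HP_2P5_P6_P3} really is exhaustive, i.e., that any $\mathbb{S}_f$ for a $K_{1,4}$-free split graph satisfying Property~A falls into exactly one of the enumerated patterns. This follows from Claims~\ref{P8_no_P4}, \ref{2P6_no_P4}, \ref{P7_P6_no_P5_P4}, \ref{NO_3P5}, which together restrict the allowable multi-set of path lengths in $\mathbb{S}_f$.

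For the polynomial-time claim, every structural object used is polynomially testable: the restricted bipartite subgraph $H$, its maximal paths and short cycles, the existence of a vertex $v\in V_3$, the auxiliary graphs $G'$ and $H'$, and the path collection $\mathbb{S}_f$ constructed in the proof of Theorem~\ref{delta2hamilpath}, are all computable in $O(|V(G)|+|E(G)|)$ time. Each claim's proof is constructive: given the identified $v_1,v_2,v_3$, $w'$, $z$, etc.\ (found by scanning adjacencies in $O(|E(G)|)$ time), the desired path is assembled by concatenating a constant number of subpaths of $\mathbb{S}_f$ in $O(|V(G)|)$ time. Summing over the finite case tree therefore yields a polynomial-time algorithm that either outputs a Hamiltonian path or certifies a short $I$-$I$ path or a count of $I$-$K$ paths and short cycles exceeding~$2$.
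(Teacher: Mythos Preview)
Your proposal is correct and takes essentially the same approach as the paper, whose own proof is simply ``Necessity is trivial.  Sufficiency follows from the previous claims.''  You have expanded this into the intended argument: Chv\'atal's condition for necessity, and for sufficiency a split on $\Delta^I$ followed by the exhaustive case dispatch over the path-length profiles of $\mathbb{S}_f$, with exhaustiveness guaranteed by Claims~\ref{P8_no_P4}, \ref{2P6_no_P4}, \ref{P7_P6_no_P5_P4}, \ref{NO_3P5}.  One small correction: in the $\Delta^I\le 1$ case, it is not the size hypothesis that rules out three pendant $I$-vertices but rather the $I$-$K$ path condition (each pendant gives a length-$2$ $I$-$K$ path in $H$), so the hypothesis of Theorem~\ref{delta1hamilpath} follows from the stated conditions rather than from $|K|\ge|I|-1\ge 8$.
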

\begin{proof}
Necessity is trivial.  Sufficiency follows from the previous claims.   $\hfill \qed$
\end{proof}

\begin{figure}[h!]
\begin{center}
\hspace*{-30pt}
\begin{tikzpicture}[scale=1, transform shape]
\footnotesize
 
\NT{5}{0.2}{$K_{1,4}$-free split graph $G$, no short $I$-$I$ path, \# $I$-$K$ path + \# short cycles $\le2$  } 
\draw [line width=0.3mm](0,0) -- (0,-1.51);
\LBLA{0}{-0.5}{1}{-0.5}
\LBLA{0}{-1}{1}{-1}
\LBLA{0}{-1.5}{1}{-1.5}

\NT{5.5}{-0.5}{$K_{1,3}$-free split graph has Hamiltonian path (Theorem \ref{k13hamilpath})}
\NT{4.65}{-1}{$\Delta^I=2$ has Hamiltonian path (Theorem \ref{delta2hamilpath})}
\NT{6.8}{-1.5}{$\Delta^I=3$ (at most one short cycle has Hamiltonian path; Lemmas \ref{onecycle}, \ref{HPoneSC})}

\draw[line width=0.3mm] (1.2,-1.7) -- (1.2,-3.11);
\LBLA{1.2}{-2.1}{2.2}{-2.1}
\LBLA{1.2}{-2.6}{2.2}{-2.6}
\LBLA{1.2}{-3.1}{2.2}{-3.1}
\NT{4.3}{-2.1}{Obtain Transformed graph}
\NT{4.5}{-2.6}{Restricted bipartite subgraph}
\NT{4.05}{-3.1}{Collection of paths $\mathbb{S}_f$}

\NT{3.8}{-3.5}{Path combinations }
\NT{3.75}{-4}{{not possible} in $\mathbb{S}_f$}

\draw [line width=0.3mm,color=red](2.5,-4.1) -- (2.5,-6.01);
\LRA{2.5}{-4.5}{3.5}{-4.5}
\LRA{2.5}{-5}{3.5}{-5}
\LRA{2.5}{-5.5}{3.5}{-5.5}
\LRA{2.5}{-6}{3.5}{-6}

\NT{4}{-3.5-1}{$P_{n\ge 12}$ }
\NT{4.5}{-4-1}{$P_{u}$ and $ P_{j\ge 4}$  }
\NT{4.8}{-4.5-1}{$P_w$, $P_x$  and $P_{j\ge 4}$  }
\NT{4.8}{-5-1}{$P_y$, $P_z$ and $P_{j\ge 4}$  }
\NT{4.8}{-5.5-1}{$u\in\{11,10,9,8\}$ }
\NT{4.5}{-6-1}{$w,x\in \{7,6\}$ }
\NT{4.5}{-7.5}{$y,z\in \{5,4\}$ }

\draw [line width=0.3mm,color=blue] (6.5,-3.65) -- (6.5,-7.51);
\LBA{6.5}{-4}{7.5}{-4}
\LBA{6.5}{-4.5}{7.5}{-4.5}
\LBA{6.5}{-5}{7.5}{-5}
\LBA{6.5}{-5.5}{7.5}{-5.5}
\LBA{6.5}{-6}{7.5}{-6}
\LBA{6.5}{-6.5}{7.5}{-6.5}
\LBA{6.5}{-7}{7.5}{-7}
\LBA{6.5}{-7.5}{7.5}{-7.5}

\NT{8.5}{-3.5}{Other Path combinations in $\mathbb{S}_f$ }
\NT{8.5+0.4}{-4}{$P_{u}$, $P_3$\alert{$^*$}, $P_2$\b{$^{**}$}, $P_1$\alert{$^*$} }
\NT{8.8+0.4}{-4.5}{$P_{w}$, $P_x$, $P_3$\alert{$^*$}, $P_2$\b{$^{**}$}, $P_1$\alert{$^*$} }
\NT{8.8+0.4}{-5}{$P_{w}$, $P_5$, $P_3$\alert{$^*$},$P_2$\b{$^{**}$}, $P_1$\alert{$^*$} }
\NT{8.8+0.4}{-5.5}{$P_{w}$, $P_4$, $P_3$\alert{$^*$},$P_2$\b{$^{**}$}, $P_1$\alert{$^*$} }
\NT{8.5+0.4}{-6}{$P_{w}$, $P_3$\alert{$^*$},$P_2$\b{$^{**}$}, $P_1$\alert{$^*$} }
\NT{8.8+0.4}{-6.5}{$P_{y}$, $P_z$, $P_3$\alert{$^*$},$P_2$\b{$^{**}$}, $P_1$\alert{$^*$} }
\NT{8.5+0.4}{-7}{$P_{y}$, $P_3$\alert{$^*$},$P_2$\b{$^{**}$}, $P_1$\alert{$^*$} }
\NT{8.3+0.4}{-7.5}{ $P_3$\alert{$^*$},$P_2$\b{$^{**}$}, $P_1$\alert{$^*$} }

\NT{12.1}{-7}{ \b{$**$} - \# $I$-$K$ paths $\le2$}
\NT{12.1}{-7.5}{ \alert{$*$} - zero or more paths}

\NT{10.1}{-8}{ All these combinations have Hamiltonian paths }

	\end{tikzpicture}
\end{center}  \vspace{-10pt}
\caption{An overview of the algorithm for finding Hamiltonian path in $K_{1,4}$-free split graphs}
\label{Steinerfig3}
\end{figure}
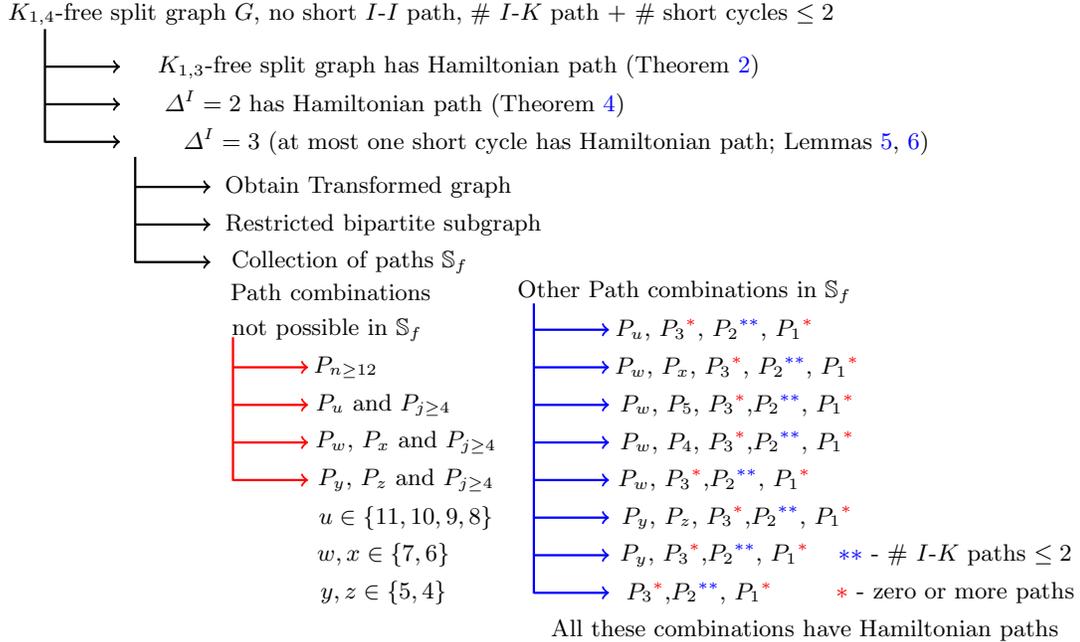

\section{Hardness Result}
\label{sec4.3}
T.Akiyama et. al. \cite{akiyama} proved the NP-completeness of Hamiltonian cycle in planar bipartite graphs with maximum degree $3$.  In \cite{dicho1}, Hamiltonian cycle problem in planar bipartite graphs with maximum degree $3$ is  reduced to Hamiltonian cycle problem in $K_{1,5}$-free split graph.  An in depth analysis of the reduction reveals that the reduced instances are split graphs with $\Delta^I\le3$.  We show a polynomial-time reduction from Hamiltonian cycle problem in split graphs with $\Delta^I\le3$ to Hamiltonian path problem in split graphs with $\Delta^I\le3$.  Further note that such graphs are sub class of \K15-free split graphs.  In the next theorem, we prove that Hamiltonian path problem in $K_{1,5}$-free split graph is NP-complete. 
\begin{theorem} \label{k15hp}
Unless P=NP, there is no polynomial-time algorithm for Hamiltonian path problem in $K_{1,5}$-free split graphs.
\end{theorem}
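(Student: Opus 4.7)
The plan is to establish NP-hardness via a polynomial-time many-one reduction from Hamiltonian Cycle in split graphs with $\Delta^I\le3$, which is NP-complete by the chain of reductions noted in the paragraph preceding the theorem (combining \cite{akiyama} with the analysis in \cite{dicho1}).  A preliminary observation is that every split graph $G(K\cup I,E)$ with $\Delta^I\le3$ is automatically \K15-free: a \K15 centred in $K$ would require five pairwise non-adjacent neighbours, all of which must lie in $I$, contradicting $\Delta^I\le3$; a \K15 centred in $I$ would require five pairwise non-adjacent neighbours inside the clique $K$, which is impossible.  Hence the HP instance produced by the reduction, having $\Delta^I\le3$, automatically sits in the \K15-free split class, and since HP is obviously in NP, only NP-hardness remains to be shown.

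Given an HC instance $G(K\cup I,E)$ with $\Delta^I\le3$ (degenerate cases $|V(G)|\le3$ or $|K|\le1$ are decidable in constant time), I fix an arbitrary $v\in K$ and list $N^I_G(v)=\{u_1,u_2,u_3\}$, treating the list as truncated when $d^I_G(v)<3$.  I construct $G'$ by deleting $v$ and introducing four new vertices: clique-side vertices $v_1,v_2$ that are adjacent to each other and to every element of $K\setminus\{v\}$, and independent-side pendants $p_1,p_2$ whose only edges are $p_1v_1$ and $p_2v_2$.  The $I$-neighbours of $v$ are distributed asymmetrically: $N^I_{G'}(v_1)=\{u_1,u_2,p_1\}$ and $N^I_{G'}(v_2)=\{u_2,u_3,p_2\}$, with $u_2$ acting as a pivot shared between $v_1$ and $v_2$.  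Then $K'=(K\setminus\{v\})\cup\{v_1,v_2\}$ is a clique, $I'=I\cup\{p_1,p_2\}$ is independent, $d^I_{G'}(v_1)=d^I_{G'}(v_2)=3$, and every other clique vertex keeps its $G$-value of $d^I$, so $\Delta^I(G')\le3$ as required.

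The equivalence ``$G$ has an HC iff $G'$ has an HP'' reduces to two short arguments.  For necessity, an HC of $G$ enters $v$ through two distinct neighbours $a,b\in N_G(v)$; a short case analysis over unordered pairs from $(K\setminus\{v\})\cup\{u_1,u_2,u_3\}$ shows that some orientation makes $a\in N_{G'}(v_1)$ and $b\in N_{G'}(v_2)$, the only non-routine instances being pairs within $\{u_1,u_2,u_3\}$ where the pivoting role of $u_2$ is essential.  Splicing the assignment yields the HP $(p_1,v_1,a,\ldots,b,v_2,p_2)$ in $G'$.  For sufficiency, the two degree-one pendants force every HP of $G'$ to take the form $(p_1,v_1,x_1,\ldots,x_m,v_2,p_2)$; the interior $(x_1,\ldots,x_m)$ is a Hamiltonian path of $G-v$ (the reduction introduces no new edge between vertices of $V(G)\setminus\{v\}$), and by the definition of the adjacencies both $x_1$ and $x_m$ lie in $N_G(v)$, so closing with the edges $vx_1$ and $x_mv$ of $G$ produces the required HC.

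The main obstacle I expect is the asymmetric redistribution step: the naive ``twin'' gadget in which $v_1$ and $v_2$ both inherit all three of $u_1,u_2,u_3$ would push $d^I(v_i)$ to four (three original $I$-neighbours plus a pendant), violating $\Delta^I\le3$ and leaving the \K15-free split class.  The split $\{u_1,u_2\},\{u_2,u_3\}$ with $u_2$ as pivot is precisely what simultaneously preserves $\Delta^I\le3$ and ensures every possible HC-neighbour pair of $v$ can be realised as an entry/exit pair at $(v_1,v_2)$ in $G'$; verifying this last point is the only bookkeeping that requires care in the argument.
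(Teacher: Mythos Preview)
Your reduction is correct and in fact cleaner than the paper's, but the two take genuinely different routes and it is worth recording the contrast.

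The paper does not build a single target instance.  Instead, for the HC instance $G$ it enumerates the cross-edges $uv$ with $u\in K$, $v\in I$ and, for each such edge, constructs a separate graph $G_j$ by adding one new clique vertex $z$ (adjacent to all of $K$) and two pendants $s,t$, inserting $zt$, $zv$, $us$ and deleting $uv$.  The claim is that $G$ has a Hamiltonian cycle iff \emph{some} $G_j$ has a Hamiltonian path; the ``some'' is needed because the construction commits in advance to which cross-edge the putative HC uses at $v$.  This yields $m$ instances and a Turing-style (still polynomial) reduction.

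Your construction instead fixes one clique vertex $v$, splits it into $v_1,v_2$ with pendants, and redistributes $N^I(v)$ as $\{u_1,u_2\}$ and $\{u_2,u_3\}$ so that every unordered pair of HC-neighbours of $v$ can be realised as an $(v_1,v_2)$ entry/exit pair.  This is a genuine many-one reduction to a single instance, and the pivot $u_2$ is exactly what lets you avoid both the degree blow-up and the paper's enumeration over edges.  Two small remarks: in your $K_{1,5}$-freeness argument one leaf of an induced $K_{1,5}$ centred in $K'$ could in principle sit in $K'$, but then the remaining four leaves still force $d^{I'}\ge 4$, so your conclusion stands; and one should note (or sidestep via the degenerate cases you already flag) that the Hamiltonian path of $G'$ cannot use the edge $v_1v_2$, since both pendants would then be stranded.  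With those cosmetic points your argument is complete.
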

\begin{proof}
We reduce Hamiltonian cycle problem in split graphs with $\Delta^I\le3$ to Hamiltonian path problem in $K_{1,5}$-free split graphs as follows.  For a given instance of split graph $G$ with $\Delta^I\le3$, having partitions $K$ and $I$, we create $m$ instances of $K_{1,5}$-free split graphs $G_j, 1\le j\le m $ with partitions $K_j$ and $I_j$ where $m=|E=\{uv : u\in K, v\in I\}|$.  That is, corresponding to each edge $uv$ in $G$ such that $u\in K, v\in I$, $G_j$ is constructed as follows: 
\begin{center}
$K_j=K\cup \{z\}, I_j=I\cup \{s,t\}, E'=\{zw : w\in K\},$ \\
$E(G_j)=E(G)\cup E'\cup \{zt,zv,us\}\setminus \{uv\}.$  
\end{center}
It is easy to see that the reduction produces linear number of graph instances, each in linear time. 
Each of the instances created has one more clique vertex $z$, with $d^I(z)=2$. Further, $d^I_{G_j}(u)=d^I_G(u)$.
Therefore, all such instances are having $\Delta^I\le3$, and thus $G_j$ is a $K_{1,5}$-free split graph.
Now we claim that there exists a Hamiltonian cycle in $G$ if and only if there exists a Hamiltonian path in at least one of the reduced graphs $G_j$.  
The necessary is trivial.  For sufficiency, consider a Hamiltonian path $P$ in $G_j$, $j\in \{1,\ldots,m\}$.  Clearly, $P$ is a $(s,t)$-Hamiltonian path.
%
Path $P$ contains the edges $zt$ and $us$.  Further, depending on the position of $v$ in the path $P$, we observe the following possibilities.  $P$ is of the form $(t,z,v,\ldots,u,s)$ or $(t,z,z_1,\ldots,z_k,v,z_{k+1},\ldots,u,s)$, where $\{z_1,z_k,z_{k+1}\}\subset K$.  If $P=(t,z,v,\ldots,u,s)$, the $(v,\ldots,u,v)$ is a Hamiltonian cycle in $G$.  If $P=(t,z,z_1,\ldots,z_k,v,z_{k+1},\ldots,u,s)$, then note that $(t,z,v,z_k,\ldots,z_1,z_{k+1},$ $\ldots,u,s)$ is also a Hamiltonian path in $G_j$, and $(v,z_k,\ldots,z_1,z_{k+1},\ldots,u,v)$ is a Hamiltonian cycle in $G$.
%
%
\begin{figure}[h!]
\begin{center}
\includegraphics[scale=1]{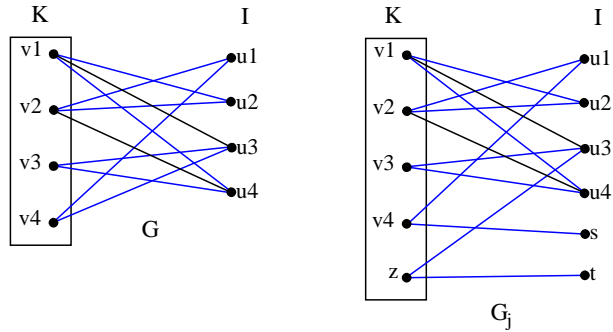}
\vspace*{-5pt}
\caption{An illustration of reduction of Hamiltonian cycle problem to Hamiltonian path problem in split graphs}
\label{eg:HC-reduction} 
\end{center}
\end{figure}
Thus this reduction establishes the fact that Hamiltonian path problem in \K15-free split graph is NP-hard. 
Therefore, we conclude that unless P=NP, there is no polynomial-time algorithm for Hamiltonian path problem in \K15-free split graphs.  This completes a proof of the theorem. $\hfill \qed$
\end{proof}
An illustration for the Hamiltonian path reduction is shown in Figure \ref{eg:HC-reduction}.  Note that the graph $G$ shown in the figure is having a Hamiltonian cycle $(v_1,u_2,v_2,u_1,$ $v_4,u_3,v_3,u_4)$.  Observe that the reduction produces $m$ new graphs.  One of the reduced graph $G_j$ is also shown which is obtained by introducing vertices $\{z,s,t\}$ and edges $\{v_4s,u_3z,zt\}$, and excluding the edge $v_4u_3$.  Here we obtain a Hamiltonian path $(t,z,u_3,v_3,u_4,v_1,u_2,v_2,$ $u_1,v_4,s)$ in $G_j$.

\section{Conclusion and Future work}
We produced a dichotomy on the Hamiltonian path problem in split graphs.  A natural extension is to study longest path problem and minimum-leaf spanning tree problem which are generalizations of Hamiltonian path problem.  Arti Pandey and B.S. Panda have showed in \cite{artipanda} that the total outer-connected dominating set is NP-complete in split graphs.  It is interesting to see that the reduction instances are $K_{1,5}$-free split graphs.  Thus, the complexity of total outer-connected dominating set in $K_{1,4}$-free split graphs remain open.  


\bibliographystyle{splncs1}
\bibliography{references}

\begin{thebibliography}{10}

\bibitem{app_hp1}
G.~Dantzig, J.~Ramser:
\newblock The truck dispatching problem.
\newblock Management Science, \textbf{6}(1),  80 -- 91 (1959)

\bibitem{app_hp2}
F.~A. Tang, R.~D. Galvao:
\newblock The truck dispatching problemvehicle routing problems with
  simultaneous pick-up and delivery service.
\newblock Opsearch, \textbf{39}(1),  19 -- 33 (2002)

\bibitem{s1}
H.~J. Broersma:
\newblock On some intriguing problems in hamiltonian graph theory - a survey.
\newblock Discrete Mathematics, \textbf{251}(1),  47 -- 69 (2002)

\bibitem{s3}
R.~J. Gould:
\newblock Advances on the {H}amiltonian problem - {A} survey.
\newblock Graphs and Combinatorics, \textbf{19}(1),  7 -- 52 (2003)

\bibitem{s2}
Ronald~J. Gould:
\newblock Updating the hamiltonian problem - {A} survey.
\newblock Journal of Graph Theory, \textbf{15}(2),  121 -- 157 (1991)

\bibitem{bertossi}
A.~A. Bertossi, M.~A. Bonuccelli:
\newblock Hamiltonian circuits in interval graph generalizations.
\newblock Information Processing Letters, \textbf{23}(4),  195 -- 200 (1986)

\bibitem{muller}
Haiko Muller:
\newblock Hamiltonian circuits in chordal bipartite graphs.
\newblock Discrete Mathematics, \textbf{156}(1),  291 -- 298 (1996)

\bibitem{tarjanplanar}
M.~R. Garey, D.~S. Johnson, R.~Endre Tarjan:
\newblock Planar hamiltonian circuit problem is {NP}-complete.
\newblock SIAM Journal on Computing, \textbf{5}(4),  704 -- 714 (1976)

\bibitem{Gordon}
V.~S. Gordon, Y.~L. Orlovich, F.~Werner:
\newblock Hamiltonian properties of triangular grid graphs.
\newblock Discrete Mathematics, \textbf{308}(24),  6166 -- 6188 (2008)

\bibitem{keil}
J.~M. Keil:
\newblock Finding hamiltonian circuits in interval graphs.
\newblock Information Processing Letters, \textbf{20}(4),  201 -- 206 (1985)

\bibitem{hungint}
R.~W. Hung, M.~S. Chang:
\newblock Linear-time certifying algorithms for the path cover and hamiltonian
  cycle problems on interval graphs.
\newblock Applied Mathematics Letters, \textbf{24}(5),  648 -- 652 (2011)

\bibitem{hungcir}
R.~W. Hung, M.~S. Chang, C.~H. Laio:
\newblock The hamiltonian cycle problem on circular-arc graphs.
\newblock In: Proc. of the Intl. Conf. of Engg. and Computer Scientists (IMECS,
  Hong Kong), pp.  18 -- 20 (2009)

\bibitem{shih}
W.~K. Shih, T.~C. Chern, W.~L. Hsu:
\newblock An {O}($n^{2}$log n) algorithm for the hamiltonian cycle problem on
  circular-arc graphs.
\newblock SIAM Journal on Computing, \textbf{21}(6),  1026 -- 1046 (1992)

\bibitem{panda}
B.~S. Panda, S.~K. Das:
\newblock A linear time recognition algorithm for proper interval graphs.
\newblock Information Processing Letters, \textbf{87}(3),  153 -- 161 (2003)

\bibitem{ibarra}
Louis Ibarra:
\newblock A simple algorithm to find hamiltonian cycles in proper interval
  graphs.
\newblock Information Processing Letters, \textbf{109}(18),  1105 -- 1108
  (2009)

\bibitem{hungdis}
R.~W. Hung, M.~S. Chang:
\newblock Linear-time algorithms for the hamiltonian problems on
  distance-hereditary graphs.
\newblock Theoretical Computer Science, \textbf{341}(1),  411 -- 440 (2005)

\bibitem{HPcocomparability}
J.~S. Deogun, G.~Steiner:
\newblock Polynomial algorithms for hamiltonian cycle in cocomparability
  graphs.
\newblock SIAM Journal on Computing, \textbf{23}(3),  520 -- 552 (1994)

\bibitem{gutin}
G.~Gutin:
\newblock Finding a longest path in a complete multipartite digraph.
\newblock SIAM Journal on Discrete Mathematics, \textbf{6}(2),  270 -- 273
  (1993)

\bibitem{caldam2017}
P.~Renjith, N.~Sadagopan:
\newblock Hamiltonicity in split graphs - a dichotomy.
\newblock In: Algorithms and Discrete Applied Mathematics: Third International
  Conference, CALDAM 2017, India, February 16-18, 2017, pp.  320 -- 331 (2017)

\bibitem{dicho1}
M.~Illuri, P.~Renjith, N.~Sadagopan:
\newblock Complexity of steiner tree in split graphs - dichotomy results.
\newblock In: Algorithms and Discrete Applied Mathematics - Second
  International Conference, {CALDAM} 2016, pp.  308 -- 325 (2016)

\bibitem{west}
D.~B. West:
\newblock Introduction to graph theory.
\newblock Prentice Hall of India (2003)

\bibitem{akiyama}
T.~Akiyama, T.~Nishizeki, N.~Saito:
\newblock {NP}-completeness of the {H}amiltonian cycle problem for bipartite
  graphs.
\newblock Journal of Information Processing, \textbf{3}(2),  73 -- 76 (1980)

\bibitem{artipanda}
B.~S. Panda, A.~Pandey:
\newblock Complexity of total outer-connected domination problem in graphs.
\newblock Discrete Applied Mathematics, \textbf{199},  110 -- 122 (2016)

\end{thebibliography}

\end{document}